\newcommand{\be}{\begin{equation*}}
\newcommand{\ee}{\end{equation*}}
\newcommand{\ben}[1]{\begin{equation}\label{#1}}
\newcommand{\een}{\end{equation}}
\newcommand{\bea}{\begin{eqnarray}}
\newcommand{\eea}{\end{eqnarray}}
\newcommand{\bean}{\begin{eqnarray*}}
\newcommand{\eean}{\end{eqnarray*}}
\renewcommand{\O}[1]{\mathcal{O}\left( #1 \right)}
\renewcommand{\H}{\underline{H}}
\newcommand{\Hh}{\mathscr{H}}
\newcommand{\Ll}{\mathscr{L}}
\renewcommand{\L}{\underline{L}}
\newcommand{\df}{\tilde{\nabla}}
\newcommand{\dfd}{\tilde{\nabla}^{\dagger}}
\newcommand{\scri}{\mathcal{I}}
\newcommand{\abs}[1]{\left|#1 \right|} 
\newcommand{\norm}[2]{\left|\left |#1 \right| \right |_{#2}} 
\newcommand{\ip}[3]{\left(#1,#2 \right )_{#3}} 
\newcommand{\Ub}{\overline{U}}
\newcommand{\tn}{\tilde{\nabla}}
\newcommand{\eq}[1]{(\ref{#1})}
\newtheorem{Theorem}{Theorem}
\newtheorem*{conj*}{Conjecture}
\newtheorem{Lemma}{Lemma}
\newtheorem{Corollary}[Theorem]{Corollary}
\newtheorem{Definition}{Definition}
\newtheorem{Proposition}{Proposition}
\numberwithin{Theorem}{section}
\numberwithin{Lemma}{section}
\title[Wave equation on AdS Black Holes]{Boundedness and Growth for the massive wave equation on asymptotically anti-de Sitter black holes}
\author{Gustav H. Holzegel}
\thanks{\texttt{g.holzegel@imperial.ac.uk} \\
\phantom{1   }\hspace{.05cm} Department of Mathematics, South Kensington Campus, Imperial College London, SW7 2AZ, UK \& \\
\phantom{1   }\hspace{.05cm} Department of Mathematics, Fine Hall, Washington Road, Princeton NJ 08544, USA}
\author{Claude M. Warnick}
\thanks{\vspace{.3cm} \texttt{warnick@ualberta.ca}\\
\phantom{1   }\hspace{.05cm} Department of Physics, 4-181 CCIS, University of Alberta, Edmonton AB T6G 2E, Canada}
\begin{document}
\begin{abstract}
We study the global dynamics of free massive scalar fields on general, globally stationary, asymptotically AdS black hole backgrounds with Dirichlet-, Neumann- or Robin- boundary conditions imposed on $\psi$ at infinity. This class includes the regular Kerr-AdS black holes satisfying the Hawking Reall bound $r_+^2 > |a|l$. We establish a suitable criterion for linear stability (in the sense of uniform boundedness) of $\psi$ and demonstrate how the issue of stability can depend on the boundary condition prescribed. In particular, in the slowly rotating Kerr-AdS case, we obtain the existence of linear scalar hair (i.e.\ non-trivial stationary solutions) for suitably chosen Robin boundary conditions. \\\phantom{1}\\ \phantom{1} \hfill ALBERTA THY 13-12
\end{abstract}
\maketitle
\vspace{.7cm}

\tableofcontents
\newpage

\section{Introduction}
The classical stability properties of asymptotically Anti de Sitter (aAdS) spacetimes have attracted recent attention in the general relativity community. To a large extent, this interest derives from the range of potential instability mechanisms which can be inferred -- so far only at the heuristic and numerical level  \cite{Anderson, DafHol, Bizon, Dias} -- from the geometry of these spacetimes, in particular from their asymptotic structure. These phenomena are entirely absent in the asymptotically flat case and have culminated in the conjecture that all asymptotically AdS spacetimes (including Kerr-AdS and pure AdS) may be unstable \cite{HolSmul}. See \cite{Dias:2012fu} for some recent work where this conjecture is being investigated.
%On the other end, it is well-known that the negative cosmological constant circumvents the no-hair theorems which allows for solitonic solutions in even the simplest models like that of the spherically symmetric coupled scalar field. We emphasize that most of the work also in this context is numerical and rigorous results are still to be proven. See however ...

From a classical perspective, the crucial feature of aAdS spacetimes is their failure of global hyperbolicity: Despite the fact that null-infinity is ``infinitely far away" in the sense that the affine length of null geodesics approaching infinity is indeed infinite, the causal structure of the spacetime also has the following property: Given a spacelike slice $\Sigma$, there exist points, $p$,  in $I^+\left(\Sigma\right)$ and (complete) past directed causal curves from $p$, which do not intersect $\Sigma$. This suggests that hyperbolic equations on such manifolds will, in general, require boundary conditions imposed at infinity to be well-posed.\footnote{See \cite{Friedrich} for an existence theorem for the full Einstein vacuum equation in this context. In particular, the above instability conjectures have to be supplemented by boundary conditions. In any case, the instability is believed to be present for all boundary conditions which ensure constant (finite) ADM mass at infinity.}

While a mathematical understanding of the potential non-linear instability mechanisms on aAdS spacetimes seems still out of reach, many results have been obtained for the linear massive wave equation
\begin{align} \label{vas}
\Box_g \psi + \frac{\alpha}{l^2} \psi = 0 \, ,
\end{align}
for $g$ an aAdS spacetime and $\alpha < \frac{9}{4}$ the Breitenlohner-Freedman bound\footnote{Our signature convention will be $(-+++)$ throughout.} imposed on the mass \cite{BF}.

In \cite{Holbnd}, the first author began to study the local and global properties of a particular class of solutions of (\ref{vas}), namely those satisfying ``Dirichlet"-conditions (or rather, the strongest possible radial decay for $\psi$). A well-posedness theorem was established for this class in \cite{Holwp, vasy}. Secondly, the boundedness of solutions on Schwarzschild-AdS and sufficiently slowly rotating Kerr-AdS backgrounds was proven \cite{Holbnd}. Later, in collaboration with J.~Smulevici, it was shown that the global solutions under consideration in fact decay logarithmically in time \cite{HolSmul}. The logarithmic decay rate is believed to be sharp and intimately connected to the geometric (trapping) properties of asymptotically AdS spacetimes. In fact a logarithmic rate has recently shown to be sharp \cite{HolSmul2}.

In \cite{Warnick:2012fi}, the second author established a far more general well-posedness theorem, which allowed for a wider range of boundary conditions (which can be imposed for $\frac{5}{4}<\alpha<\frac{9}{4}$) and in addition required less regularity of the solutions. The difficulty with handling the new boundary conditions (corresponding to less radial decay for $\psi$) arises from the fact that the usual $\partial_t$-energy fluxes for $\psi$ are infinite. This issue was successfully resolved by a renormalization scheme in \cite{Warnick:2012fi}, which adopts ideas of Breitenlohner and Freedman \cite{BF} but in fact works for any asymptotically AdS spacetime. The insight of \cite{Warnick:2012fi} is that if the equations and energies are expressed in terms of so-called ``twisted" or renormalized derivatives
\begin{equation} \label{twist}
\tilde{\nabla}_\mu \psi := f  \nabla_\mu \left( f^{-1} \psi\right),
\end{equation}
for an appropriate  ``twisting" function $f$, then the divergences at infinity disappear. Moreover, the energy density is positive for appropriate $f$, at least near infinity, which suffices for a well-posedness statement near the AdS boundary. For completeness, we state here a version of these results. We first define

\begin{Definition}
Let $(\mathcal{M}, g)$ be a time oriented Lorentzian manifold with an asymptotically AdS end, with asymptotic radial coordinate $r$ which we assume extends as a smooth positive function throughout $\mathcal{M}$ (see \S\ref{sec:genstat} for a sufficiently general definition of an aAdS end). Let $\Sigma$ be a spacelike surface which extends to the conformal infinity of the asymptotically AdS end, $\scri$. Let $n_\Sigma$ be the future directed unit normal of $\Sigma$ and define
\be
\hat{n}_\Sigma = r n_\Sigma,
\ee  
to be the rescaled normal. Let $f$ be a smooth positive function on $\mathcal{M}$ such that $f  r^{\frac{3}{2} - \kappa}= 1+\O{r^{-2}}$ as $r \to \infty$ for some $\kappa>0$, which will be related to $\alpha$ by $\alpha = 9/4-\kappa^2$. We denote by $\mathcal{D}^+(\Sigma)$ the region $D^+(\Sigma \cup (I^+(\Sigma) \cap \scri))$ which is the future Cauchy development of $\Sigma$ together with the portion of $\scri$ lying to the future of $\Sigma$.

We define the norms
\bean
\norm{\phi}{\L^2(\Sigma)}^2 &=& \int_\Sigma  \frac{\phi^2}{r} dS_{\Sigma},  \\
\norm{\phi}{\H^1(\Sigma, \kappa)}^2 &=& \int_\Sigma \left( |\tilde{\nabla} \phi |^2 + \frac{\phi^2}{r^2}\right) r dS_{\Sigma} \, ,
\eean
where the twisted derivative is defined as in (\ref{twist}) and we use the induced metric on $\Sigma$ to define $|\tilde{\nabla} \phi |^2$ and $dS_\Sigma$. We denote by $\H^1_0(\Sigma, \kappa)$ the completion in the $\H^1(\Sigma, \kappa)$ norm of the space of smooth functions supported away from $\scri$.

We furthermore say that a $C^1$ function $\phi$  on $\mathcal{M}$ obeys Dirichlet, Neumann or Robin boundary conditions if the following hold
\begin{enumerate}[i)]
\item Dirichlet:
\be
r^{\frac{3}{2} - \kappa} \phi \to 0, \quad \textrm { as } r \to \infty.
\ee
\item Neumann:
\be
r^{\frac{5}{2} + \kappa}\, \tilde{\nabla}_r \phi \to 0, \quad \textrm { as } r \to \infty.
\ee
\item Robin\footnote{While clearly the Robin condition includes the Neumann condition as a sub-case, it is convenient to follow the classical path of distinguishing the two.}:
\be
r^{\frac{5}{2} + \kappa}\, \tilde{\nabla}_r \phi + \beta r^{\frac{3}{2} - \kappa} \phi \to 0, \quad \textrm { as } r \to \infty,
\ee
where $\beta \in C^\infty(\scri)$.
\end{enumerate}
\end{Definition}

We can now state the well-posedness result. For the full details, including the precise weak formulations and the higher regularity and asymptotic conditions on the data, see \cite{Warnick:2012fi}.

\begin{Theorem}[Well Posedness]\label{WPThm}
\begin{enumerate}[1)]
\item Let $\uppsi\in \H^1_0(\Sigma, \kappa)$, $\uppsi' \in \L^2(\Sigma)$. Then there exists a unique $\psi$ such that $\psi|_{\Sigma} = \uppsi$, $\hat n_\Sigma \psi|_\Sigma = \uppsi'$ which, in a weak sense, solves
\be
\Box_g \psi + \frac{1}{l^2}\left(\frac{9}{4} - \kappa^2 \right) \psi = 0 \, ,
\ee
in $\mathcal{D}^+(\Sigma)$ with Dirichlet boundary conditions on $\scri$. If $\mathcal{S}$ is any spacelike surface in $\mathcal{D}^+(\Sigma)$ then $\psi|_\mathcal{S} \in \H^1_0(\mathcal{S}, \kappa)$, $\hat n_\mathcal{S}\psi|_\mathcal{S} \in \L^2(\mathcal{S})$.
\item Let $\uppsi\in \H^1(\Sigma, \kappa)$, $\uppsi' \in \L^2(\Sigma)$ and assume $0<\kappa<1$. Then there exists a unique $\psi$ such that $\psi|_{\Sigma} = \uppsi$, $\hat n_\Sigma \psi|_\Sigma = \uppsi'$ which, in a weak sense, solves
\be
\Box_g \psi + \frac{1}{l^2}\left(\frac{9}{4} - \kappa^2 \right) \psi = 0 \, ,
\ee
in $\mathcal{D}^+(\Sigma)$ with Neumann or Robin boundary conditions (for given $\beta$) on $\scri$. If $\mathcal{S}$ is any spacelike surface in $\mathcal{D}^+(\Sigma)$ then $\psi|_\mathcal{S} \in \H^1(\mathcal{S}, \kappa)$, $\hat n_\mathcal{S}\psi|_\mathcal{S} \in \L^2(\mathcal{S})$.
\end{enumerate}
If the initial conditions satisfy stronger regularity and asymptotic conditions\footnote{these conditions are of the form of those appearing in Theorem \ref{schbdn}}, then in fact $\psi|_\mathcal{S} \in H^k_{\textrm{loc.}}(\mathcal{S})$, $\hat n_\mathcal{S}\psi|_\mathcal{S} \in   H^{k-1}_{\textrm{loc.}}(\mathcal{S})$ for any integer  $k\geq 2$ and we obtain an asymptotic expansion
\be
\psi = \frac{1}{r^{\frac{3}{2}-\kappa}} \left[ \psi_0^-  +\O{r^{-1-\kappa}} \right ] + \frac{1}{r^{\frac{3}{2}+\kappa}} \left[ \psi_1^+ +  \O{r^{\kappa-1}}\right ].
\ee
The functions $\psi_i^\pm\in H^{k-1-i}(\scri)$ satisfy:
\be
\begin{array}{ccl}
\psi_0^-  =0&\qquad& \textrm{if $\psi$ satisfies Dirichlet boundary conditions,} \\ 
 \psi_1^+ =0&\qquad& \textrm{if $\psi$ satisfies Neumann boundary conditions,} \\
 2\kappa \psi_1^+ -\beta \psi_0^-=0  &\qquad& \textrm{if $\psi$ satisfies Robin boundary conditions.}
\end{array}
\ee
Thus for sufficiently regular initial data we obtain a classical solution to the initial boundary value problem.
\end{Theorem}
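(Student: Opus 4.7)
My plan is to implement the renormalisation strategy of \cite{Warnick:2012fi}. The starting point is to rewrite the equation $\Box_g \psi + \frac{1}{l^2}(\frac{9}{4}-\kappa^2)\psi = 0$ in terms of the twisted derivatives \eqref{twist}: a direct computation, exploiting the asymptotics $f r^{\frac{3}{2}-\kappa}=1+\O{r^{-2}}$, shows that the equation takes the schematic form
\be
f^{-1}\tilde{\nabla}^\mu\big(f^{2}\,\tilde{\nabla}_\mu(f^{-1}\psi)\big) + W\psi = 0,
\ee
where the potential $W$ is bounded near $\scri$ --- the dangerous contribution of the mass term has been absorbed into the twisted principal part. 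The pay-off is that, whereas the usual $\partial_t$-energy flux of a solution with $\psi_0^-\neq 0$ diverges as $r\to\infty$, the twisted densities $|\tilde\nabla\psi|^2\,r$ and $\psi^2/r$ are integrable by construction and the associated quadratic form is coercive near $\scri$.

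With this reformulation in hand, I would construct weak solutions by a Galerkin scheme. Pick an orthonormal basis $\{e_n\}$ of the relevant energy space --- $\H^1_0(\Sigma,\kappa)$ in case (1) and $\H^1(\Sigma,\kappa)$ in case (2) --- consisting of smooth functions with the appropriate asymptotic behaviour, solve the ODE system obtained by projecting onto $\textrm{span}(e_1,\ldots,e_N)$, and derive $N$-uniform estimates by contracting with a smooth globally timelike vector field $T$ agreeing with $\hat{n}_\Sigma$ on $\Sigma$. Integration by parts using twisted Leibniz rules controls the bulk terms by the twisted energy (via a Hardy-type inequality for the $\psi^2/r^2$ piece), while the boundary contributions at $\scri$ either vanish identically (Dirichlet), vanish by the decay of $\tilde\nabla_r\psi$ (Neumann), or reduce through the Robin condition to a finite pairing of the form $\int_\scri \beta (\psi_0^-)^2$. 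A Gronwall argument then yields a uniform bound
\be
E_{\mathcal{S}}[\psi_N] \leq C(\mathcal{S})\left(\norm{\uppsi}{\H^1(\Sigma,\kappa)}^2+\norm{\uppsi'}{\L^2(\Sigma)}^2\right),
\ee
and passage to the weak limit produces a solution in the advertised energy class; the same estimate applied to the difference of two solutions proves uniqueness.

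Higher regularity is recovered by a standard bootstrap: commuting with the stationary Killing field $\partial_t$ produces identical equations for $\partial_t^k \psi$, whose initial data lie in the correct spaces by the assumed compatibility conditions, so the base construction places $\partial_t^k\psi$ in the energy class. Elliptic estimates on the slices $\mathcal{S}$ then trade time for spatial derivatives and give local $H^k$-regularity. The asymptotic expansion is obtained from an indicial-roots analysis of the twisted equation near $\scri$: the two indicial roots $-(\tfrac{3}{2}-\kappa)$ and $-(\tfrac{3}{2}+\kappa)$ produce precisely the two branches $\psi_0^-$ and $\psi_1^+$, and the imposed boundary condition then either kills one branch (Dirichlet/Neumann) or enforces the linear relation $2\kappa \psi_1^+ = \beta\psi_0^-$ (Robin).

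In my view the main technical obstacle is the interplay between the asymptotic and interior regimes. The twisted quadratic form is coercive only in a neighbourhood of $\scri$, so it must be patched with a standard bulk energy in the interior, and the matching requires a careful Hardy-type inequality on $\H^1(\Sigma,\kappa)$. Moreover, in the Neumann/Robin setting the traces $\psi_0^-$ and $\psi_1^+$ must be realised as distributions on $\scri$ lying in the appropriate dual Sobolev spaces so that the Robin boundary pairing converges; this is precisely where the restriction $0<\kappa<1$ in case (2) becomes indispensable, since it places both asymptotic branches in the regime where the required traces are well defined.
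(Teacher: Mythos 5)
This theorem is not proved in the paper at all: it is quoted from \cite{Warnick:2012fi}, whose argument (twisted derivatives to renormalise the divergent energies, coercivity of the twisted quadratic form near $\scri$ patched with interior estimates, a Galerkin construction with energy estimates from a timelike multiplier, commutation with $\partial_t$ plus elliptic estimates for higher regularity, and an indicial analysis at $\scri$ for the asymptotic expansion) is exactly the strategy you outline, so your proposal follows essentially the same route as the actual proof. The only slight imprecision is your explanation of the restriction $0<\kappa<1$: its primary role is that the Neumann branch $r^{-\frac{3}{2}+\kappa}$ lies in $\H^1(\Sigma,\kappa)$ (i.e.\ the renormalised energy is finite for the slower fall-off) only in that range, with the well-definedness of the boundary traces being a closely related consequence.
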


{ \bf Remark:} Note that while we say that ``$\psi$ satisfies Dirichlet conditions if $\psi r^{\frac{3}{2}-\kappa} \rightarrow 0$", we eventually establish stronger decay for $\psi$ ($\psi \sim r^{-\frac{3}{2} - \kappa}$) than what is implied by this condition. The point here is that the above Dirichlet condition suffices to eliminate the Neumann-branch of the solution. Conversely, the Neumann condition eliminates the Dirichlet branch. 
\newline 

Given a well posedness theorem of this generality, which in particular holds for the black hole spacetimes which we consider in this paper, we can enquire about the global behaviour of such solutions on black hole backgrounds and address the important question of how the stability properties depend on the boundary conditions imposed at infinity. This is the content of the present paper, which studies the dynamics of (\ref{vas}) on general asymptotically AdS backgrounds, which are globally stationary and contain a non-degenerate Killing horizon. Note that this class includes the regular Kerr-AdS black hole spacetimes (i.e.~$|a|<l$) which satisfy the Hawking-Reall bound ($r_{+}^2> |a|l$) on their parameters.

In turning from the local properties near the AdS boundary (exploited in the well posedness statement above) to global properties, one immediately faces the following difficulty: While it is relatively straightforward to see how the renormalization scheme removes the divergences at infinity, it is not at all clear whether this still yields a \emph{globally} non-negative energy on spacelike slices.\footnote{As mentioned above, positivity near infinity is immediate from the asymptotics of the twisting function.} In other words, it is not clear at all whether a global twisting function $f$ to achieve positivity exists\footnote{To retain the conservation property of the renormalized energy, that twisting function is required to be invariant under the stationary Killing field of the background.}, and if it exists, how it can be found. 

In this paper, we establish a simple criterion for (in)stability by relating the issue to the existence of a negative eigenvalue of a degenerate elliptic operator $L$. This operator emerges as follows. Pick a slice $\Sigma_0$ intersecting the future event horizon $\mathcal{H}^+$ and foliate the black hole exterior to the future of $\Sigma_0$ by slices $\Sigma_t$ arising as the push-forward of $\Sigma_0$ by the timelike isometry.  With $T$ denoting the stationary Killing vector (which is globally $\partial_t$ with respect to this slicing and timelike away from the horizon), we write the wave equation as
\begin{align} \label{elform}
L \psi  =  -TT\psi + B T\psi \, ,
\end{align}
where $B$ denotes a purely spatial operator that contributes only a boundary term in the following sense: If (\ref{elform}) is multiplied by $T\psi$ and integrated over a spacetime slab to produce the energy estimate, the term $BT\psi \cdot T\psi \sim B \left(\left(T\psi\right)^2 \right)$ will not contribute on $\Sigma_t$ but only on $\mathcal{H}^+$, where it has a good sign due to the non-vanishing positive surface gravity, and infinity, where it vanishes for Dirichlet or Neumann boundary conditions. The operator $L$ is an elliptic operator on the slices $\Sigma_t$ which degenerates in a precise way at the event horizon and at infinity. Rewriting $L$ in terms of twisted derivatives (where we twist both at the horizon and and infinity) it is possible to translate the analysis of $L$ to a regular elliptic Dirichlet/ Neumann/ Robin boundary value problem, with the only difference that the usual derivatives have been replaced by twisted derivatives. We finally establish that one may recover all of the standard elliptic spectral theory, formulated now with respect to the twisted derivatives and Sobolev spaces introduced in \cite{Warnick:2012fi}. In particular, $L$ can be shown to have a discrete spectrum bounded from below for all admissible boundary conditions. Remarkably, this result does not require a careful \emph{global} twisting but only the correct asymptotics near the horizon and infinity. Once the result is established, however, we will use the existence of a lowest eigenvalue and its associated eigenfunction to identify the latter as the ``optimal" twisting function.

Now, if $L$ has a negative eigenvalue, one can construct a growing solution for $\psi$. If the spectrum of $L$ is bounded away from zero, $\psi$ remains uniformly bounded, as twisting with the first eigenfunction of $L$ produces a non-negative energy.\footnote{We ignore here the degeneration of this energy at the horizon, which can be fixed using the redshift. These techniques (including commutation to estimate higher derivatives) have become standard \cite{Mihalisnotes} and will not be discussed in the paper. Note also that we expect bounded solutions to decay at least logarithmically following the arguments of \cite{HolSmul}.} Finally, if there is a zero mode for $L$ then general solutions cannot be expected to decay if the zero mode is present in the data. We can show that in this case solutions can grow at most like $t$. In summary, there is a neat and easy to compute -- at least numerically -- criterion for linear stability in the sense of boundedness of solutions. It may appear that this criterion depends on the choice of spacelike slicing, however one can show that the \emph{sign} of the lowest eigenvalue of $L$ is independent of the spacelike surface on which $L$ is constructed (see Corollary \ref{cor42}).

In practical applications, computing exactly the lowest eigenvalue may only be possible numerically. However, if one is not on the verge of an instability, it often suffices to twist with a function which is ``sufficiently similar" to the first eigenfunction in order to show the existence of a non-negative energy. We will see examples of this below.

For pedagogical reasons, we will begin the paper in Section \ref{sec:SchwSch} with a detailed treatment of the Schwarzschild-AdS case, for which almost everything can be carried out explicitly, including writing down a global twisting function which ensures stability. We have

\begin{Theorem}
Smooth\footnote{Of course, the actual statements are proven in some (twisted) Sobolev space, see Theorem \ref{schbdn}.} solutions $\psi$ to (\ref{vas}) on Schwarzschild-AdS remain uniformly bounded on the black hole exterior both in the case of Dirichlet- and Neumann boundary conditions imposed on $\psi$ at infinity.
\end{Theorem}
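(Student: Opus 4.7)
The plan is to construct an explicit global twisting function $f$ on the Schwarzschild-AdS exterior that renders the renormalized energy coercive, and then to use the stationary Killing field $T = \partial_t$ as a multiplier. I would foliate the black hole exterior by a family $\{\Sigma_t\}$ of spacelike hypersurfaces obtained from an initial slice $\Sigma_0$ under the flow of $T$, arranged so as to cross the future horizon $\mathcal{H}^+$ transversally and extend smoothly to $\scri$. Rewriting the wave equation in the schematic form (\ref{elform}), $L\psi = -TT\psi + BT\psi$ with $L$ elliptic on $\Sigma_t$ and degenerating at $\mathcal{H}^+$ and at $\scri$, the boundedness problem reduces to showing that the twisted quadratic form associated with $L$ is non-negative on $\Sigma_t$ and that the boundary contributions at $\mathcal{H}^+$ and $\scri$ have the correct signs.

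For the construction of $f$, I would exploit spherical symmetry. With the standard metric $-\mu\, dt^2 + \mu^{-1} dr^2 + r^2 d\Omega^2$, $\mu = 1 - 2M/r + r^2/l^2$, I seek $f = f(r)$ positive and smooth on $[r_+, \infty)$ with the asymptotics $f\, r^{3/2-\kappa} \to 1$ (Dirichlet) or $f\, r^{3/2+\kappa} \to 1$ (Neumann), matching the expected leading decay of $\psi$. The energy generated by twisting then contains a zeroth-order potential
\[
V_f = -\frac{1}{f}\left(\Box_g f + \frac{1}{l^2}\Bigl(\frac{9}{4}-\kappa^2\Bigr) f \right),
\]
and the key step is to exhibit $V_f \geq 0$ on $[r_+, \infty)$. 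A suitable $f$ (typically a mass-dependent deformation of $r^{-3/2 \pm \kappa}$, rather than the naive power itself) reduces $V_f$ to a manifestly non-negative rational function of $r$; exhibiting such an $f$ is the main computational content of the theorem.

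Given such an $f$, the energy estimate follows by multiplying the wave equation by $T\psi$ and integrating over the slab $\bigcup_{s\in[0,t]}\Sigma_s$. The $TT$-term yields the kinetic contribution $(T\psi)^2$; the $B$-term is a spatial divergence producing boundary integrals only on $\mathcal{H}^+$ and at $\scri$. The horizon contribution has a favourable sign by positivity of the surface gravity, while the contribution at $\scri$ vanishes for Dirichlet or Neumann conditions --- this is precisely why the theorem excludes Robin. Combined with pointwise positivity of $V_f$, one obtains a conserved coercive energy
\[
E[\psi](t) = \int_{\Sigma_t} \left[(T\psi)^2 + f^2\bigl|\tilde\nabla(\psi/f)\bigr|^2 + V_f\, \psi^2\right] dS_{\Sigma_t},
\]
bounding $\psi$ in $\H^1(\Sigma_t, \kappa)$ uniformly in $t$, modulo a mild degeneration at the horizon that one absorbs via the Dafermos-Rodnianski redshift vector field. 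Pointwise boundedness then follows by commuting with $T$ and applying a weighted Sobolev embedding in the twisted spaces $\H^k(\Sigma_t, \kappa)$, as in \cite{Holbnd, Warnick:2012fi}.

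The main obstacle is the pointwise verification of $V_f \geq 0$ on the entire interval $[r_+, \infty)$ with a single $f$ meeting the correct boundary asymptotics. Near infinity positivity is automatic from the cancellation of the leading-order terms in the asymptotic condition $f\, r^{3/2 \mp \kappa} = 1 + \O{r^{-2}}$ against the Breitenlohner-Freedman mass term; near the horizon, smoothness and positivity of $f$ together with the non-degeneracy of $\mu$ control the residual terms. The intermediate region requires a direct algebraic check, which is tractable in Schwarzschild-AdS because spherical symmetry reduces the problem to a scalar inequality in $r$. If the naive power-law $f$ does not suffice, one can instead take $f$ to be the positive ground-state eigenfunction of $L$, whose existence and regularity are guaranteed by the elliptic spectral theory developed later in the paper; this fallback also foreshadows the general criterion used in the Kerr-AdS discussion.
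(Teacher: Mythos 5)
Your overall strategy coincides with the paper's: an explicit radial twisting function $f$, the conserved $T$-energy with twisted potential $V_f\geq 0$, a good-signed horizon flux, vanishing flux at $\scri$ for Dirichlet/Neumann, the redshift to remove the degeneration at $\mathcal{H}^+$, and commutation plus Sobolev embedding for the pointwise bound. The concrete problem is your prescription of the asymptotics of $f$ in the Neumann case, which is wrong precisely where the choice matters. Under Neumann conditions the solution decays like $r^{-3/2+\kappa}$ (it is the Dirichlet branch that decays like $r^{-3/2-\kappa}$), so ``matching the leading decay of $\psi$'' would already dictate the opposite assignment to the one you wrote. More importantly, with your Neumann choice $f\sim r^{-3/2-\kappa}$ one has $\psi/f\sim r^{2\kappa}$ on Neumann data, hence $\tilde{\nabla}_r\psi\sim r^{-5/2+\kappa}$ and $g^{rr}(\tilde{\nabla}_r\psi)^2\,r^2\,dr\sim r^{-1+2\kappa}\,dr$, which is not integrable: the renormalized energy is infinite on exactly the class of solutions the theorem is meant to cover. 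The correct (and essentially forced) choice is the slower-decaying twist $f\,r^{3/2-\kappa}=1+\mathcal{O}(r^{-2})$ for \emph{both} boundary conditions; the paper takes the single function $f(r)=r^{-1}(1+r^2/l^2)^{-1/4+\kappa/2}$, for which $V$ is manifestly non-negative, and this one energy handles Dirichlet and Neumann simultaneously (the surface term at $\scri$ vanishes in both cases because either $\psi_0^-$ or $\psi_1^+$ vanishes there).

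Two smaller cautions. First, positivity of $V_f$ near infinity is not ``automatic'' from $f\,r^{3/2-\kappa}=1+\mathcal{O}(r^{-2})$: the leading order of $\Box_g f/f+\alpha/l^2$ cancels, but the sign of the surviving $\mathcal{O}(r^{-2})$ term depends on the subleading correction to $f$, so the pointwise check genuinely includes the asymptotic region and not only the intermediate one. Second, the fallback of twisting by the ground-state eigenfunction $u_1$ of $L$ does not by itself yield a coercive energy: $u_1$ always exists, but the potential it produces in the twisted energy is $\omega_1$ times a positive weight, which is non-negative only if $\omega_1\geq 0$ --- and establishing $\omega_1>0$ for Neumann conditions on Schwarzschild-AdS is precisely what the explicit computation with $f$ accomplishes. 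As stated, the fallback restates the problem rather than solving it.
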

In particular, the theorem generalizes the result of \cite{Holbnd} to the boundary conditions of \cite{Warnick:2012fi}. Alternatively, one can consider Robin boundary conditions at infinity. For such boundary conditions we show that both stability and instability can occur depending on the choice of Robin function. In other words, Robin boundary conditions (while also ``reflecting") can make the black hole unstable. For a critical $\beta$, there must exist a zero mode, which is a manifestation of ``linear hair" in the language of the high energy physics community.\footnote{Here such solutions are typically found numerically by integrating out from the horizon. The critical $\beta$ is picked up automatically. See for example \cite{Hertog:2004ns}, where the non-linear problem is considered.}

After treating the Schwarzschild case in great detail, we proceed by introducing the general theory outlined above for arbitrary stationary black hole backgrounds in Section \ref{sec:genstat}. In the process we call on a Rellich-Kondrachov like compactness result whose proof we defer to Section 6. In  Section \ref{sec:KerrAdS} we illustrate the general theory by treating the case of Kerr-AdS black holes satisfying the Hawking-Reall bound and the regularity bound $|a|<l$. We obtain

\begin{Theorem} \label{theo:2}
The previous theorem applies also to Kerr-AdS backgrounds satisfying the Hawking Reall bound $r_{+}^2 > |a|l$ and for $\frac{|a|}{l}<\epsilon$ for some $\epsilon>0$. In the Dirichlet case we may take $\epsilon=1$.
\end{Theorem}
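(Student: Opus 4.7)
The approach is to apply the general stationary framework of Section \ref{sec:genstat} with respect to the horizon Killing field $K = \partial_t + \Omega_H \partial_\phi$, which by the Hawking-Reall bound $r_+^2 > |a|l$ is causal everywhere in the exterior and strictly timelike away from $\mathcal{H}^+$. Flowing an initial spacelike slice $\Sigma_0$ meeting $\mathcal{H}^+$ transversally along the integral curves of $K$ produces the foliation $\{\Sigma_t\}$, and the wave equation is recast in the form (\ref{elform}) with the spatial elliptic operator $L$ acting on $\Sigma_t$. By the reduction of Section \ref{sec:genstat}, boundedness of $\psi$ follows once one establishes that the lowest eigenvalue $\lambda_0$ of $L$ with Dirichlet, Neumann, or Robin data (respectively) is strictly positive; a zero mode would allow at most linear growth, while a negative eigenvalue would produce exponentially growing solutions.

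For the Neumann and Robin cases (small $|a|/l$) I would argue by perturbation from $a=0$. The Kerr-AdS exterior is diffeomorphic to $\mathbb{R}_t\times[r_+,\infty)\times S^2$ for all admissible parameters, and on this fixed manifold the family $L_a$ has coefficients depending smoothly on $a$, reducing to the Schwarzschild-AdS operator when $a=0$. Using the Rellich-Kondrachov-type compactness deferred to Section 6, one obtains that the twisted Sobolev spaces $\H^1(\Sigma,\kappa)$ are stable in $a$ and that the resolvents of $L_a$ converge in norm, so the spectrum varies continuously. Since Section \ref{sec:SchwSch} establishes $\lambda_0(0)>0$ for both Neumann and Robin with the prescribed $\beta$ (provided one is not at a critical Robin value where scalar hair appears), continuity yields some $\epsilon = \epsilon(r_+,l,\kappa,\beta) > 0$ such that $\lambda_0(a) > 0$ for $|a|/l < \epsilon$.

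For the Dirichlet case I would seek to avoid the restriction to small $|a|/l$ by a direct construction of a global twisting function, modelled on the Schwarzschild-AdS construction of Section \ref{sec:SchwSch}. The twisting function $f$ should be $K$-invariant (so that the renormalized energy is conserved), behave as $f \sim r^{\kappa - 3/2}$ near infinity and as $f \sim \sqrt{\Delta}$ near $r_+$, and satisfy a pointwise inequality ensuring the bulk term in the twisted energy estimate is nonnegative. With Dirichlet conditions the infinity boundary term drops out; the horizon term is good by positive surface gravity together with causality of $K$; and the Hawking-Reall bound is precisely what enables the required pointwise inequality for $f$ to be compatible with the ergoregion of Kerr-AdS, because although that ergoregion extends outwards it contains no subregion on which $K$ itself is spacelike. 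As a fallback, Carter's separation of the Klein-Gordon operator reduces the positivity problem for $L$ to a family of radial ODE eigenvalue problems indexed by azimuthal number $m$ and spheroidal constant, and one verifies the required pointwise bound mode-by-mode.

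The main obstacle is precisely this construction of a positive $f$ (or, equivalently, the mode-by-mode verification) valid across the entire admissible range $|a|<l$ subject to Hawking-Reall. A perturbative argument as in paragraph two only delivers small $|a|/l$, whereas the statement asserts $\epsilon=1$ in the Dirichlet case; bridging that gap requires a genuinely global argument that leverages the interplay between the Hawking-Reall bound, positive surface gravity, and the Dirichlet trace vanishing at infinity, uniformly in the separation constants. I expect this uniformity across the spheroidal spectrum as $|a|\uparrow l$ to be the core technical difficulty; once achieved, the remainder of the proof is a replay of the energy identities from Section \ref{sec:SchwSch}, combined with the standard redshift commutation to handle horizon degeneration and higher-order derivatives.
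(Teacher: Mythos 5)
Your setup for the Neumann and Robin cases matches the paper: pass to coordinates adapted to the Hawking--Reall Killing field, reduce to the sign of the lowest eigenvalue $\omega_1$ of the associated elliptic operator, note $\omega_1>0$ at $a=0$ from the Schwarzschild analysis, and invoke continuity of $\omega_1$ in the parameters (this is exactly part $(ix)$ of Proposition \ref{prop1}, proved via the Rayleigh--Ritz characterisation rather than norm-resolvent convergence, but the substance is the same). One small correction: a negative eigenvalue is shown in the paper to produce growth faster than any polynomial $t^J$, not exponential growth.

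The Dirichlet case with $\epsilon=1$, however, contains a genuine gap: you correctly identify that one must verify positivity of the relevant quadratic form uniformly over the full Hawking--Reall range, but you then explicitly leave that verification open, calling it ``the core technical difficulty.'' That difficulty \emph{is} the proof, and the paper resolves it not by constructing a global twisting function nor by a mode-by-mode Carter separation over spheroidal harmonics, but by an elementary and fully explicit argument. After averaging over $\phi$ (so the least eigenfunction is axisymmetric), the form
\[
Q = \int_{r_+}^\infty\!\!\int_0^\pi \Bigl[\Delta_-(\partial_r u)^2 + \Delta_\theta(\partial_\theta u)^2 - \tfrac{9}{4l^2}(r^2+a^2\cos^2\theta)u^2\Bigr]\sin\theta\, d\theta\, dr
\]
splits into a purely angular and a purely radial piece, each controlled by a Hardy-type inequality obtained by expanding a square with a judicious multiplier: the angular estimate (\ref{maina}) (multiplier $\gamma\cos\theta\, u$ with $\gamma=3/2$) yields $\frac{9}{4}\frac{a^2}{l^2}\int u^2\cos^2\theta$ at the cost of $\frac{3}{2}\frac{a^2}{l^2}\int u^2$, while the radial estimate (\ref{mainr}) (multiplier $f(r)=\frac{3}{2l}\sqrt{(r-r_+)/h(r)}\,g(r)$ with $g(r)=r^2+rr_+ +\frac{1}{6}(7a^2-3r_+^2)$) yields $\frac{1}{l^2}\int(\frac{9}{4}r^2+\frac{7}{4}a^2)u^2$; since $\frac{7}{4}>\frac{3}{2}$ the surplus covers the angular deficit. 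The Hawking--Reall bound enters precisely once, in checking that the resulting quadratic $F(r)=-g(r)^2+(r+r_+)h(r)$ is positive on $[r_+,\infty)$. Without supplying these inequalities (or an equivalent global construction), your argument establishes only the perturbative small-$|a|/l$ statement for Dirichlet data, not $\epsilon=1$.
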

Again, we remark that there are Robin boundary conditions that lead to solutions which grow on the black hole exterior or, in the critical case, to non-trivial solitonic solutions. In the Dirichlet case, Theorem \ref{theo:2} strengthens the result of \cite{Holbnd}, which required $2|a|<l$, to the full range of admissible parameters.

 We note that our results justify the approach taken in \cite{Dias:2010ma}, where Kerr-AdS black holes admitting linear hair were sought numerically and it was argued that this was the threshold for an instability. Since in \cite{Dias:2010ma} the authors make use of Boyer-Linquist coordinates (which are not regular on the horizon) it is not immediately apparent that the appearance of linear scalar hair is a threshold for instability, however our analysis in terms of a regular slicing shows this to be the case.

\subsection*{Acknowledgements}
We would like to thank Mihalis Dafermos for helpful comments, and for reading a draft of this manuscript. We would also like to thank the organisers of the workshop ``On the problem of collapse in GR'' at the University of Miami, Jan.~2012, where some of this work was undertaken. GHH thanks Jacques Smulevici for discussions and acknowledges funding through NSF grant DMS-1161607. CMW acknowledges funding from PIMS and NSERC.

\newpage
\section{The AdS-Schwarzschild Black Hole} \label{sec:SchwSch}

Let $r_{+}$ be the largest root of the cubic
 \be
 r-2 m +\frac{r^3}{l^2}=0,
\ee
and define $(\mathcal{R}, g)$ to be the manifold with boundary
\be
\mathcal{R} = \mathbb{R}_{t\geq 0} \times \mathbb{R}_{r\geq r_{+}} \times S^2,
\ee
endowed with the metric
\ben{gdd}
ds^2 = -\left( 1-\frac{2 m}{r} +\frac{r^2}{l^2}\right)dt^2 + \frac{4m}{r}\frac{1}{1+\frac{r^2}{l^2}} dr dt + \frac{1+\frac{2 m}{r} +\frac{r^2}{l^2}}{\left(1+\frac{r^2}{l^2} \right)^2}dr^2 + r^2 d\Omega_2^2,
\een
where $d\Omega_2^2$ is the canonical metric on the unit $2$-sphere. We will also use the notation $\not{\hspace{-.1cm} g}=r^2 d\Omega_2^2$ to refer to the induced metric on the orbits of the $SO(3)$ symmetry group. We refer to $(\mathcal{R}, g)$ as the exterior of the AdS-Schwarzschild black-hole with mass $m$ and AdS radius $l$. The horizon is located at $r_{+}$. Unlike the usual Schwarzschild coordinates, these coordinates enjoy the property of being regular at the horizon, but are merely stationary rather than static. The components of the inverse metric are
\ben{guu}
\begin{array}{rclcrcl}
g^{tt} &=& - \frac{1+\frac{2 m}{r} +\frac{r^2}{l^2}}{\left(1+\frac{r^2}{l^2} \right)^2} , &\qquad&g^{tr} &=&  \frac{2m}{r}\frac{1}{1+\frac{r^2}{l^2}} ,\\
g^{rr} &=& 1-\frac{2 m}{r} +\frac{r^2}{l^2},&\qquad& g^{AB} &=& \not{\hspace{-.1cm} g}^{AB}.
\end{array}
\een
The volume element is
\be
d\eta = r^2 dt dr  d\omega,
\ee
where $d\omega$ is the volume element on the unit $2$-sphere. We need to define a few hypersurfaces in the manifold. 

\begin{figure}[htbp]
\begin{center}
\begin{picture}(0,0)%
\includegraphics{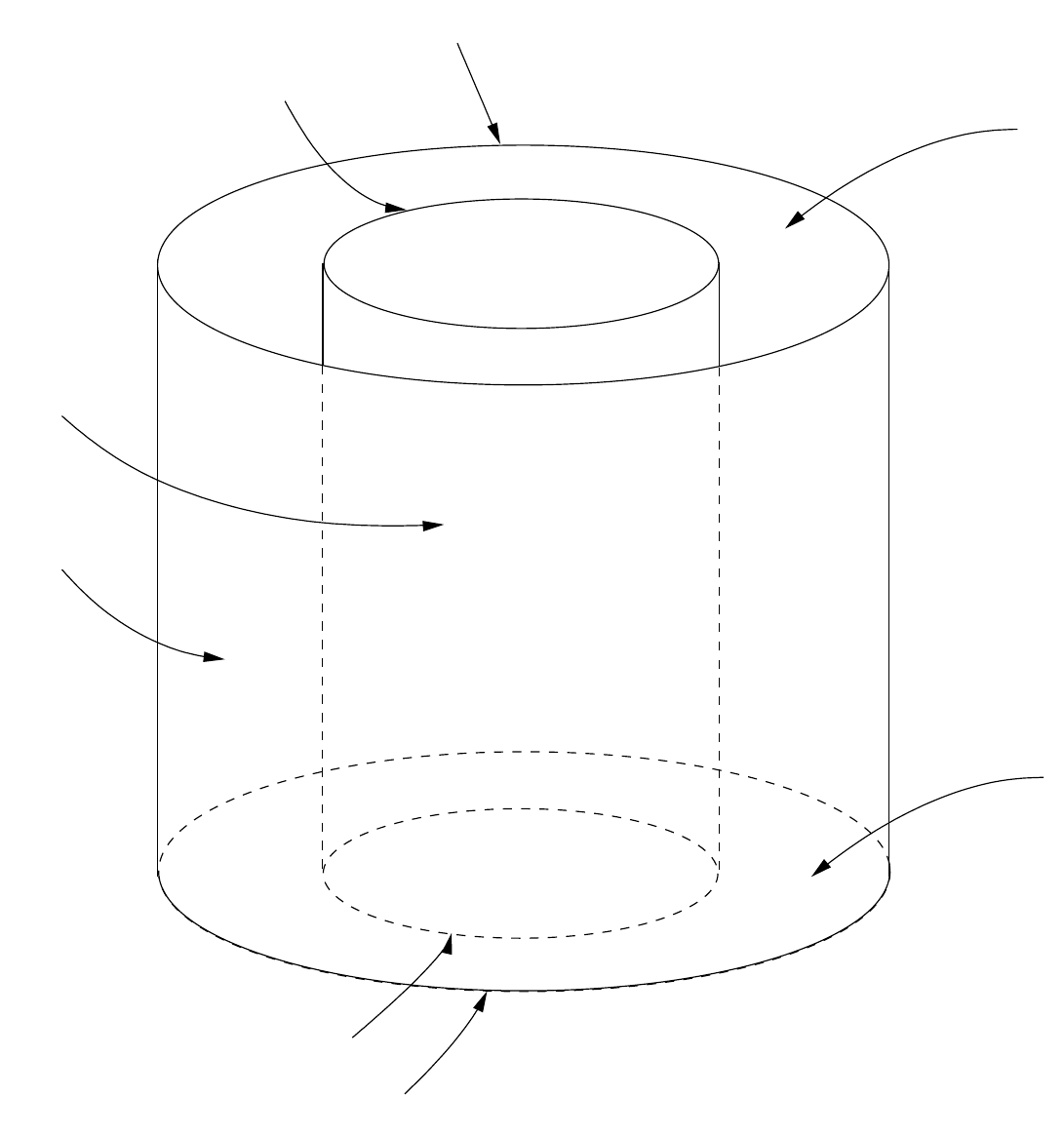}%
\end{picture}%
\setlength{\unitlength}{2960sp}%
\begingroup\makeatletter\ifx\SetFigFont\undefined%
\gdef\SetFigFont#1#2#3#4#5{%
  \reset@font\fontsize{#1}{#2pt}%
  \fontfamily{#3}\fontseries{#4}\fontshape{#5}%
  \selectfont}%
\fi\endgroup%
\begin{picture}(6974,7441)(1611,-7276)
\put(1757,-3419){\makebox(0,0)[lb]{\smash{{\SetFigFont{9}{10.8}{\rmdefault}{\mddefault}{\updefault}{\color[rgb]{0,0,0}$\tilde{\Sigma}_{R_2}^{[T_1, T_2]}$}%
}}}}
\put(8348,-680){\makebox(0,0)[lb]{\smash{{\SetFigFont{9}{10.8}{\rmdefault}{\mddefault}{\updefault}{\color[rgb]{0,0,0}$\Sigma_{T_2}^{[R_1, R_2]}$}%
}}}}
\put(8570,-4971){\makebox(0,0)[lb]{\smash{{\SetFigFont{9}{10.8}{\rmdefault}{\mddefault}{\updefault}{\color[rgb]{0,0,0}$\Sigma_{T_1}^{[R_1, R_2]}$}%
}}}}
\put(3689,-6803){\makebox(0,0)[lb]{\smash{{\SetFigFont{9}{10.8}{\rmdefault}{\mddefault}{\updefault}{\color[rgb]{0,0,0}$S^2_{T_1, R_1}$}%
}}}}
\put(3234,-404){\makebox(0,0)[lb]{\smash{{\SetFigFont{9}{10.8}{\rmdefault}{\mddefault}{\updefault}{\color[rgb]{0,0,0}$S^2_{T_2, R_1}$}%
}}}}
\put(3100,-1604){\makebox(0,0)[lb]{\smash{{\SetFigFont{9}{10.8}{\rmdefault}{\mddefault}{\updefault}{\color[rgb]{0,0,0}$B$}%
}}}}
\put(4399,-18){\makebox(0,0)[lb]{\smash{{\SetFigFont{9}{10.8}{\rmdefault}{\mddefault}{\updefault}{\color[rgb]{0,0,0}$S^2_{T_2, R_2}$}%
}}}}
\put(3993,-7198){\makebox(0,0)[lb]{\smash{{\SetFigFont{9}{10.8}{\rmdefault}{\mddefault}{\updefault}{\color[rgb]{0,0,0}$S^2_{T_1, R_2}$}%
}}}}
\put(1626,-2479){\makebox(0,0)[lb]{\smash{{\SetFigFont{9}{10.8}{\rmdefault}{\mddefault}{\updefault}{\color[rgb]{0,0,0}$\tilde{\Sigma}_{R_1}^{[T_1, T_2]}$}%
}}}}
\end{picture}%
\caption{The region of interest}
\label{Fig1}
\end{center}
\end{figure}

\begin{itemize}
\item $\Sigma_t$ denotes the hypersurface of constant $t$. It has a unit normal given by
\ben{ndef}
\begin{array}{rcl}
n &=& \sqrt{-g^{tt}} \frac{\partial}{\partial t} - \frac{g^{tr}}{\sqrt{-g^{tt}}}   \frac{\partial}{\partial r}, \\ 
n ^\flat &=& -\frac{1}{\sqrt{-g^{tt}}} dt,
\end{array}
\een
and an induced volume element
\ben{stvol}
dS_{\Sigma_t} = \sqrt{-g^{tt}} r^2 dr d\omega.
\een
Note that $\Sigma_t$ is a regular spacelike hypersurface, even as it approaches the horizon. we will denote the surface $\Sigma_t \cap\{r_1\leq r \leq r_2\}$ by $\Sigma_t^{[r_1, r_2]}$.
\item $\tilde{\Sigma}_r$ denotes the hypersurface of constant $r$. It has a unit normal
\ben{mdef}
\begin{array}{rcl}
m &=& \sqrt{g^{rr}} \frac{\partial}{\partial r} + \frac{g^{tr}}{\sqrt{g^{rr}}}   \frac{\partial}{\partial t}, \\ 
m^\flat &=& \frac{1}{\sqrt{g^{rr}}} dr,
\end{array}
\een
and an induced volume element
\ben{srvol}
dS_{\tilde{\Sigma}_r} = \sqrt{g^{rr}} r^2 dt  d\omega.
\een
Note that $m$ degenerates as we approach the horizon, this is because the normal to $\tilde{\Sigma}_r$ becomes null. The combination $m^\mu dS_{\tilde{\Sigma}_r}$ is well behaved however and gives the appropriate normal volume element on the horizon. We will denote the surface $\tilde{\Sigma}_r \cap\{t_1\leq t \leq t_2\}$ by $\tilde{\Sigma}_r^{[t_1, t_2]}$.
\item The surfaces $\Sigma_t$ and $\tilde{\Sigma}_r$ meet in the two-spheres $S^2_{t, r}$ which have induced volume element
\ben{s2vol}
dS_{S^2_{t,r}} = r^2 d \omega.
\een
\end{itemize}
Figure \ref{Fig1} shows the region which we shall consider, consisting of the solid annulus between two values of $r$ and two of $t$: \mbox{$B = \{(t, r, x^A) \in [T_1, T_2]\times [R_1, R_2]\times S^2\}$}.

We note that the metric \eq{gdd} has a Killing field which is given by
\ben{kill}
T = \frac{\partial}{\partial t}, \qquad  \qquad T^\flat =g^{tr} dr  -g^{rr}dt.
\een
\subsection{The untwisted energies}

Now we consider the wave equation \eq{vas}. This has an associated energy-momentum tensor given by
\ben{em}
T_{\mu \nu} = \nabla_\mu \psi \nabla_\nu \psi - \frac{1}{2}g_{\mu \nu}\left( \nabla_\sigma \psi \nabla^\sigma \psi - \frac{\alpha}{l^2} \psi^2 \right).
\een
We will sometimes write $T_{\mu\nu}[\psi]$ to emphasise the dependence on $\psi$. For arbitrary $\psi$, $T_{\mu\nu}$ satisfies
\ben{cons}
\nabla_\mu T^{\mu}{}_\nu = \left( \Box_g \psi + \frac{\alpha}{l^2} \psi \right) \nabla_\nu \psi,
\een
so that when $\psi$ solves \eq{vas}, the energy momentum tensor is conserved. For any vector field $V$, we define the currents
\ben{Jdef}
J^V_\mu[\psi] = V^\nu T_{\mu \nu}[\psi], \qquad K^V[\psi] = {}^V\pi_{\mu \nu} T^{\mu \nu}[\psi],
\een
where $ {}^V\pi_{\mu \nu}$ is the deformation tensor
\ben{deftens}
{}^V\pi_{\mu \nu} =  \nabla_{(\mu} V_{\nu)} =  \frac{1}{2}\left(\nabla_{\mu} V_{\nu} +\nabla_{\nu} V_{\mu} \right) =\frac{1}{2}(\mathcal{L}_V g)_{\mu \nu}.
\een
As a consequence of \eq{cons}, when $\phi$ satisfies \eq{vas} we have
\ben{cur}
\nabla^\mu J^V_\mu [\psi] = K^V[\psi].
\een
Applying the divergence theorem to the region $B$ above, we have the following identity:
\bea\nonumber
&&\int_{\Sigma_{T_2}^{[R_1,R_2]}} J^V_\mu n^\mu dS_{\Sigma_{T_2}}-\int_{\Sigma_{T_1}^{[R_1,R_2]}} J^V_\mu n^\mu dS_{\Sigma_{T_1}} \\&&+ \int_{\tilde{\Sigma}_{R_1}^{[T_1,T_2]}} J^V_\mu m^\mu dS_{\tilde{\Sigma}_{R_1}} -\int_{\tilde{\Sigma}_{R_2}^{[T_1,T_2]}} J^V_\mu m^\mu dS_{\tilde{\Sigma}_{R_2}} +\int_B K^V d\eta=0. \label{enid}
\eea
Note that the surfaces with timelike normal receive a sign flip relative to what one would expect from the divergence theorem in $\mathbb{R}^3$. We will choose $V=T$, so that the bulk term drops out as the deformation tensor of a Killing vector is identically zero.

\subsubsection{The fluxes}
We now examine the terms in the energy identity \eq{enid}. A straightforward, if tedious, calculation delivers the following result for the flux through the spacelike surfaces
\ben{egflux1}
\int_{\Sigma_{t}^{[R_1,R_2]}} J^T_\mu n^\mu dS_{\Sigma_{t}} = \frac{1}{2}\int_{\Sigma_t^{[R_1,R_2]}} \left(-g^{tt} \left(\nabla_t \psi \right)^2+g^{rr} \left(\nabla_r \psi \right)^2 +\left( \not\hspace{-.1cm} \nabla \psi\right)^2 - \frac{\alpha}{l^2} \psi^2 \right) r^2 dr d\omega,
\een
and for the flux through the timelike surfaces we have the following
\ben{egflux2}
\int_{\tilde{\Sigma}_{r}^{[T_1,T_2]}} J^T_\mu m^\mu dS_{\tilde{\Sigma}_{r}} = \int_{\tilde{\Sigma}_r^{[T_1,T_2]}}\left(g^{rt}  \left(\nabla_t \psi \right)^2+g^{rr} \left(\nabla_t \psi \right)\left(\nabla_r \psi \right)\right) r^2 dt d\omega.
\een
There are three main issues with the identity which we get on making use of these fluxes, namely
\begin{enumerate}[1)]
\item The $(\partial_r \phi)^2$ term degenerates at the horizon. This is a well known problem with the energies associated to the Killing field which defines a Killing horizon. The resolution is to make use of the redshift effect to estimate radial derivatives close to the horizon (see \cite{Mihalisnotes} and references therein).
\item For $\alpha>0$ the energy is not positive definite. This can be resolved with the aid of a Hardy inequality in the range $\alpha<\frac{9}{4}$ in which equation \eq{vas} is well posed with Dirichlet boundary conditions, cf.~\cite{Holbnd}.
\item For $\frac{5}{4}<\alpha<\frac{9}{4}$ the equation is also well posed with Neumann (or Robin) boundary conditions. Inserting the expected fall-off in this case into \eq{egflux1}, \eq{egflux2} we find that both fluxes will blow up as $R_2\to \infty$.
\end{enumerate}

\subsection{The twisted energies}

We will now show that by `twisting' the radial derivative it is possible to simultaneously resolve points $2), 3)$ when $\alpha$ is in the range $\frac{5}{4}<\alpha<\frac{9}{4}$. Let us recall the definition of the twisted derivative:
\be
\tn_\mu u = f \nabla_\mu\left( \frac{u}{f}\right).
\ee
We will assume that $f = f(r)$, so that the only derivatives affected by twisting are in the radial direction. We can re-write the radial derivative in \eq{egflux1} in terms of twisted derivatives. In so doing, we will introduce a term proportional to $\psi^2$ and one proportional to $\psi \partial_r \psi$. We integrate the second of these terms by parts to give surface terms, together with a modification of the $\psi^2$ term. The final result then is that
\bea\nonumber
\int_{\Sigma_{t}^{[R_1,R_2]}} J^T_\mu n^\mu dS_{\Sigma_{t}} &=& \frac{1}{2}\int_{\Sigma_t^{[R_1,R_2]}} \left(-g^{tt} \left(\nabla_t \psi \right)^2+g^{rr} \left(\tn_r \psi \right)^2 +\left( \not\hspace{-.1cm} \nabla \psi\right)^2 +V(r) \psi^2 \right) r^2 dr d\omega\\&&\quad\label{egflux3} +\int_{S^2_{t, {R_2}}} S(r) \psi^2 r^2 d\omega-\int_{S^2_{t, {R_1}}} S(r) \psi^2 r^2 d\omega,
\eea
where we have a modified potential term
\ben{Vdef}
V(r) =  -\frac{1}{r^2}\partial_r\left (r^2 g^{rr} \frac{f'(r)}{f(r)} \right) - g^{rr} \left(\frac{f'(r)}{f(r)}\right)^2- \frac{\alpha}{l^2},
\een
and the surface term is given by
\ben{Sdef}
S(r) = \frac{1}{2}  g^{rr} \frac{f'(r)}{f(r)}.
\een

Now let us look at the flux through the timelike surfaces \eq{egflux2}. Twisting the radial derivative here will introduce a term proportional to $\psi \partial_t \psi$, which we can integrate by parts in time. Since the metric is stationary, this simply introduces surface terms, so that the flux can be re-written
\bea \nonumber
\int_{\tilde{\Sigma}_{r}^{[T_1,T_2]}} J^T_\mu m^\mu dS_{\tilde{\Sigma}_{r}} &=& \int_{\tilde{\Sigma}_r^{[T_1,T_2]}}\left(g^{rt}  \left(\nabla_t \psi \right)^2+g^{rr} \left(\nabla_t \psi \right)\left(\tn_r \psi \right)\right) r^2 dt d\omega \\&& \quad \label{egflux4}+\int_{S^2_{T_2, r}} S(r) \psi^2 r^2 d\omega-\int_{S^2_{T_1, r}} S(r) \psi^2 r^2 d\omega.
\eea
The key point here is that this is \emph{the same} $S(r)$. In the energy identity \eq{enid}, the surface contributions coming from the spacelike and the timelike surfaces cancel. Let us define
\ben{twEdef}
\mathcal{E}(t; [R_1, R_2]) =  \frac{1}{2}\int_{\Sigma_t^{[R_1,R_2]}} \left(-g^{tt} \left(\nabla_t \psi \right)^2+g^{rr} \left(\tn_r \psi \right)^2 +\left( \not\hspace{-.1cm} \nabla \psi\right)^2 +V(r) \psi^2 \right) r^2 dr d\omega,
\een
and
\ben{twFdef}
\mathcal{F}(r; [T_1, T_2]) = \int_{\tilde{\Sigma}_r^{[T_1,T_2]}}\left(g^{rt}  \left(\nabla_t \psi \right)^2+g^{rr} \left(\nabla_t \psi \right)\left(\tn_r \psi \right)\right) r^2 dt d\omega .
\een
We have the energy identity
\ben{nenid}
\mathcal{E}(T_2; [R_1, R_2]) - \mathcal{E}(T_1; [R_1, R_2]) = \mathcal{F}(R_2; [T_1, T_2]) -\mathcal{F}(R_1; [T_1, T_2]),
\een
which follows from \eq{enid} after cancelling the contributions from the $S^2$ on the corners of the region. Now we wish to make a choice of $f$ so that the new energy identity \eq{nenid} resolves the problems $2), 3)$ of the unmodified energy identity. To resolve problem $2)$, we need to choose $f(r)$ so that $V(r)$ becomes positive. To resolve problem $3)$, we need to choose $f(r)$ so that the energy is finite for the Neumann fall-off, as we take $R_2 \to \infty$. Examining the asymptotics, we see that $3)$ can be resolved by choosing a function $f(r)$ which satisfies
\be
f(r) \sim r^{-\frac{3}{2} + \sqrt{\frac{9}{4}-\alpha}}(1+\O{r^{-2}}),
\ee
for large $r$. Some experimentation shows that an appropriate choice of $f$ to resolve $2)$ is to take
\ben{fdef}
f(r) = \frac{1}{r} \left(1+\frac{r^2}{l^2} \right)^{-\frac{1}{4}+\frac{1}{2} \kappa},
\een
where we have defined $\kappa = \sqrt{\frac{9}{4}-\alpha}$. With this choice of $f$, the function $V$ is given by
\be
V(r) = \frac{(1-2\kappa)^2}{4(l^2+r^2)} + m \frac{4l^4 + 8 l^2 r^2 + r^4 (3-2\kappa)^2}{2 r^3 (l^2+r^2)^2},
\ee
which is manifestly positive. It also decays like $r^{-2}$, a necessary condition for the energy to converge with Neumann fall-off. Note that this choice of $f$ is no good when $m =0$, i.e.\ for pure AdS, since it is singular at $r=0$, which is in the domain of outer communication when $m=0$. In pure AdS an appropriate choice of $f$ is $f(r) = \left(1+\frac{r^2}{l^2} \right)^{-\frac{3}{4}+\frac{1}{2} \kappa}$.

With $f$ as in \eq{fdef}, we can take the limit $R_2 \to \infty$, $R_1 \to r_{+}$ in the energy identity \eq{nenid}. The contribution from the inner boundary becomes
\be
\lim_{R_1\to r_{+}} \mathcal{F}(R_1; [T_1, T_2]) = \int_{\mathcal{H}_{[T_1, T_2]}} g^{rt}  \left(\nabla_t \psi \right)^2 dt d\omega =: F[T_1, T_2],
\ee
and we note that $g^{rt}>0$ on the horizon, so this term is non-negative. We find that the contribution from $\scri$ vanishes for either Dirichlet or Neumann fall-off at infinity:
\be
\lim_{R_2\to \infty} \mathcal{F}(R_2; [T_1, T_2]) = 0.
\ee
We now turn to Robin boundary conditions, where we assume $\partial_t \beta = 0$. We then find
\bean
\lim_{R_2\to \infty} \mathcal{F}(R_2; [T_1, T_2]) &=& -\lim_{r \to \infty} \int_{\tilde{\Sigma}_r^{[T_1,T_2]}}\frac{\beta}{2 l^2} \partial_t \left((r^{\frac{3}{2}-\kappa} \psi)^2 \right)dt d\omega \\ &=& \frac{1}{2 l^2}  \int_{S^2_{T_1, \infty}} (r^{\frac{3}{2}-\kappa} \psi)^2 \beta\ d\omega-\frac{1}{2 l^2}  \int_{S^2_{T_2, \infty}}  (r^{\frac{3}{2}-\kappa} \psi)^2 \beta\ d\omega,
\eean
where we abuse notation slightly and understand the term $r^{\frac{3}{2}-\kappa} \psi$ in the integrals over the spheres at infinity to mean $\lim_{r \to \infty} r^{\frac{3}{2}-\kappa} \psi$, which we know to belong to $L^2(S^2_{t,\infty})$ from the well posedness theorem above, Theorem \ref{WPThm}. Defining the energy to be
\bea \nonumber
E[t]&=&  \frac{1}{2}\int_{\Sigma_t} \left(-g^{tt} \left(\nabla_t \psi \right)^2+g^{rr} \left(\tn_r \psi \right)^2 +\left( \not\hspace{-.1cm} \nabla \psi\right)^2 +V(r) \psi^2 \right) r^2 dr d\omega\\ && \qquad +  \frac{1}{2 l^2}  \int_{S^2_{t, \infty}} (r^{\frac{3}{2}-\kappa} \psi)^2 \beta\ d\omega, \label{twEdef2}
\eea
where we include the surface term only for Robin boundary conditions, we have
\ben{scwenid}
E[T_2] = E[T_1] - F[T_1, T_2].
\een
Taking all of this together, we arrive at the following result:

\begin{Proposition}\label{enlem}
Suppose $\psi$ is a weak solution of \eq{vas} on the exterior of the AdS-Schwarzschild black hole, subject to either Dirichlet, Neumann or Robin boundary conditions with $\beta \geq 0$, $\partial_t\beta=0$. Then the energy $E(t)$ defined in \eq{twEdef2} is finite, non-increasing and positive definite.
\end{Proposition}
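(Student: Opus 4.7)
The plan is to verify the three assertions separately, since all of the hard analytic work—deriving the energy identity (\ref{nenid}) with the correctly-cancelling surface contributions $S(r)$, and choosing $f$ as in (\ref{fdef}) so that the potential $V(r)$ is manifestly non-negative—has already been done in the construction leading up to the statement.

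\emph{Finiteness and positivity.} By the choice (\ref{fdef}) we have $f\, r^{3/2-\kappa} = 1 + \O{r^{-2}}$, so the bulk integrand of (\ref{twEdef2}) is exactly the density of the twisted Sobolev norm $\H^1(\Sigma_t, \kappa)$ introduced in the Definition above (up to a lower-order $\psi^2$ piece with integrable weight $V(r) = \O{r^{-2}}$), plus an $\L^2$ norm of $T\psi$. Theorem \ref{WPThm} guarantees that $\psi|_{\Sigma_t} \in \H^1_0(\Sigma_t,\kappa)$ in the Dirichlet case and $\in \H^1(\Sigma_t,\kappa)$ in the Neumann/Robin case, with $\hat n_{\Sigma_t}\psi \in \L^2(\Sigma_t)$, so the bulk part of $E[t]$ is finite. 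For Robin, the surface term is finite because $\lim_{r\to\infty} r^{3/2-\kappa}\psi$ lies in $L^2(S^2)$, again by Theorem \ref{WPThm}. Each of the five terms appearing in $E[t]$ is non-negative: $-g^{tt} > 0$ on the exterior by (\ref{guu}); $g^{rr} \geq 0$ on $\{r \geq r_+\}$; the angular gradient term is trivially non-negative; $V(r) \geq 0$ by the explicit formula displayed just after (\ref{fdef}); and the boundary term is non-negative under the hypothesis $\beta \geq 0$. Summation gives $E[t] \geq 0$.

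\emph{Monotonicity.} This is immediate from the energy identity (\ref{scwenid}): one only needs to check that the horizon flux
\[
F[T_1, T_2] = \int_{\mathcal{H}_{[T_1, T_2]}} g^{rt}\,(\nabla_t \psi)^2\, dt\, d\omega
\]
is non-negative, which holds because $g^{rt}|_{r=r_+} = (2m/r_+)\bigl(1+r_+^2/l^2\bigr)^{-1} > 0$. Hence $E[T_2] = E[T_1] - F[T_1, T_2] \leq E[T_1]$.

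The one technical point requiring attention, and which I expect to be the only place where analytic care beyond unpacking the definitions is needed, is that the proposition concerns \emph{weak} solutions, whereas (\ref{nenid}) was derived by pointwise integration by parts. This is handled by a standard density argument: approximate $\psi$ by smooth solutions arising from higher-regularity initial data (as supplied by the final clause of Theorem \ref{WPThm}), apply the identity for those, and pass to the limit using continuity of each quadratic form in the $\H^1(\Sigma_t, \kappa) \oplus \L^2(\Sigma_t)$ topology together with, in the Robin case, continuity of the boundary trace into $L^2(S^2_{t, \infty})$ established in \cite{Warnick:2012fi}. The hypothesis $\partial_t \beta = 0$ enters here, as it was already used in the derivation of (\ref{egflux4}) to move the time derivative off $(r^{3/2-\kappa}\psi)^2$ via integration by parts in $t$.
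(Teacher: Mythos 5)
Your proposal is correct and takes essentially the same route as the paper, which proves the proposition precisely by the computations preceding its statement: the twisted energy identity \eq{scwenid}, the manifest positivity of $V(r)$ for the choice \eq{fdef}, the positivity of $g^{rt}$ on the horizon, and the sign of the Robin surface term under $\beta\geq 0$; your density argument for weak solutions makes explicit a step the paper leaves implicit. The only small addition needed is that positive \emph{definiteness} (rather than mere non-negativity) follows because the displayed formula for $V(r)$ is strictly positive for $m>0$, so $E[t]=0$ forces $\psi\equiv 0$.
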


Combining this with a now standard argument involving the redshift \cite{Mihalisnotes} to deal with the degeneration at the horizon, together with an elliptic estimate and Sobolev embedding, we can deduce that\footnote{recall $\hat{n}_\Sigma = r n_\Sigma$}:

\begin{Theorem} \label{schbdn}
Let $\psi$ be a weak solution of
\be
\Box_g \psi + \frac{\alpha}{l^2} \psi = 0,
\ee
in $\mathcal{R}$, with initial conditions $\psi|_{\Sigma_0}=\uppsi$, $\hat n_{\Sigma_0} \psi|_{\Sigma_0}=\uppsi'$. Let $\kappa = \sqrt{\frac{9}{4}-\alpha}$.
\begin{enumerate}[1)]
\item \textbf{\emph{[Neumann and Robin conditions]}} Suppose $\frac{5}{4}<\alpha<\frac{9}{4}$,  $\psi$ satisfies either Neumann or Robin boundary conditions such that $E[t]$ is positive definite, and suppose furthermore that
\be
\sum_{i=0}^1\norm{\left(\hat n_{\Sigma_0} \right)^i\psi}{\H^1(\Sigma_0, \kappa)} + \norm{\left(\hat n_{\Sigma_0} \right)^2\psi}{\L^2(\Sigma_0)}  < \infty.
\ee
Then $\psi$ is locally $C^0$ with
\ben{schbdn1}
\sup_{\Sigma_t} \abs{r^{\frac{3}{2}-\kappa}\psi} \leq C \left(\sum_{i=0}^1\norm{\left(\hat n_{\Sigma_0} \right)^i\psi}{\H^1(\Sigma_0, \kappa)} + \norm{\left(\hat n_{\Sigma_0} \right)^2\psi}{\L^2(\Sigma_0)}  \right ),
\een
for some constant $C$ independent of $t$. 
\item \textbf{\emph{[Dirichlet conditions]}}  Suppose $\alpha<\frac{9}{4}$ and $\psi$ satisfies Dirichlet boundary conditions so that $E[t]$ is positive definite
\begin{enumerate}[i)]
\item Suppose that
\be
\sum_{i=0}^1\norm{\left(\hat n_{\Sigma_0} \right)^i\psi}{\H^1(\Sigma_0, \kappa)} + \norm{\left(\hat n_{\Sigma_0} \right)^2\psi}{\L^2(\Sigma_0)}  < \infty.
\ee
Then $\psi$ is locally $C^0$ and for any $\epsilon>0$, there exists a $C(\epsilon)>0$, independent of $t$, such that
\ben{schbdn2}
\sup_{\Sigma_t} \abs{r^{\frac{3}{2}-\epsilon}\psi} \leq C(\epsilon) \left(\sum_{i=0}^1\norm{\left(\hat n_{\Sigma_0} \right)^i\psi}{\H^1(\Sigma_0, \kappa)} + \norm{\left(\hat n_{\Sigma_0} \right)^2\psi}{\L^2(\Sigma_0)}  \right ),
\een
\item Suppose that
\be
\sum_{i=0}^2\norm{\left(\hat n_{\Sigma_0} \right)^i\psi}{\H^1(\Sigma_0, \kappa)} + \norm{\left(\hat n_{\Sigma_0} \right)^3\psi}{\L^2(\Sigma_0)}  < \infty.
\ee
Then $\psi$ is locally $C^1$. Furthermore there exists $C$, independent of $t$ such that
\ben{schbdn3}
\sup_{\Sigma_t} \abs{r^{\frac{3}{2}+\kappa}\psi} \leq C \left(\sum_{i=0}^2\norm{\left(\hat n_{\Sigma_0} \right)^i\psi}{\H^1(\Sigma_0, \kappa)} + \norm{\left(\hat n_{\Sigma_0} \right)^3\psi}{\L^2(\Sigma_0)}  \right ),
\een
for $\frac{5}{4} <\alpha<\frac{9}{4}$. If $\alpha\leq \frac{5}{4}$ then for each $\epsilon>0$, there exists $C(\epsilon)>0$, independent of $t$, such that
\ben{schbdn4}
\sup_{\Sigma_t} \abs{r^{\frac{5}{2}-\epsilon}\psi} \leq C(\epsilon) \left(\sum_{i=0}^2\norm{\left(\hat n_{\Sigma_0} \right)^i\psi}{\H^1(\Sigma_0, \kappa)} + \norm{\left(\hat n_{\Sigma_0} \right)^3\psi}{\L^2(\Sigma_0)}  \right ).
\een
\end{enumerate}
\end{enumerate}
\end{Theorem}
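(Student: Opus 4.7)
The strategy is to combine Proposition \ref{enlem} with commutation by the Killing field $T = \partial_t$, the red-shift technique at $\mathcal{H}^+$, an elliptic estimate on each slice $\Sigma_t$, and a weighted Sobolev embedding, as foreshadowed in the paragraph preceding the theorem. Since $T$ is Killing (and $\partial_t\beta = 0$ is assumed for Robin data), $T^k\psi$ solves \eqref{vas} with the same boundary conditions. Applying Proposition \ref{enlem} to $T^k\psi$ for $k = 0, 1, 2$ (case 1 and Dirichlet (i)), respectively $k = 0,1,2,3$ (Dirichlet (ii)), yields $E[T^k\psi](t) \leq E[T^k\psi](0)$. The initial data norms hypothesized in the theorem are exactly what one needs to bound the right-hand side: one trades each $\hat n_{\Sigma_0}$-derivative on the data for a $T$-derivative modulo tangential terms controlled by the $\H^1$-part of the norm, using \eqref{vas} to convert any additional normal derivative into the wave operator plus lower-order terms.

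The energy $E[t]$ degenerates at $\mathcal{H}^+$ because the coefficient $g^{rr}$ of $(\tn_r\psi)^2$ vanishes there. The non-degeneration is restored by the now-standard red-shift multiplier/commutator argument (see \cite{Mihalisnotes}), which is purely a local computation near $\mathcal{H}^+$ using the positivity of the surface gravity and is unaffected by the twisting introduced near $\scri$. Together with an elliptic estimate on each $\Sigma_t$ for the spatial operator $L$ of \eqref{elform} -- the wave equation being a $T$-derivative perturbation of $L$ -- this upgrades the $\H^1$-control of $\psi, T\psi, T^2\psi$ (and, in case ii, $T^3\psi$) to $\H^2$-control (resp.\ $\H^3$-control) of $\psi$ on each $\Sigma_t$. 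A weighted Sobolev embedding, which after setting $\phi = r^{3/2-\kappa}\psi$ and $\rho = 1/r$ reduces near $\scri$ to the standard $H^2 \hookrightarrow C^0$ inequality on a compact region, then delivers \eqref{schbdn1} in the Neumann/Robin case.

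For the Dirichlet enhancements, one exploits the fact that the Dirichlet condition eliminates the slower-decaying branch of the asymptotic expansion of Theorem \ref{WPThm}, so morally $\psi$ decays like the Dirichlet branch $r^{-3/2-\kappa}$. A Hardy-type inequality, whose boundary contribution at $\scri$ vanishes under the Dirichlet condition, upgrades the weight by $r^{-\epsilon}$ for any $\epsilon > 0$ without additional derivatives, yielding \eqref{schbdn2}. For \eqref{schbdn3}, the case $5/4 < \alpha < 9/4$ corresponds to $\kappa < 1$: commuting with one further derivative and extracting the coefficient of the Dirichlet branch from the refined asymptotic expansion of Theorem \ref{WPThm} yields the sharp rate $r^{-3/2-\kappa}$. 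For $\alpha \leq 5/4$ (so $\kappa \geq 1$) the corresponding Hardy inequality operates at a critical weight, forcing the $\epsilon$-loss in \eqref{schbdn4}. The main technical obstacle will be (a) the precise formulation of the weighted Sobolev embedding for the twisted spaces near $\scri$, and (b) in the Dirichlet case, the bootstrap from the natural $r^{-3/2+\kappa}$ energy decay to the sharper Dirichlet rate via the Hardy inequality; by contrast, the red-shift and $T$-commutation steps follow the standard playbook.
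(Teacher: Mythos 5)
Your plan is essentially the one the paper itself adopts: Theorem \ref{schbdn} is not given a detailed proof there, but is deduced from Proposition \ref{enlem} ``combined with a now standard argument involving the redshift \ldots together with an elliptic estimate and Sobolev embedding,'' and your decomposition into $T$-commutation, redshift at $\mathcal{H}^+$, an elliptic estimate on each $\Sigma_t$ for the spatial operator, and a weighted Sobolev embedding near $\scri$ (plus Hardy-type/FTC improvements exploiting the vanishing of the Neumann branch in the Dirichlet case) matches that sketch and the remark following the theorem about rewriting $(\hat n_{\Sigma_0})^i\psi$ in terms of the data via the equation.

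One concrete slip: your commutation count is off by one throughout. In case 1) the hypothesis controls $\norm{\psi}{\H^1(\Sigma_0,\kappa)}$, $\norm{\hat n_{\Sigma_0}\psi}{\H^1(\Sigma_0,\kappa)}$ and $\norm{(\hat n_{\Sigma_0})^2\psi}{\L^2(\Sigma_0)}$, which is exactly what is needed to bound $E[\psi](0)$ and $E[T\psi](0)$ --- but not $E[T^2\psi](0)$, whose kinetic term involves $(T^3\psi)^2$ on the initial slice and is not controlled by the stated data. So Proposition \ref{enlem} may be applied only for $k=0,1$ in case 1) and Dirichlet (i), and for $k=0,1,2$ in Dirichlet (ii), not $k\le 2$ and $k\le 3$ as you write; likewise you do not obtain $\H^1$-control of $T^2\psi$ on the slices, only $\L^2$-control. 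This costs nothing: the two conserved energies give $T\psi\in\H^1(\Sigma_t,\kappa)$ and $T^2\psi\in\L^2(\Sigma_t)$ uniformly in $t$, and the elliptic estimate for $L\psi=-TT\psi+BT\psi$ --- which you already invoke --- is precisely what recovers the top-order spatial derivatives and yields the $\H^2$-control of $\psi$ (resp.\ $\H^3$ in case (ii)) needed for the $C^0$ (resp.\ $C^1$) embedding. With that correction the argument closes as you describe.
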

We note that we can of course use the equation to re-write $ \left(\hat n_{\Sigma_0}\right)^i \psi$ in terms of $\uppsi, \uppsi'$ for $i=2, 3$. In particular, note that the finiteness of the right hand side of \eq{schbdn1} implies $\uppsi \in H^2_{\textrm{loc.}}, \uppsi' \in H^1_{\textrm{loc.}}$. Finiteness of the right hand side of \eq{schbdn3} implies $\uppsi \in H^3_{\textrm{loc.}}, \uppsi' \in H^2_{\textrm{loc.}}$. This extra differentiability is required to get control over the stronger $r$-weight in (\ref{schbdn3}, \ref{schbdn4}). It is obvious from the conformally coupled case that this is necessary. By using higher order energies we can also gain uniform control over derivatives of $\psi$, and more decay for the case $\alpha<5/4$ with Dirichlet boundary conditions.
\subsection{The case of negative Robin function}

The condition on $\beta$ in Proposition \ref{enlem}  is clearly sufficient to ensure that $E(t)$ is positive. It is not, however, necessary. Making use of a twisted version of the trace theorem for Sobolev spaces (c.f. Lemma 4.2.1 in \cite{Warnick:2012fi}) one can improve this condition to $-\epsilon \leq \beta$ for some $\epsilon>0$. In this approach, however, it is difficult to get an explicit $\epsilon$, much less an optimal one. On the other hand, it is clear that if $\beta$ is permitted to be sufficiently negative then $E[t]$ will fail to be positive -- simply take $r^{\kappa-\frac{3}{2}}$ as a test function and $\beta$ to be a negative constant of large magnitude.

We shall now discuss a sharp criterion to determine whether the energy is positive definite for a given Robin function $\beta$. For clarity, we will restrict to the case that $\beta$ is constant. This has the advantage of being consistent with the full set of isometries of the AdS-Schwarzschild black-hole. We would like to know then under what circumstances
\begin{equation} \label{eq:he2}
B[u, u]:= \int_{\Sigma_0} \left(g^{rr} \left(\tn_r u \right)^2 +V(r) u^2 \right) r^2 dr d\omega+ \frac{\beta}{l^2}  \int_{S^2_{\infty}} (r^{\frac{3}{2}-\kappa} u)^2\ d\omega \geq 0
\end{equation}
holds, where we assume that $u \in \H^1(\Sigma_0, \kappa)$. We have dropped the angular term since we can always reduce the energy by averaging over angular directions. Clearly this condition is independent of the magnitude of $u$, so we are free to normalise $u$ such that
\ben{normu}
\int_{\Sigma_0} u^2  \abs{g^{tt}}r^2 dr d\omega = 1 \, .
\een
We can then re-state the question of positivity of the energy to the problem of determining the sign of $\lambda$, where
\begin{align} \label{lamdef}
\lambda := \inf_{\{u \in \H^1(\Sigma_0, \kappa): \textrm{ \eq{normu} holds} \}} B[u,u].
\end{align}
By the twisted trace theorem $B[u,u]$ is bounded below so this infimum exists and is finite. This looks very much like the Rayleigh-Ritz formulation for the least eigenvalue of an elliptic operator and in fact this is precisely what it is: we will show  below (in a more general setting, cf.~Proposition \ref{prop1}) that the ``twisted" eigenvalue problem associated with (\ref{eq:he2}), namely (see (\ref{twistd}) for a definition of the adjoint $\tilde{\nabla}_r^\dagger$)
\begin{equation} \label{eqtwist}
\tilde{\nabla}_r^\dagger \left(g^{rr} \tilde{\nabla}_r u \right) +V(r) u=  \omega  \frac{1+\frac{2 m}{r} +\frac{r^2}{l^2}}{\left(1+\frac{r^2}{l^2} \right)^2} u \, ,
\end{equation}
where we require that $u$ is regular at $r = r_{{hor}.}$ and that
\begin{equation} \label{eqbdycond}
u \sim u_0 \left( \frac{2 \kappa}{r^{\frac{3}{2}-\kappa} }+  \frac{\beta}{r^{\frac{3}{2}+\kappa} }\right)+ \O{\frac{1}{r^{\frac{5}{2}}}}\quad \textrm{ as } r \to \infty, \textrm{\ \ \ \  with \ \ \ $\kappa^2 = \frac{9}{4}-\alpha$} \, ,
\end{equation}
is a ``good" eigenvalue problem: In particular, there is a discrete spectrum and a lowest eigenvalue $\omega_1 \left(\beta\right)$.
With this being established, one may -- for concrete computation of the lowest eigenvalue -- return to the untwisted form of (\ref{eqtwist}),
\begin{equation} \label{eq:he1}
-\frac{1}{r^2} \frac{d}{dr} \left[ \left(r^2 - 2 m r + \frac{r^4}{l^2}\right) \frac{du}{dr} \right] - \frac{\alpha}{l^2} u = \omega  \frac{1+\frac{2 m}{r} +\frac{r^2}{l^2}}{\left(1+\frac{r^2}{l^2} \right)^2} u \, ,
\end{equation}
also equipped with boundary condition (\ref{eqbdycond}). Whichever form one prefers, from (\ref{twEdef2}) one clearly has

\begin{Proposition}
Let $\omega_1(\beta)$ be the least eigenvalue of the Sturm-Liouville problem (\ref{eqtwist}) (or equivalently (\ref{eq:he1})) with boundary condition (\ref{eqbdycond}).
If $\omega_1(\beta) >0$, then the energy $E[t]$ is positive definite for this choice of $\beta$ and Theorem \ref{schbdn} applies.
\end{Proposition}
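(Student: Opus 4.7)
The plan is to decompose the renormalized energy $E[t]$ from (\ref{twEdef2}) into three parts and show that each is controlled by the spectral hypothesis. The temporal term $\tfrac{1}{2}\int_{\Sigma_t} -g^{tt}(\partial_t\psi)^2 r^2 dr d\omega$ is manifestly non-negative since $-g^{tt}>0$ on the exterior, and likewise the angular term $\tfrac{1}{2}\int_{\Sigma_t} |\not\nabla \psi|^2 r^2 dr d\omega$. Positivity of $E[t]$ therefore reduces to positivity of the radial-potential-plus-Robin part, which after using the stationary Killing field to identify $\psi|_{\Sigma_t}$ with a function $u$ on $\Sigma_0$ is exactly $\tfrac{1}{2} B[u,u]$ as in (\ref{eq:he2}) (the suppressed angular gradient only makes $B$ larger, so dropping it is harmless).

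Next I would invoke the variational characterisation to identify the infimum $\lambda$ in (\ref{lamdef}) with the lowest eigenvalue $\omega_1(\beta)$ of (\ref{eqtwist})--(\ref{eqbdycond}). This is a standard Rayleigh--Ritz argument adapted to twisted derivatives: a minimiser of $B[u,u]$ subject to the normalisation (\ref{normu}) satisfies, as Euler--Lagrange equation, precisely the twisted Sturm--Liouville equation (\ref{eqtwist}) (equivalently (\ref{eq:he1})), with the prescribed boundary condition (\ref{eqbdycond}) arising naturally when one performs twisted integration by parts and examines the boundary contributions at $\scri$ (the contribution at the horizon vanishes because $g^{rr}$ degenerates there). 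The rigorous identification of the infimum with a genuine eigenvalue, and the discreteness of the spectrum with $\omega_1(\beta)$ as its minimum, will follow from the general spectral framework for the twisted elliptic operator established in Proposition \ref{prop1}, specialised to the spherically symmetric AdS--Schwarzschild setting.

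Granted $\lambda = \omega_1(\beta)$, the hypothesis $\omega_1(\beta)>0$ together with the Rayleigh quotient bound gives, for every $0\not\equiv u \in \H^1(\Sigma_0,\kappa)$ satisfying the Robin asymptotics,
\begin{equation*}
B[u,u] \;\geq\; \omega_1(\beta) \int_{\Sigma_0} u^2\, |g^{tt}|\, r^2\, dr\, d\omega \;>\; 0.
\end{equation*}
Re-inserting the time and angular kinetic contributions then shows that $E[t]$ is bounded below by a strictly positive multiple of $\int_{\Sigma_t}\psi^2 |g^{tt}|\, r^2 dr\, d\omega$, which is the desired positive definiteness. Combined with Proposition \ref{enlem} (whose conclusion that $E[t]$ is non-increasing did not use any sign assumption on $\beta$, only stationarity), this brings the situation exactly into the hypotheses of Theorem \ref{schbdn}, completing the argument.

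The main obstacle is the spectral-theoretic step: one must justify that the quadratic form $B$ on $\H^1(\Sigma_0,\kappa)$ with the Robin boundary term generates (via Friedrichs' extension) a self-adjoint operator whose spectrum is discrete and bounded below, with lowest eigenvalue realised by the Rayleigh quotient. This ultimately rests on a Rellich--Kondrachov-type compactness embedding of $\H^1(\Sigma_0,\kappa)$ into the weighted $L^2$-space with weight $|g^{tt}|r^2$, precisely the compactness result deferred to Section 6. Care is also required at $\scri$, where the twisted trace theorem from \cite{Warnick:2012fi} must be used to give rigorous meaning to the boundary term $\int_{S^2_\infty}(r^{3/2-\kappa}u)^2 \beta\, d\omega$ for $u\in \H^1(\Sigma_0,\kappa)$.
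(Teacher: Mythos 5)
Your proposal is correct and follows essentially the same route as the paper: the time-derivative and angular terms in $E[t]$ are manifestly non-negative, the remainder is the quadratic form $B[u,u]$ of (\ref{eq:he2}), and the identification of its infimum over normalised $u$ with $\omega_1(\beta)$ is exactly the Rayleigh--Ritz characterisation the paper defers to Proposition \ref{prop1} (resting on the twisted Rellich--Kondrachov compactness of Section 6). The only cosmetic difference is that you justify dropping the angular gradient as a lower bound while the paper phrases it as averaging over angles; both are valid and lead to the same conclusion.
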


\begin{figure}
\vspace{1cm}
\includegraphics[scale=.75]{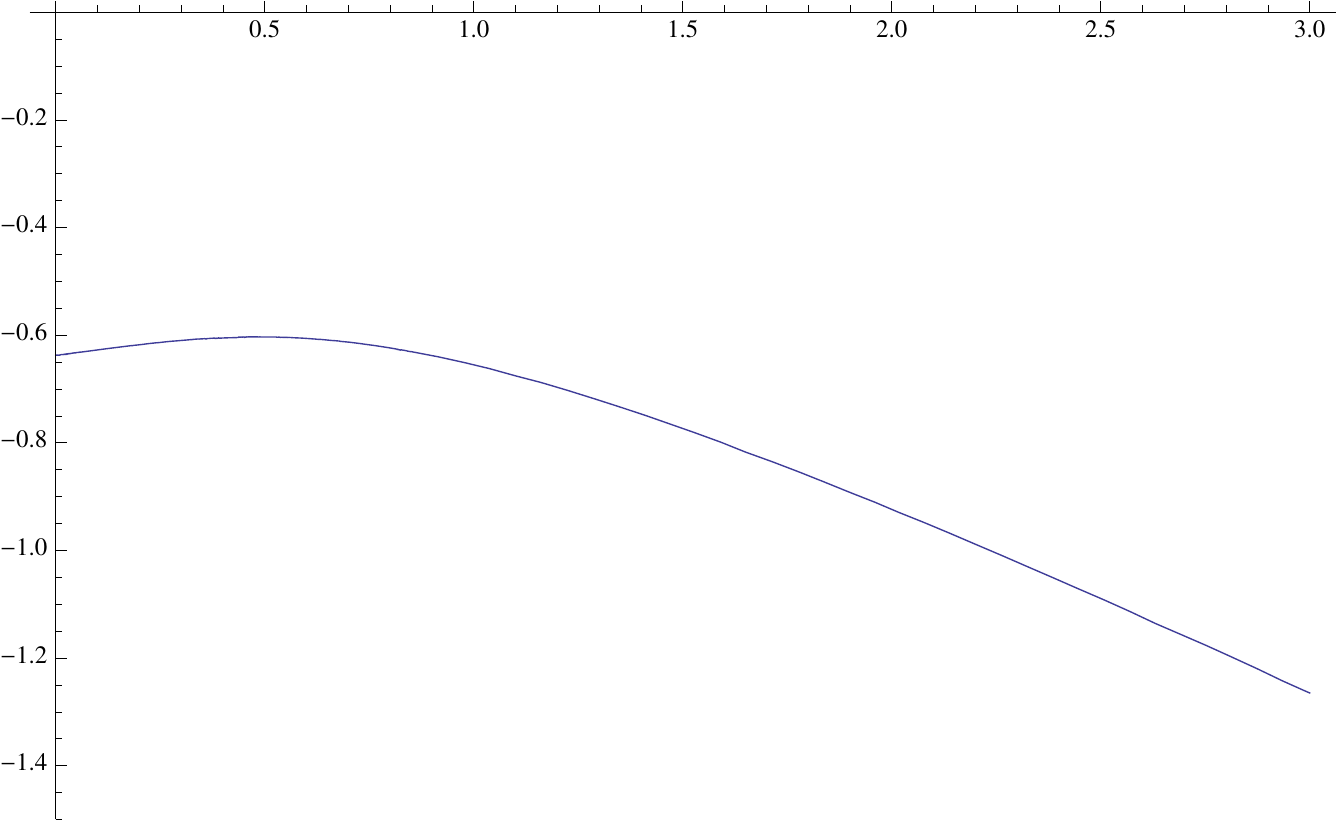}
\put(-150,195){$\frac{r_{+}}{l}$}
\put(-330,75){$\frac{\beta_c(r_{+})}{l}$}
\caption{The critical value of $\beta/l$ as a function of $r_{+}/l$ for the conformally coupled case, $\alpha = 2$ \label{betfig}.}
\end{figure}

If $\omega_1(\beta) <0$ then the energy may be negative and we shall see later that this implies an instability for the field, in the sense that there exist solutions which grow in time. We note that the choice of $\abs{g^{tt}}$ as a weight function in \eq{normu} is somewhat arbitrary, and related to the choice of spacelike slicing. We could choose any other positive function on $[r_{hor,}, \infty)$ with the same asymptotic behaviour and it would modify the Sturm-Liouville problem we obtain. The \emph{sign} of the lowest eigenvalue would be unchanged however, see Corollary \ref{cor42}.

In Figure \ref{betfig} we show a numerical plot of $\beta_c$, defined as the value of $\beta$ for which $\omega_1(\beta_c)=0$. We plot this as a function of the horizon radius $r_{+}$ for the conformally coupled equation with $\alpha = 2$. The value of $\beta_c$ at $r_{+}=0$ is consistent with the value $\beta_c/l = -2/\pi$ which can be determined for the exact anti-de Sitter spacetime. Provided $\beta>\beta_c$ we have boundedness of the wave equation with this boundary condition, whereas for $\beta<\beta_c$ we have an instability. We finally remark that for pure AdS this situation has 
been studied in \cite{WaldIshi}. 

\section{The twisted energy momentum tensor} \label{sec:twistEM}

In order to arrive at the final energy identity above,  \eq{scwenid}, we had to first consider the energy identity for a finite region, make some manipulations and then let the region become infinite. This worked because the divergences in the various energy fluxes of \eq{enid} `balanced one another out'. Very roughly speaking, we had the situation where
\be
E_1 \to \infty, \quad E_2 \to \infty, \qquad E_1-E_2 =0,
\ee
as we moved the boundary to infinity. We essentially found a $\delta E \to \infty$ such that $E_1 -\delta E$ and $E_2 - \delta E$ are bounded in the limit. It is convenient to have a means of constructing the renormalised fluxes directly, without having to first consider a finite region and then take a limit. In \cite{BF} this was achieved by a counter-term method for the case of pure anti-de Sitter space. While this method works for an asymptotically AdS space containing a horizon, as we discuss in Appendix \ref{counter}, it does have some drawbacks. Instead we shall introduce a  `renormalised' energy-momentum tensor by twisting, whose fluxes directly give the twisted energy identity we derived for AdS-Schwarzschild.

We start from the following observation:
\be
\frac{1}{f} \nabla_\mu \left[f^2 \nabla^\mu \left( \frac{\psi}{f}\right) \right] = \nabla_\mu \nabla^\mu \psi -\psi \frac{\nabla_\mu \nabla^\mu f}{f} ,
\ee
where $f$ is some smooth non-vanishing function. This can be checked by expanding the left hand side using the Leibniz rule. Motivated by this, we introduce two operators
\ben{twistd}
\tilde{\nabla}_\mu(\cdot) = f \nabla_\mu \left( \frac{1}{f} \cdot \right), \qquad \tilde{\nabla}^\dagger_\mu(\cdot) = -\frac{1}{f} \nabla_\mu \left( f \cdot \right).
\een 
We see that $\tn^\dagger_\mu$ is the formal adjoint of $\tn_\mu$ with respect to the natural $L^2$ inner product on the manifold. Whilst these operators are not derivations, they do commute with raising and lowering indices:
\be
\tn_\mu A_\sigma{}^{\tau_2 \ldots \tau_k} = g_{\sigma \tau_1} \tn_\mu A^{\tau_1 \tau_2 \ldots \tau_k},
\ee
and similarly for $\tn^\dagger$. We may re-write \eq{vas} in terms of the twisted covariant derivatives as
\bea \nonumber
0=\Box_g \psi + \frac{\alpha}{l^2} \psi &=& -\tn_\mu^\dagger \tn^\mu \psi + \left( \frac{\alpha}{l^2} +\frac{\nabla_\mu \nabla^\mu f}{f} \right) \psi \\ &=& -\tn_\mu^\dagger \tn^\mu \psi - V \psi, \label{twlap}
\eea
where we define $V$ in the second line. In the case above, we find that when $f=f(r)$, $V$ coincides with $V(r)$ as defined in \eq{Vdef}. For now, let us focus on \eq{twlap} without prejudice as to the underlying manifold or choice of $f$.

Motivated by the analogy with an untwisted problem, let us define the twisted energy-momentum tensor
\begin{Definition}
Given a smooth non-vanishing function $f$, with associated twisted derivative $\tilde{\nabla}_\mu$, we define the twisted energy-momentum tensor of the Klein-Gordon equation,
\be
\Box_g \psi + \frac{\alpha}{l^2} \psi=0,
\ee
to be the symmetric $2-$tensor
\ben{twem}
\tilde{T}_{\mu\nu}[\psi] = \tn_\mu \psi \tn_\nu \psi - \frac{1}{2} g_{\mu \nu} \left(\tn_\sigma \psi \tn^\sigma \psi + V \psi^2 \right),
\een
where
\be
V =  -\left( \frac{\nabla_\mu \nabla^\mu f}{f} +\frac{\alpha}{l^2} \right) .
\ee
\end{Definition}
This is not an energy momentum tensor in the usual sense as $\nabla_\mu \tilde{T}^{\mu \nu}[\psi] \neq 0$ in general for $\psi$ a solution of \eq{vas}. It does enjoy the following properties:

\begin{Proposition}[Properties of $\tilde{T}_{\mu\nu}$]
\begin{enumerate}[i)]
\item For a general $\phi\in C^2(\mathcal{M})$, we have
\ben{divt}
\nabla_\mu \tilde{T}^\mu{}_\nu[\phi] = \left( - \tn_\mu^\dagger \tn^\mu \phi - V \phi  \right) \tn_\nu \phi+ \tilde{S}_\nu[\phi],
\een
where
\be
\tilde{S}_\nu[\phi] = \frac{ \tn^\dagger_\nu (f V)}{2 f}  \phi^2 + \frac{ \tn^\dagger_\nu f}{2 f} \tn_\sigma \phi \tn^\sigma \phi.
\ee
\item Let $\psi$ be a $C^2$ solution of \eq{vas} and $X$ a smooth vector field. We define
\ben{twcur}
\tilde{J}^X_\mu[\psi] = \tilde{T}_{\mu \nu}[\psi] X^\nu, \qquad  \tilde{K}^X[\psi] ={}^X\pi_{\mu \nu} \tilde{T}^{\mu \nu}[\psi] + X^\nu \tilde{S}_\nu[\psi].
\een
Then
\ben{twcons}
\nabla^\mu \tilde{J}_\mu^X[\psi] = \tilde{K}^X[\psi].
\een
\item Suppose that $f$ is such that $V \geq 0$. Then $\tilde{T}_{\mu \nu}[\psi]$ satisfies the Dominant Energy Condition. In other words, if $X$ is a future pointing causal vector field, then so is $-\tilde{J}^X[\psi]$. 
\end{enumerate}
\end{Proposition}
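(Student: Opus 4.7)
The plan is to verify the three properties by direct tensorial computation, with (i) being the only part that requires genuine work and (ii), (iii) then following almost immediately. First I would rewrite the twisted derivative in the explicit form $\tn_\mu \phi = \nabla_\mu\phi - h_\mu \phi$ with $h_\mu := \nabla_\mu f/f$, and similarly $\tn^\dagger_\mu \phi = -\nabla_\mu \phi - h_\mu \phi$. In this notation the key bookkeeping identity is $\nabla^\mu h_\mu + h_\mu h^\mu = \Box f/f$, which reproduces the relation $-\tn^\dagger_\mu \tn^\mu \phi = \Box \phi - (\Box f/f)\phi$ on scalars already used to arrive at (\ref{twlap}).

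For (i) I would then expand $\nabla^\mu \tilde T_{\mu\nu}[\phi]$ term by term using the product rule. Since $\phi$ is a scalar, second covariant derivatives commute, and all second-derivative-on-$\phi$ terms combine into $(-\tn^\dagger_\mu \tn^\mu \phi)\tn_\nu\phi$. The surviving pieces are quadratic in $\phi$ and $\tn\phi$ paired with first derivatives of $f$ or $V$. Grouping the $\phi^2$ contributions and invoking the bookkeeping identity above produces the $-V\phi\,\tn_\nu\phi$ term; the leftover pieces reorganise into $\tilde S_\nu[\phi]$. Matching the explicit form stated in the Proposition is then a matter of applying the two elementary identities
\[
\frac{\tn^\dagger_\nu f}{2f} \;=\; -h_\nu, \qquad \frac{\tn^\dagger_\nu(fV)}{2f} \;=\; -V h_\nu - \tfrac{1}{2}\nabla_\nu V,
\]
which follow directly from the definition of $\tn^\dagger$.

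Given (i), part (ii) is a one-line consequence. When $\psi$ solves (\ref{vas}), the identity (\ref{twlap}) gives $-\tn^\dagger_\mu \tn^\mu \psi - V\psi = 0$, so the first term in (\ref{divt}) drops out and $\nabla^\mu \tilde T_{\mu\nu}[\psi] = \tilde S_\nu[\psi]$. Combined with the product rule and the symmetry of $\tilde T_{\mu\nu}$,
\[
\nabla^\mu \tilde J^X_\mu[\psi] \;=\; (\nabla^\mu X^\nu)\,\tilde T_{\mu\nu} + X^\nu\,\nabla^\mu \tilde T_{\mu\nu} \;=\; {}^X\pi_{\mu\nu}\tilde T^{\mu\nu} + X^\nu \tilde S_\nu[\psi] \;=\; \tilde K^X[\psi],
\]
as claimed. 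For (iii) I would check the dominant energy condition pointwise in an orthonormal frame $(e_0, e_1, e_2, e_3)$ with $e_0 \parallel X$; writing $a = \tn_0 \phi$ and $b_i = \tn_i \phi$, a short calculation gives
\[
-\tilde J^{X,0} \;=\; \tfrac{1}{2}\bigl(a^2 + |b|^2 + V\phi^2\bigr), \qquad -\tilde J^{X,i} \;=\; -a\, b_i,
\]
so the time component is non-negative when $V \geq 0$, and
\[
\tilde J^X \cdot \tilde J^X \;=\; -\tfrac{1}{4}\bigl(a^2 + |b|^2 + V\phi^2\bigr)^2 + a^2 |b|^2 \;\leq\; 0,
\]
by the elementary AM-GM estimate $(a^2 + |b|^2)^2 \geq 4 a^2 |b|^2$ combined with $V\phi^2 \geq 0$. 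A continuity argument handles null $X$. Hence $-\tilde J^X$ is future-pointing causal.

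The only real obstacle is the bookkeeping in step (i): one must keep careful track of the terms involving $\nabla_\mu f$, $\nabla_\mu V$, and the two anomalous $\tilde S$-type contributions. Recognising $\Box f/f$ as $\nabla\cdot h + |h|^2$ is what allows the second derivatives of $f$ to combine cleanly with the mass term into $V$; everything else is algebra, and (ii), (iii) are essentially corollaries of (i).
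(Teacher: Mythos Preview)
Your proposal is correct and complete. The paper itself does not supply a proof of this Proposition; it is stated and then immediately used, with the implicit understanding that all three parts follow from direct computation. Your approach---writing $\tn_\mu\phi = \nabla_\mu\phi - h_\mu\phi$ with $h_\mu = \nabla_\mu f/f$, expanding the divergence, and tracking terms via the identity $\nabla^\mu h_\mu + h_\mu h^\mu = \Box f/f$---is precisely the natural way to carry this out, and your treatment of (ii) and (iii) as corollaries is appropriate. The orthonormal-frame verification of the dominant energy condition in (iii), together with the continuity argument for null $X$, is standard and correct.
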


Of key importance here is that $\tilde{S}_\nu$ and hence $\tilde{K}_\nu$ depend only on the $1$-jet of $\psi$, i.e. $\psi$ and $\tilde{\nabla}_\mu \psi$.  As a result, $\tilde{J}_\mu^X[\psi]$ is a \emph{compatible current} in the sense of Christodoulou \cite{orange}. Note that if $Y$ is a Killing vector of $g$ which preserves $f$, i.e. $\mathcal{L}_Y(f)=0$, then $\tilde{J}^Y_\mu[\psi]$ is a conserved current. 

In the AdS-Schwarzschild case considered above, $T$ preserves $f$. A very straightforward calculation then establishes that the flux through a spacelike surface is given by
\ben{twflux1}
\int_{\Sigma_t^{[R_1, R_2]}} \tilde{J}^T_\mu n^\mu dS_{\Sigma_t} =  \mathcal{E}(t; [R_1, R_2]),
\een
while for a timelike surface we find
\ben{twflux2}
\int_{\tilde{\Sigma}_{r^{[T_1, T_2]}}} \tilde{J}^T_\mu m^\mu dS_{\tilde{\Sigma}_r} = \mathcal{F}(r; [T_1,T_2]).
 \een
Where $\mathcal{E}, \mathcal{F}$ are as defined in (\ref{twEdef}, \ref{twFdef}) Thus the fluxes (\ref{twflux1}, \ref{twflux2}) together with the energy identity arising from integrating \eq{twcons} give precisely the twisted energy identity \eq{scwenid}. The advantage of introducing the renormalised energy momentum tensor is that all of the fluxes are finite as defined, so we may work with the energy identity on an infinite slab, without having first to consider a finite problem in order to regularise.
\section{Boundedness for general stationary black holes} \label{sec:genstat}

In this section we are going to establish a sharp criterion to determine whether solutions to the Klein-Gordon equation outside a given stationary, aAdS, black hole are bounded in time or not. We will assume that the spacetime has one asymptotically AdS end, one non-degenerate Killing horizon, and no other horizons or infinities. Our results extend easily to multiple horizons and multiple aAdS ends, and to higher dimensions, but we shall not pursue this possibility. We start by defining an asymptotically anti-de Sitter end in such a way that the well posedness result, Theorem \ref{WPThm}, holds as stated.

\begin{Definition}
Let $X$ be a manifold with boundary $\partial X$, and $g$ be a smooth Lorentzian metric on $int(X)$. We say that a connected component $\scri$ of $\partial X$ is an \emph{asymptotically anti-de Sitter end of $(int(X), g)$ with radius $l$} if:
\begin{enumerate}[i)]
\item There exists a smooth function $r$ such that $r^{-1}$ is a boundary defining function for $\scri$.
\item There exist coordinates, $x^\alpha$,  on the slices $r=\textrm{const.}$ such that we have locally
\be
g_{rr} = \frac{l^2}{r^2}+ \O{\frac{1}{r^4}},\qquad g_{r\alpha} = \O{\frac{1}{r^2}}, \qquad g_{\alpha\beta} = r^2 \mathfrak{g}_{\alpha \beta} + \O{1},
\ee
where $\mathfrak{g}_{\alpha\beta}dx^\alpha dx^\beta$ is a Lorentzian metric on $\scri$.
\item $r^{-2} g$ extends as a smooth metric on a neighbourhood of $\scri$.
 \end{enumerate}
We say that $r$ is the asymptotic radial coordinate and $\scri$ is the conformal infinity of this end.
\end{Definition}
Note that $r$ and $\mathfrak{g}$ are not unique. A different choice of $r$ gives rise to a different $\mathfrak{g}$ conformally related to the first. Condition $ii)$ can be weakened to $g_{rr} = l^2 r^{-2} + \O{r^{-3}}, g_{\alpha\beta} = r^2 \mathfrak{g}_{\alpha\beta} + \O{r}, g_{r\alpha} = \O{\frac{1}{r}}$ for well posedness of the massive wave equation\footnote{In fact, this is the condition imposed in \cite{vasy}. While such generalized spacetimes have less prominence in the physics literature, they exhibit interesting propagation of singularities studied in \cite{Pham}. In the Riemannian setting conformally compact manifolds with these asymptotics have been extensively studied, e.g.\ \cite{MR837196, Mazzeo1987260, MR1112625, MR2421542}.}, however one then needs to make a more careful choice of twisting function $f$, i.e. $f r^{3/2-\kappa} \sim 1 + f_1 r^{-1} + \O{r^{-2}}$, for a specific choice of $f_1$ determined by the metric functions. This isn't necessary for the purposes of the metrics we wish to consider here however. Condition $iii)$, sometimes known as weak asymptotic simplicity, is also not necessary for the well posedness of the massive wave equation\footnote{$C^2$ extensibility certainly suffices, $C^{1, \gamma}$ is probably enough}, but is necessary if one wishes to have a full asymptotic expansion for the scalar field near $\scri$. 

Motivated by the discussion of the AdS-Schwarzschild black hole in Section \ref{sec:SchwSch}, we now  introduce the notion of an asymptotically anti-de Sitter black hole. 

\begin{figure}[b]
\begin{picture}(0,0)%
\includegraphics{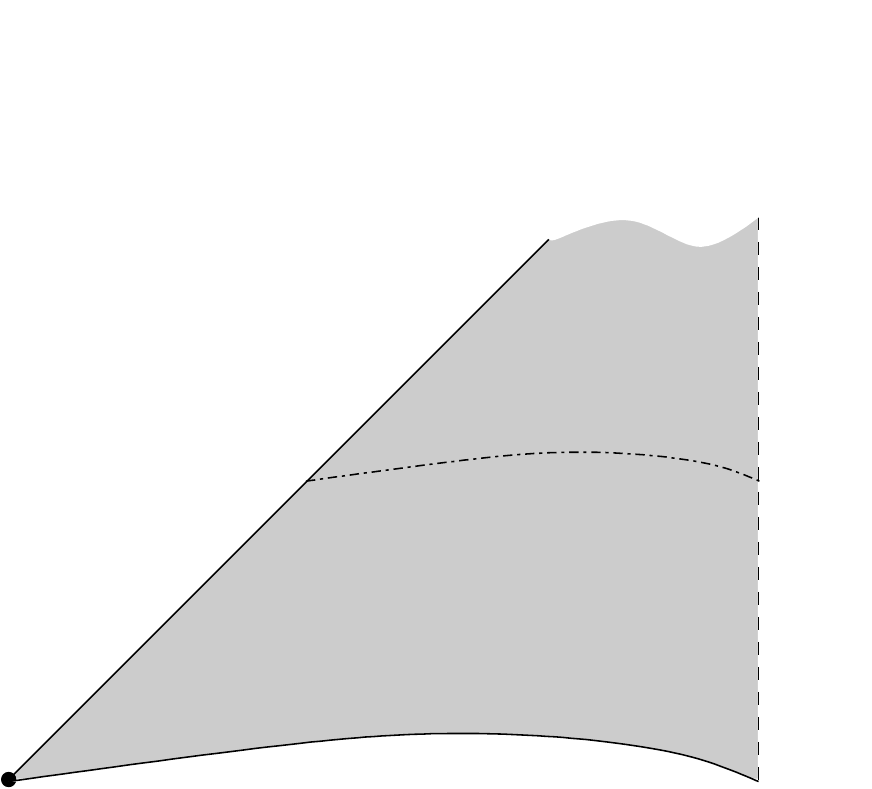}%
\end{picture}%
\setlength{\unitlength}{3947sp}%
\begingroup\makeatletter\ifx\SetFigFont\undefined%
\gdef\SetFigFont#1#2#3#4#5{%
  \reset@font\fontsize{#1}{#2pt}%
  \fontfamily{#3}\fontseries{#4}\fontshape{#5}%
  \selectfont}%
\fi\endgroup%
\begin{picture}(4183,3781)(2360,-5202)
\put(5103,-4440){\makebox(0,0)[lb]{\smash{{\SetFigFont{12}{14.4}{\rmdefault}{\mddefault}{\updefault}{\color[rgb]{0,0,0}$\mathcal{R}$}%
}}}}
\put(4850,-3473){\makebox(0,0)[lb]{\smash{{\SetFigFont{12}{14.4}{\rmdefault}{\mddefault}{\updefault}{\color[rgb]{0,0,0}$\Sigma_t$}%
}}}}
\put(3697,-4866){\makebox(0,0)[lb]{\smash{{\SetFigFont{12}{14.4}{\rmdefault}{\mddefault}{\updefault}{\color[rgb]{0,0,0}$\Sigma$}%
}}}}
\put(6137,-4393){\makebox(0,0)[lb]{\smash{{\SetFigFont{12}{14.4}{\rmdefault}{\mddefault}{\updefault}{\color[rgb]{0,0,0}$\scri$}%
}}}}
\put(3163,-4120){\makebox(0,0)[lb]{\smash{{\SetFigFont{12}{14.4}{\rmdefault}{\mddefault}{\updefault}{\color[rgb]{0,0,0}$\mathcal{H}^+$}%
}}}}
\end{picture}%
\caption{A schematic Penrose diagram for a stationary, asymptotically anti-de Sitter, black hole space time. Each point represents a compact $2$-surface diffeomorphic to $\mathcal{K}$.\label{Pendiag}}
\end{figure}

\begin{Definition}\label{adsbh}
We say that $(\mathcal{R}, \mathcal{H}^+, \Sigma, g, r, T)$ is a stationary, asymptotically anti-de Sitter, black hole space time with AdS radius $l$ if the following holds
\begin{enumerate}[i)]
\item $\mathcal{R}$ is a four dimensional manifold with stratified boundary $\mathcal{H}^+\cup \Sigma$, where $\mathcal{H}^+, \Sigma$ are themselves manifolds with compact, connected, common boundary $\mathcal{K}$.
\item  $\mathcal{R}$ is diffeomorphic to  $[0, \tau) \times [0, \rho) \times \mathcal{K}$, with $\mathcal{H^+} \simeq [0, \tau) \times \{0\} \times \mathcal{K}$ and
$\Sigma \simeq \{0\} \times [0, \rho) \times \mathcal{K}$.
\item $\Sigma$ is everywhere spacelike with respect to $g$, whereas $\mathcal{H}^+$ is null.
\item The spacetime $(\mathcal{R}, g)$ has an asymptotically anti-de Sitter end,  of AdS radius $l$, with conformal infinity $\scri \simeq [0, \tau) \times \{\rho\} \times \mathcal{K}$ and asymptotic radial coordinate $r$ and such that
\be
\mathcal{R} = D^+(\Sigma \cup \scri).
\ee 
We assume $r$ extends to a smooth positive function throughout $\mathcal{R}$.
\item $T$ is a Killing field of $g$ which is timelike on $\mathcal{R}\setminus\mathcal{H}^+$.
\item $T$ is normal to $\mathcal{H}^+$, and $r^{-1} T$ is uniformly bounded in length and tangent to $\scri$. Thus $\mathcal{H}^+$ is a Killing horizon generated by
 $T$, which we assume to be a non-extremal black hole horizon.
 \item If $\varphi^T_t$ is the one-parameter family of diffeomorphisms generated by $T$, then $\mathcal{R}$ is smoothly foliated by  $\varphi_t(\Sigma):=\Sigma_t$, $t\geq 0$.
\end{enumerate}
\end{Definition}
Note that our definition implies a compact horizon topology and so excludes those AdS black holes with  infinite  planar or hyperbolic horizons. We illustrate our definition with a schematic Penrose diagram in Figure (\ref{Pendiag}). We claim that the AdS-Schwarzschild and AdS-Kerr black hole (satisfying the Hawking-Reall bound) contain regions which satisfy this definition. See the introduction to \S\ref{sec:KerrAdS} for a more detailed discussion of this point. We need not restrict to these spacetimes, however. Our methods apply equally to spacetimes such as Kerr-Newman-AdS, the Schwazschild-AdS solutions with toroidal or compact hyperbolic symmetry orbits and even to more exotic spacetimes such as those exhibited in \cite{Martinez:2004nb}.

We note that the Klein-Gordon equation, \eq{vas}, with $\alpha<\frac{9}{4}$, is well posed on such a spacetime with initial data specified on $\Sigma$ and appropriate boundary conditions imposed at  $\scri$, cf.~Theorem \ref{WPThm}.

\subsection{Decomposing the metric}

We will make use of a $3+1$ decomposition of the metric which separates out the $t$ direction. This can be done in more than one way, but for us it will be convenient to make use of an ADM decomposition \cite{Arnowitt:1962hi} (sometimes referred to as Painlev\'e-Gullstrand type coordinates). We define various objects associated to the slicing $\Sigma_t$. Firstly, we denote
\be
\sigma:=-g(T,T) \, .
\ee
We know that $\sigma\geq0$,  vanishing only the horizon. We define $N$ to be the future directed unit normal to $\Sigma_t$
\be
N := n_{\Sigma_t}.
\ee
We define $h$ to be the induced metric on $\Sigma_t$. The lapse $A$ is defined to be
\be
A := g(N, T)^2,
\ee
and the shift vector $W$ is
\be
W := T - \sqrt{A} N.
\ee
By construction $W$ is orthogonal to $N$ and so is tangent to $\Sigma_t$.

Let $U$ be a coordinate patch of $\Sigma$ on which we define coordinates $\tilde{x}^i:U \to \mathbb{R}^3$. Setting $U_t = \varphi_t(U)$, we define $x^i:U_t \to \mathbb{R}^3$ by $x^i = \tilde{x}^i \circ (\varphi_t)^{-1}$. We then have that $(t, x_i)$ define coordinates on $\cup_{t\geq 0} U_t$, and taking an atlas of charts for $\Sigma$ we can cover $\mathcal{R}$ with such coordinate patches. In one of these local coordinate patches, we may write
\be
T = \frac{\partial}{\partial t}, \qquad W = W^i \frac{\partial}{\partial x^i},\qquad  h = h_{ij} dx^i dx^j ,
\ee
and the metric has the local form
\ben{statg}
g = -A dt^2 + h_{ij} (dx^i + W^i dt)(dx^j + W^j dt) \, ,
\een
where we know that $A$, $h_{ij}$ and $W^i$ are independent of $t$. The determinant is given by
\ben{detg}
\sqrt{\abs{\det{g}}} = \sqrt{A\det h}.
\een
We define $h^{ij}$ to be the matrix inverse of $h_{ij}$. The metric in this form can be conveniently inverted to give
\ben{statginv}
g^{-1} = -\frac{1}{A} \left(\frac{\partial}{\partial t}- W^i \frac{\partial}{\partial x^i} \right)\otimes  \left(\frac{\partial}{\partial t}- W^i \frac{\partial}{\partial x^i} \right)+ h^{ij} \frac{\partial}{\partial x^i} \otimes  \frac{\partial}{\partial x^j} .
\een

Now, the requirement that $\Sigma_t$ is spacelike implies that $A>0$ and $h_{ij}$ must be positive definite. These conclusions hold up to and on the horizon, since $N\propto dt^\sharp$ remains timelike there.  Next, the condition that $\partial_t$ be timelike implies that both \mbox{$\sigma=A-W^i W^j h_{ij} > 0$} and that $h^{ij}-\frac{1}{A}W^i W^j$ is positive definite. Now this condition holds everywhere outside the horizon, however we know that $\partial_t$ becomes null at the horizon. When this happens, $A-W^i W^j h_{ij}$ vanishes and $h^{ij}-\frac{1}{A}W^i W^j$ acquires a kernel corresponding to $h_{ij} W^j$. The remaining eigenvalues of $h^{ij}-\frac{1}{A}W^i W^j$ remain positive. Let us now consider how $A-W^i W^j h_{ij}$ vanishes. To do so, we make use of the surface gravity, $\varkappa$, of the Killing horizon. This is defined by the relation, evaluated on the horizon:
\ben{sgrav}
k^\mu \nabla_\mu k^\nu = \varkappa k^\nu.
\een
Where $k^a$ is the null generator of the Killing horizon. In our case, this is $\partial_t$. Since $\mathcal{H}^+$ is assumed to be a non-degenerate black hole horizon, we have that $\varkappa>0$. Making use of Killing's equation we can re-write \eq{sgrav} as
\begin{equation*}
-\partial_\mu (k_\nu k^\nu) = 2 \varkappa k_\mu.
\ee
The $t$ component of this equation is satisfied trivially, while the remaining components give 
\ben{sgrav2}
\partial_i (A- W^j W^k h_{jk}) = 2 \varkappa  W^j h_{ij}.
\een
Now, the right hand side is non-vanishing as a result of our definition: the surface gravity is positive for a non-extremal horizon. Furthermore $W^i$ cannot vanish at the horizon, since this would imply $T$ is parallel to $N$, however by construction $T$ is null and $N$ timelike. We conclude that $W^i\partial_i$ is normal to the horizon (with respect to the induced metric $h_{ij}$).

\subsection{Decomposing the wave operator}

Corresponding to the decomposition of the metric in the previous subsection, we have a $3+1$ decomposition of the Klein-Gordon equation. We can directly infer from \eq{detg} and \eq{statginv} that the wave operator locally takes the form
\ben{weq1}
\Box_g \psi = -\frac{1}{A} \frac{\partial^2 \psi}{\partial t^2}+ \frac{W^i}{A} \frac{\partial^2 \psi}{\partial t \partial x^i}+\frac{1}{\sqrt{A h}} \frac{\partial}{\partial x^i}\left( \sqrt{\frac{ h}{A}} W^i \frac{\partial \psi }{\partial t}\right)+ \frac{1}{\sqrt{A h}} \frac{\partial}{\partial x^i}\left( \sqrt{\frac{ h}{A}} a^{ij} \frac{\partial \psi }{\partial x^j}\right) \, ,
\een
where we introduce the tensor $a \in T\Sigma\odot T\Sigma$, given locally by $a^{ij} = A h^{ij} - W^i W^j$. Ignoring for a moment the terms involving time derivatives, the key point  here is that the purely spatial part of the operator \emph{is elliptic}, since $a^{ij}$ is positive definite away from the horizon. The ellipticity degenerates on the horizon, but in a controlled fashion which will allow us to handle the horizon as a boundary. We shall in fact be interested in the massive wave equation \eq{vas}. It will be convenient to write this in the following form
\ben{weq}
TT\psi - B T\psi+ L \psi = 0 ,
\een
where $B$ and $L$ are geometric differential operators defined on $\Sigma$. They are given in local coordinates by
\ben{Bdef}
B u = {W^i} \frac{\partial u}{\partial x^i}+\sqrt{\frac{A}{ h}} \frac{\partial}{\partial x^i}\left( \sqrt{\frac{ h}{A}} W^i u\right),
\een
and
\ben{Ldef}
L u = - \sqrt{\frac{A}{ h}} \frac{\partial}{\partial x^i}\left( \sqrt{\frac{ h}{A}} a^{ij} \frac{\partial u }{\partial x^j}\right) - \frac{\alpha}{l^2}A u \, .
\een
As an aside, note that we can immediately see that the equation in this form will have a nice conserved energy associated to it, since on multiplying by $\psi_t \sqrt{h/A}$ and integrating over $\Sigma_t$, the term coming from $B$ gives a pure boundary term which vanishes at infinity and gives the flux across the horizon. We get a good sign for this since $W^i$ is always outwards\footnote{i.e. directed towards the interior of $\mathcal{R}$} directed for a black hole horizon, a consequence of \eq{sgrav2}, together with the fact that $\varkappa >0$. This is no surprise since we have chosen our definition in such a way that the horizon is a \emph{black hole} horizon, rather than a \emph{white hole} horizon. As such, we expect the effect of the horizon to be to extract energy from disturbances propagating in the black hole exterior. The boundary term from integrating the term $\psi_t L \psi$ vanishes at the horizon and, provided we make suitable assumptions, near infinity.

\subsection{The structure of $L$ at the horizon and infinity \label{waveop}}

To see the structure of $L$ near the horizon, it will be convenient to introduce a set of coordinates. Let us first construct gaussian normal coordinates for the induced metric $h$ such that $\rho=0$ is the horizon and $x^A$ are coordinates on the surfaces $\rho=\textrm{const}.$ generated by pushing $\rho=0$ forward along geodesics normal to the horizon. We use here that the horizon is compact. In these coordinates we have for $0 \leq \rho < \epsilon$
\be
h^{\rho \rho} = 1, \quad h^{\rho A}=0, \quad h^{AB} = \sigma^{AB}(\rho, x^A) \, .
\ee
We also know that (use (\ref{sgrav2}))
\be
W^\rho = \sqrt{A}, \quad W^A=0, \quad \textrm{ on } \rho=0.
\ee
The tensor $a^{ij}$ has components
\be
a^{\rho \rho} = A - (W^\rho)^2, \qquad a^{\rho A} = - W^\rho W^A, \qquad a^{AB} = A \sigma^{AB}-W^A W^B.
\ee
Thus, in this set of coordinates, near the horizon we have the following expansion for the functions occurring in $L$:
\ben{nrhrn}
\begin{array}{rclcrcl}
A &=& A_0 + \O{\rho}, & \qquad & h &=& h_0 + \O{\rho},\\
a^{\rho \rho} &=& C \rho + \O{\rho^2},& \qquad&  a^{A B} &=& A_0 \sigma_0^{A B}+ \O{\rho}, \\
&&& a^{\rho A} = \O{\rho},&&&
\end{array}
\een
where $A_0, h_0, C$ are positive and ${\sigma}_0^{AB}$ is uniformly positive definite on the horizon.

Near infinity we introduce the coordinate $s=\frac{l}{r}$, so that $s=0$ is the conformal boundary. Making use of the AdS asymptotics assumed, we find that the functions appearing in $L$ have the following expansion near $\scri$:
\ben{nrinf}
\begin{array}{rclcrcl}
A &=& s^{-2} \tilde A_0 + \O{1}, & \qquad & h &=& s^{-6} \tilde h_0 + \O{s^{-4}},\\
 a^{ss} &=& \tilde A_0 +\O{s^2}, & \qquad&  a^{AB} &=& \tilde A_0 \tilde\sigma_0^{AB}+ \O{s^2}, \\
&&& a^{s A} = \O{s^2},&&&
\end{array}
\een
where $\tilde A_0, \tilde h_0$ are positive and $\tilde{\sigma}_0^{AB}$ is uniformly positive definite on the conformal boundary.

We sum up the conclusions of this and the previous section in the following Lemma:
\begin{Lemma}  \label{lem41}
\begin{enumerate}[i)]
\item Let $(\mathcal{R}, \mathcal{H}^+, \Sigma, g, r,T)$ be a stationary, asymptotically anti-de Sitter, black hole space time of AdS radius $l$. Then the Klein-Gordon equation
\be
\Box_g \psi + \frac{\alpha}{l^2} = 0,
\ee
can be decomposed as
\be
TT\psi - B T\psi + L \psi = 0,
\ee
where $B$ and $L$ are operators on $\Sigma$ given by \eq{Bdef}, \eq{Ldef}. 

\item A neighbourhood of $\mathcal{H}^+\cap \Sigma$ can be covered with coordinate patches $x^i = (\tilde{\rho}, x^A)$ such that the functions appearing in $L$ locally take the form \eq{nrhrn}.

\item A neighbourhood of $\scri \cap \Sigma$ can be covered with coordinate patches $x^i = (s, x^A)$ such that $\scri = \{s=0\}$ and such that the functions appearing in $L$ locally take the form \eq{nrinf}.
\end{enumerate}
\end{Lemma}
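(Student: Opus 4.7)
The plan is to treat the three items of the Lemma in turn, each being essentially computational. For item (i), I would start from the local expressions (\ref{statg}), (\ref{detg}), (\ref{statginv}) for the metric, its determinant, and its inverse, and substitute these into the covariant form $\Box_g\psi = |\det g|^{-1/2}\partial_\mu(|\det g|^{1/2}g^{\mu\nu}\partial_\nu\psi)$. Grouping the resulting terms according to the number of $\partial_t$-derivatives reproduces equation (\ref{weq1}). Multiplying the Klein-Gordon equation through by $-A$ and rearranging then produces the stated form $TT\psi - BT\psi + L\psi = 0$, and a direct comparison confirms that $B$ and $L$ coincide with the operators defined in (\ref{Bdef}), (\ref{Ldef}).

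For item (ii), I would construct Gaussian normal coordinates for the induced Riemannian metric $h$ on $\Sigma$ based at the compact bifurcation surface $\mathcal{K}=\mathcal{H}^+\cap\Sigma$; compactness of $\mathcal{K}$ is what guarantees a uniform tubular neighbourhood. The stated expressions for $h^{ij}$ in (\ref{nrhrn}) then follow from the defining properties of Gaussian coordinates. For the shift and lapse at $\rho=0$, I would invoke the two observations already in the text: the nullity of $T$ on $\mathcal{H}^+$ forces $\sigma = A - h_{ij}W^iW^j = 0$ there, while (\ref{sgrav2}) combined with $\varkappa>0$ makes $W^i\partial_i$ normal to the horizon, so in Gaussian coordinates $W^A|_{\rho=0}=0$ and $W^\rho|_{\rho=0}=\sqrt{A_0}$. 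Consequently $a^{\rho\rho}=\sigma$ vanishes at $\rho=0$, and its $\rho$-derivative at $\rho=0$ equals $2\varkappa\sqrt{A_0}>0$ by (\ref{sgrav2}), producing the linear vanishing with the positive coefficient $C$. The expansions of $a^{\rho A}=-W^\rho W^A$ and $a^{AB}=Ah^{AB}-W^AW^B$ then follow from expanding each factor in $\rho$ and using smoothness of the underlying metric.

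For item (iii), I would introduce $s=l/r$ on $\Sigma$ near $\scri\cap\Sigma$, together with coordinates $x^A$ pulled back from a boundary chart to the level sets of $s$, so that $\scri$ is locally $\{s=0\}$. The asymptotic AdS conditions in Definition \ref{adsbh} give $g_{rr}=l^2/r^2+O(r^{-4})$, $g_{r\alpha}=O(r^{-2})$, $g_{\alpha\beta}=r^2\mathfrak{g}_{\alpha\beta}+O(1)$; changing to $(s,x^A)$ coordinates and restricting to constant-$t$ slices gives the claimed orders of $h_{ij}$, whence the expansion of $\det h$. The lapse and shift are controlled by the assumption that $r^{-1}T$ is uniformly bounded in length and tangent to $\scri$: this forces $\sigma=-g(T,T)=O(r^2)=O(s^{-2})$ and provides enough decay on the components $W^s,W^A$ that, combining with the expansion of $h^{ij}$, the tensor $a^{ij}=Ah^{ij}-W^iW^j$ takes the form (\ref{nrinf}); in particular $a^{sA}=O(s^2)$ because both $Ah^{sA}$ and $W^sW^A$ are of that order.

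The main obstacle I anticipate is the bookkeeping in item (iii): one has to propagate the hypotheses on $g$ and $T$ through the ADM decomposition carefully enough to obtain sharp individual asymptotics for $W^s$ and $W^A$, so that the cancellations producing the stated orders of $a^{sA}$ and of the correction to $a^{AB}$ go through. Everything else reduces to a combination of the ADM algebra used in item (i) and the elementary structural observations of item (ii).
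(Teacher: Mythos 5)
Your proposal is correct and follows essentially the same route as the paper, which proves this Lemma implicitly through the preceding subsections: the ADM substitution into the covariant wave operator for item (i), Gaussian normal coordinates off the compact surface $\mathcal{K}$ combined with the surface-gravity relation (\ref{sgrav2}) for item (ii), and the change of variable $s=l/r$ together with the asymptotically AdS conditions for item (iii). The only point to watch is that $a^{\rho\rho}=A-(W^\rho)^2$ equals $\sigma$ only \emph{on} the horizon (they differ by $\sigma_{AB}W^AW^B=\O{\rho^2}$ since $W^A$ vanishes at $\rho=0$), but this does not affect your computation of the leading coefficient $C=2\varkappa\sqrt{A_0}>0$.
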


\subsection{Boundedness of solutions to the Klein-Gordon equation}

We will now interest ourselves in the eigenvalues of $L$, i.e. functions $u$ which satisfy suitable boundary conditions (regularity at the horizon, Dirichlet, Neumann or Robin at infinity) together with the equation
\be
Lu = \omega u \, .
\ee
The natural $L^2$ space associated with this operator  has the norm
\ben{Ll2def}
\norm{u}{\Ll^2(\Sigma)}^2 = \int_{\Sigma} u^2  \sqrt{\frac{1}{A}}d S_{\Sigma} \, .
\een
Let us pick a smooth, positive twisting function $g: int (\Sigma) \to \mathbb{R}$. In a neighbourhood
of the horizon we require $g = \abs{\log \rho}$, for $\rho$ as in \S\ref{waveop}, while in a neighbourhood of infinity, we require $g = r^{-\frac{3}{2}+\kappa}$. Associated with the twisting function we define the norm:
\ben{Hhnorm}
\norm{u}{\Hh^1(\Sigma, \kappa)}^2 = \int_\Sigma \left(a^{ij} \tilde{\nabla}_i u \tilde{\nabla}_j u + u^2\right ) \sqrt{\frac{1}{A}} dS_\Sigma,
\een
where the derivatives are twisted by $g$. We define $\Hh^1(\Sigma, \kappa)$ to be the space of measurable functions whose derivatives exist in a weak sense and for which this norm is finite. The space $\dot\Hh^1(\Sigma, \kappa)$ is the completion of the set of smooth functions which vanish in a neighbourhood of the horizon, and the space $\dot\Hh^1_0(\Sigma, \kappa)$ is the completion of $C_c^\infty(\Sigma)$ in the norm \eq{Hhnorm}.

After these preliminaries, we note that we can write our operator $L$ as
\ben{gebu}
L u = \tilde{\nabla}_i^\dagger (a^{ij} \tilde{\nabla}_j u) + V u,
\een
where $\tilde{\nabla}_i^\dagger$ is the formal adjoint of $\tilde{\nabla}_i$ with respect to the $\Ll^2(\Sigma)$ inner product. 
We immediately observe that due to our choice of twisting function Lemma \ref{lem41} implies that
\ben{hebu}
V := \frac{L g}{g},
\een
is smooth in the interior of $\Sigma$ and bounded both at the horizon and at infinity.
Finally, we define the bilinear form associated with (\ref{gebu}) to be
\ben{bilin}
B[u, v] := \int_\Sigma \left(a^{ij} \tilde{\nabla}_i u \tilde{\nabla}_j v + V uv\right ) \sqrt{\frac{1}{A}} dS_\Sigma+ \int_{\scri \cap \Sigma} r^{3-2 \kappa} u v\ \beta d \eta.
\een
Here $d\eta$ is a measure on the conformal boundary defined by:
\be
d\eta = \lim_{r \to \infty} \left(\frac{\sqrt{A}}{l r^3} dS_{\mathcal{K}_r}\right),
\ee
where $dS_{\mathcal{K}_r}$ is the measure induced on the surface $r = \textrm{const.}$ by the metric $h$ on $\Sigma$.  One can check that this gives a finite measure on the conformal boundary. The limit of $r^{3-2\kappa} u v $ as $r \to \infty$ is integrable over $\scri \cap \Sigma$ with this measure, provided $u, v \in \Hh^1(\Sigma, \kappa)$.\footnote{We abuse notation slightly by inserting the limit directly into the integral at infinity.} For Dirichlet or Neumann boundary conditions, $\beta$ is understood to vanish. We can, in the usual way, define weak solutions etc.\ making use of this inner product.

The key insight, formulated in Proposition \ref{prop1} below, is that the standard theory of eigenvalues of a self-adjoint elliptic operator on a finite domain can be extended to the operator $L$. We impose boundary conditions of regularity at the horizon (equivalent in the weak formulation to membership of $\dot\Hh^1(\Sigma, \kappa)$) and either Dirichlet, Neumann or Robin boundary conditions at infinity. The key ingredient for Proposition \ref{prop1} is a generalisation of the Rellich-Kondrachov theorem to the twisted Sobolev spaces:
\begin{Theorem}\label{RKthm}
\begin{enumerate}[i)]
\item The space $\dot\Hh^1(\Sigma, \kappa)$ embeds compactly into $\Ll^2(\Sigma)$ for $0 <\kappa <1$.
\item The space $\dot\Hh^1_0(\Sigma, \kappa)$ embeds compactly into $\Ll^2(\Sigma)$ for $\kappa>0$.
\end{enumerate}
\end{Theorem}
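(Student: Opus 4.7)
The plan is to adapt the classical Rellich-Kondrachov argument to the two non-compact ends of $\Sigma$, namely the horizon $\mathcal{H}^+\cap\Sigma$ and conformal infinity $\scri\cap\Sigma$, by dedicated Hardy-type tail estimates suited to the twisted Sobolev framework.

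Given a bounded sequence $\{u_n\}$ in $\dot{\Hh}^1(\Sigma, \kappa)$ (resp.\ in $\dot{\Hh}^1_0(\Sigma, \kappa)$), I first extract a weakly convergent subsequence $u_n \wto u$ by reflexivity. Decompose $\Sigma = \Sigma_H \cup \Sigma_I \cup \Sigma_\infty$, where $\Sigma_H = \{\rho < \delta\}$ uses the horizon coordinate of Lemma \ref{lem41}.ii, $\Sigma_\infty = \{r > R\}$, and $\Sigma_I$ is the compact remainder. On $\Sigma_I$ the coefficients $a^{ij}$ are uniformly elliptic, the twisting function $g$ has positive upper and lower bounds, and the weight $\sqrt{1/A}$ is comparable to a constant; consequently the $\dot{\Hh}^1$-norm restricted to $\Sigma_I$ is equivalent to the standard $H^1(\Sigma_I)$-norm, and classical Rellich-Kondrachov on the compact manifold-with-boundary $\Sigma_I$ yields a further subsequence converging strongly in $\Ll^2(\Sigma_I)$.

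The core of the argument is the uniform tail estimate: for every $\epsilon > 0$ there exist $\delta, R^{-1}$ small enough that $\|u_n\|_{\Ll^2(\Sigma_H)}^2 + \|u_n\|_{\Ll^2(\Sigma_\infty)}^2 < \epsilon$ for all $n$. These are Hardy-type inequalities, and the choices of twisting function $g = \abs{\log \rho}$ near the horizon and $g \sim r^{-3/2 + \kappa}$ near infinity are precisely what make them available. Near the horizon, substituting $u = g v$ and exploiting $a^{\rho\rho} \sim C\rho$ from Lemma \ref{lem41}.ii together with the compactness of $\mathcal{K}$, the matter reduces to a one-dimensional weighted inequality
\[
\int_0^\delta (\log\rho)^2 v^2 \, d\rho \leq C(\delta) \int_0^\delta \rho(\log\rho)^2(\partial_\rho v)^2 \, d\rho, \qquad C(\delta) \to 0 \text{ as } \delta \to 0,
\]
which follows from the fundamental theorem of calculus and Cauchy-Schwarz (integrating the factor $[\rho(\log\rho)^2]^{-1}$, which is integrable at $\rho = 0$), using crucially that functions in $\dot{\Hh}^1$ have vanishing trace on the horizon. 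Near infinity, writing $u = g v$ with $g \sim s^{3/2-\kappa}$ in $s = l/r$ and invoking the aAdS asymptotics of Lemma \ref{lem41}.iii reduces the twisted norms to weighted Sobolev norms in $s$; an analogous Hardy inequality then supplies the tail estimate, with the boundary contribution at the inner edge of $\Sigma_\infty$ absorbed by the strong convergence on $\Sigma_I$ via a twisted trace inequality (cf.\ Lemma 4.2.1 of \cite{Warnick:2012fi}).

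Combining these ingredients via a standard $3\epsilon$ argument delivers strong $\Ll^2(\Sigma)$ convergence, proving both claims. The main obstacle is the Hardy inequality near infinity in part (i), where one must permit nontrivial Neumann-type boundary values at $\scri$: the constraint $\kappa < 1$ enters precisely here, as it is what forces the Hardy constant to shrink even in the absence of a Dirichlet condition at infinity. In part (ii), functions in $\dot{\Hh}^1_0$ vanish at $\scri$, so the corresponding one-dimensional Hardy inequality goes through for any $\kappa > 0$, and the whole argument simplifies accordingly.
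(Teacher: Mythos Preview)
Your approach is correct but takes a genuinely different route from the paper's. The paper does not use tail estimates at all: instead it localises via a partition of unity to half-ball model problems near $\mathcal{H}^+$ and near $\scri$, and on each model proves a twisted Poincar\'e--Wirtinger inequality on small cubes,
\[
\|u\|_{L^2(\Pi)}^2 \le \frac{(u,f)_{L^2(\Pi)}^2}{\|f\|_{L^2(\Pi)}^2} + C(L)\,\|\tilde\nabla u\|_{L^2(\Pi)}^2,
\]
with $C(L)\to 0$ as the cube size $L\to 0$ (this is where the specific weights enter, packaged as abstract ``Property~B''). After partitioning into finitely many cubes, the Wirtinger term is killed by weak convergence while the gradient term is made small by shrinking $L$. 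Your argument instead shows directly that the $\Ll^2$-mass near each end is uniformly small on the unit ball of $\dot\Hh^1$ (resp.\ $\dot\Hh^1_0$): the one-dimensional computations behind this do go through, giving $C(\delta)\sim \delta(1-\log\delta)$ at the horizon and, at infinity, $\|u\|_{\Ll^2(\{s<s_0\})}^2 \lesssim s_0^{2-2\kappa}|v(s_0)|^2 + s_0^{2}\|\partial_s v\|^2$, with the trace $|v(s_0)|$ controlled uniformly by the $\Hh^1$-norm. Your method is more direct and lighter for this specific pair of weights; the paper's framework is more modular (it isolates exactly which properties of $(w,f)$ are needed) and does not rely on a uniform tail-smallness that could fail for other admissible weights. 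One small wording issue: you say the boundary term at the inner edge of $\Sigma_\infty$ is ``absorbed by the strong convergence on $\Sigma_I$'', but what you actually need (and have) is a uniform bound on $|v(s_0)|$ via the trace inequality, multiplied by the vanishing prefactor $s_0^{2-2\kappa}$; strong $\Ll^2$-convergence on $\Sigma_I$ alone would not control the trace.
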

\begin{proof}
See \S \ref{RKthmproof}.
\end{proof}
\begin{Proposition} \label{prop1} For the operator $L$, with either Dirichlet  (for $\alpha <9/4$), Neumann or Robin (for $5/4<\alpha<9/4$) boundary conditions the eigenvalues $\omega$ and their associated eigenfunctions $u$ satisfy:
\begin{enumerate}[(i)]
\item The eigenvalues are real and bounded below.
\item The eigenvalues form a countable sequence $\omega_1\leq \omega_2 \leq \ldots$ which is discrete.
\item Each $\omega_i$ has finite multiplicity.
\item $\omega_i\to \infty$ as $i \to \infty$.
\item \label{evbasis} The eigenfunctions $u_n$ form an orthonormal basis for  $\Ll^2(\Sigma)$.
\item The lowest eigenvalue, $\omega_1$,  can be expressed by the following Rayleigh-Ritz type formulae:
\be
\omega_1 = \min_{u \in \dot\Hh^1_0(\Sigma, \kappa), \norm{u}{\L^2(\Sigma)}=1} B[u,u],
\ee
for the Dirichlet case, and
\be
\omega_1 = \min_{u \in \dot\Hh^1(\Sigma, \kappa), \norm{u}{\L^2(\Sigma)}=1} B[u,u],
\ee
for the Neumann or Robin case.

\item $\omega_1$ has multiplicity one.
\item $u_1$ vanishes only at infinity and is smooth on $\Sigma$, with all derivatives bounded up to the horizon.
\item Given $\epsilon>0$, there exists a $\delta>0$ such that if $B'$ is a symmetric bilinear form satisfying
\be
\abs{B[u,u]-B'[u,u]}<\delta \norm{u}{\Hh^1(\Sigma, \kappa)}^2,
\ee
for all $u \in \dot\Hh^1(\Sigma, \kappa)$, then\footnote{For Dirichlet boundary conditions, replace  $\dot\Hh^1(\Sigma, \kappa)$ with $\dot\Hh^1_0(\Sigma, \kappa)$} 
\be
\omega'_1 = \min_{u \in \dot\Hh^1(\Sigma, \kappa), \norm{u}{\L^2(\Sigma)}=1} B'[u,u],
\ee
exists  and satisfies
\be
\abs{\omega_1-\omega_1'}<\epsilon.
\ee
In particular, this implies that for the Kerr black hole, $\omega_1$ depends continuously on $m$, $l$, $a$ and, if relevant, the (time independent) Robin function, understood as an element of $C^0(\scri)$.
\end{enumerate}
\end{Proposition}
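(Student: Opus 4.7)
The plan is to convert $L+\lambda I$ (for sufficiently large $\lambda$) into an invertible self-adjoint operator on $\Ll^2(\Sigma)$ whose inverse is compact, and then apply the spectral theorem, using Theorem~\ref{RKthm} as the crucial compactness input.

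\textbf{Step 1: Coercivity.}  I would first establish a G\aa rding-type inequality: there exist $c>0$ and $C\in\mathbb{R}$ such that
\begin{equation*}
B[u,u] + C\norm{u}{\Ll^2(\Sigma)}^2 \geq c\,\norm{u}{\Hh^1(\Sigma,\kappa)}^2
\end{equation*}
for all admissible $u$.  The interior bulk term is coercive in the twisted gradient by the (uniform) positive-definiteness of $a^{ij}$ weighted correctly — at the horizon the twisting function $g=|\log\rho|$ is designed precisely to cancel the $\rho$-degeneracy of $a^{\rho\rho}$ in (\ref{nrhrn}), and at infinity $g=r^{-3/2+\kappa}$ is the choice forced by the well-posedness theorem.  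The bounded potential $V$ defined in (\ref{hebu}), via Lemma~\ref{lem41}, is absorbable into the $C\|u\|_{\Ll^2}^2$ term.  For the Robin case, the boundary term $\int_{\scri\cap\Sigma} r^{3-2\kappa}u^2\beta\,d\eta$ is controlled by the twisted trace theorem of \cite{Warnick:2012fi}, which requires $0<\kappa<1$; this is where the Neumann/Robin range $5/4<\alpha<9/4$ is genuinely used.

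\textbf{Step 2: Spectral theorem via compactness.}  By Lax--Milgram applied to $B[\cdot,\cdot]+C(\cdot,\cdot)_{\Ll^2}$ on the appropriate Hilbert space (which is $\dot\Hh^1_0(\Sigma,\kappa)$ for Dirichlet, and $\dot\Hh^1(\Sigma,\kappa)$ for Neumann/Robin), for any $F\in\Ll^2(\Sigma)$ there is a unique weak solution $u$ of $(L+C)u=F$.  Define $K:\Ll^2\to\Ll^2$ by $KF=u$; $K$ factors through $\dot\Hh^1(\Sigma,\kappa)$, so by Theorem~\ref{RKthm} it is compact.  Symmetry of $B$ makes $K$ self-adjoint, and positivity of $B+C(\cdot,\cdot)$ makes it positive.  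The spectral theorem for compact self-adjoint operators gives $K$ a decreasing sequence of positive eigenvalues $\mu_i\to 0$ with $\Ll^2$-orthonormal eigenbasis; translating back via $\omega_i=\mu_i^{-1}-C$ yields (i)--(v).  The Rayleigh--Ritz characterisation (vi) is the standard consequence of the orthonormal expansion.

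\textbf{Step 3: Simplicity, positivity, regularity.}  For (vii) and (viii), the substitution $w=u/g$ converts $B$ on $u$ into a strictly elliptic, non-degenerate form on $w$ on $\mathrm{int}(\Sigma)$ with weight $g^2\sqrt{1/A}$.  In these variables the standard observation that $|w|$ is also a minimiser (since $|\partial|w||=|\partial w|$ a.e.) together with the strong maximum principle / Harnack inequality for non-degenerate second order elliptic operators shows $|u_1|=g|w_1|$ does not vanish in $\mathrm{int}(\Sigma)$, whence $u_1$ is of definite sign and $\omega_1$ is simple.  Smoothness in the interior is immediate from interior elliptic regularity applied to $Lu_1=\omega_1 u_1$; boundedness up to the horizon follows from the structure (\ref{nrhrn}) by a local elliptic argument adapted to the degenerate-at-boundary operator (the twisting by $|\log\rho|$ is tailored to make the equation regular in the $w$ variables), and behaviour at infinity is governed by the asymptotic expansion from Theorem~\ref{WPThm}.

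\textbf{Step 4: Continuous dependence.}  Statement (ix) is a Courant--Fischer perturbation argument: if $|B[u,u]-B'[u,u]|<\delta\|u\|_{\Hh^1}^2$ and $B$ is coercive as in Step 1, then $B'$ is also coercive for $\delta$ small and the same minimax construction produces $\omega_1'$ with $|\omega_1-\omega_1'|<\epsilon(\delta)$.  For the Kerr family one checks that the coefficients $a^{ij}$, $A$, $h$ and the boundary data $\beta$ enter the bilinear form continuously as functions of $(m,l,a,\beta)$, in such a way that the $\Hh^1$-relative difference of the forms is controlled.

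\textbf{Main obstacle.}  I expect the hardest single ingredient to be Step 1 near the horizon: the operator $L$ truly degenerates there (the eigenvalue of $a^{ij}$ in the $W^i$-direction vanishes linearly), and showing that $g=|\log\rho|$ is the correct choice to produce a bounded $V$ and simultaneously a coercive bilinear form requires delicate matching with the structure (\ref{nrhrn}).  Closely behind is verifying that simplicity and strict positivity of $u_1$ survive the degeneration at the horizon, which is why the reduction to the non-degenerate elliptic problem in the variable $w=u/g$ is essential before invoking Harnack.
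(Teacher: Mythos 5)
Your proposal is correct and follows essentially the same route as the paper: the authors simply cite the standard elliptic eigenvalue theory (Evans Ch.~6, Ladyzhenskaya Ch.~II) with Theorem~\ref{RKthm} supplying the compact embedding, the strong maximum principle plus elliptic regularity for (vii)--(viii), and the variational characterisation of $\omega_1$ for (ix) --- exactly the Lax--Milgram/compact-resolvent/maximum-principle/Courant--Fischer chain you spell out.
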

\begin{proof}
The proof of $(i)$-$(vi)$ follows along exactly as in \cite[Chap. 6]{Evans} or \cite[Chap. II]{Lady}. An application of the strong maximum principle, together with an elliptic regularity result as in \cite[\S5.2]{Warnick:2012fi} gives $(vii),(viii)$. The final part follows from the fact that $\omega_1$ can be expressed as the minimum of a bilinear form.
\end{proof}

We remark that in the case that $\omega_1$ has a sign, this sign is independent of the spacelike slicing with respect to which we decompose the wave operator to extract the elliptic operator $L$: 

\begin{Corollary}\label{cor42}
Let $\Sigma'$ be a surface to the future of $\Sigma$, which is smoothly homotopic through spacelike surfaces to $\Sigma$ and such that $(\overline{J^+(\Sigma')}\cap\mathcal{R}, \overline{J^+(\Sigma')}\cap \mathcal{H}^+, \Sigma', g, r, T)$ is a stationary, asymptotically anti-de Sitter black hole spacetime. Let $\{\Sigma'_t:= \varphi_t(\Sigma')\}$ be the associated spacelike slicing of $\overline{J^+(\Sigma')}\cap\mathcal{R}$. This slicing has an associated elliptic operator $L'$. Then the smallest eigenvalue of $L'$, $\omega_1'$, must have the same sign as $\omega_1$, the smallest eigenvalue of $L$. 
\end{Corollary}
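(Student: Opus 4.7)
My plan is to prove the corollary in two moves. First, I would characterize the condition $\omega_1 = 0$ in a manifestly slicing-independent way: it is equivalent to the existence of a $T$-invariant solution of the wave equation on $\mathcal{R}$ which is strictly positive on $\mathcal{R} \setminus \scri$ and satisfies the prescribed boundary conditions at $\scri$. Second, I would invoke a continuity argument along the given smooth homotopy of spacelike surfaces to rule out the possibility that $\omega_1$ and $\omega_1'$ have strictly opposite signs.

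For the forward direction of the geometric characterization, assuming $\omega_1 = 0$, take $u_1$ to be the first eigenfunction, which is strictly positive on the interior of $\Sigma$ by Proposition \ref{prop1}(viii). The extension $\psi(\varphi_t(p)) := u_1(p)$ is $T$-invariant, so the decomposition $TT\psi - BT\psi + L\psi = 0$ collapses to $Lu_1 = 0$, which holds, and the boundary conditions at $\scri$ are preserved because $\scri$ (and, in the Robin case, $\beta$) is $T$-invariant. For the converse, given a positive $T$-invariant solution $\psi$, the restriction $u := \psi|_\Sigma$ is a positive element of $\ker L$, so $\omega_1 \leq 0$; if $\omega_1 < 0$ then the positive first eigenfunction $u_1$ would be $\Ll^2(\Sigma)$-orthogonal to $u$ by self-adjointness of $L$, which is impossible since both are strictly positive in the interior. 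Hence $\omega_1 = 0$. Running the same argument with $L'$ shows $\omega_1 = 0 \Leftrightarrow \omega_1' = 0$.

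For the continuity argument, let $\{\Sigma_\tau\}_{\tau \in [0,1]}$ be the hypothesized smooth homotopy of spacelike slices and let $\Phi_\tau : \Sigma \to \Sigma_\tau$ be the induced diffeomorphism. Pulling back the ADM coefficients $A_\tau, W_\tau^i, h_\tau, a_\tau^{ij}$ via $\Phi_\tau$ and choosing a twisting function $g_\tau$ with the prescribed asymptotic profile ($g_\tau \sim |\log \rho|$ near $\mathcal{H}^+$ and $g_\tau \sim r^{-3/2+\kappa}$ near $\scri$) that varies continuously in $\tau$, I obtain a continuous family of bilinear forms on the fixed twisted Sobolev space $\dot{\Hh}^1(\Sigma, \kappa)$ (or $\dot{\Hh}^1_0$ in the Dirichlet case). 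Proposition \ref{prop1}(ix) then yields continuity of $\tau \mapsto \omega_1(\tau)$. If $\omega_1(0)$ and $\omega_1(1)$ had strictly opposite signs, the intermediate value theorem would furnish $\tau_*$ with $\omega_1(\tau_*) = 0$; by the first step this produces a positive $T$-invariant solution $\psi$ whose restriction to \emph{every} $\Sigma_\tau$ lies in $\ker L_\tau$, forcing $\omega_1(\tau) \equiv 0$ and contradicting $\omega_1(0) \neq 0$.

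The principal obstacle is the continuity step: the twisted Sobolev spaces $\dot{\Hh}^1(\Sigma_\tau, \kappa)$ naively depend on $\tau$, so Proposition \ref{prop1}(ix) does not apply out of the box. Pulling back via $\Phi_\tau$ to the fixed space $\dot{\Hh}^1(\Sigma, \kappa)$ resolves this, but requires $g_\tau$ to be constructed so that its prescribed asymptotics are preserved simultaneously for every $\tau$ and so that the pulled-back coefficients of the bilinear form depend continuously on $\tau$ in the relevant operator norm — a somewhat delicate but routine exercise given the smoothness of the homotopy and the fact that the asymptotic structures near $\mathcal{H}^+$ and $\scri$ depend only on the geometric data rather than on the particular slice.
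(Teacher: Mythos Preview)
Your proposal is correct and follows essentially the same strategy as the paper: continuity of the lowest eigenvalue along the homotopy, the intermediate value theorem to produce a slice with vanishing $\omega_1$, the resulting $T$-invariant solution, and the observation that such a solution restricts to the kernel of $L_\tau$ on every slice, yielding a contradiction. The paper's own proof is considerably terser---it simply asserts continuity of the eigenvalues and notes that a zero eigenvalue on one slice gives a static solution and hence a zero eigenvalue on every slice---whereas you additionally prove a two-way equivalence (with positivity) and flag the technical content behind the continuity step; these extras are not needed for the bare corollary but are correct and clarify what is being used.
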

\begin{proof}
It suffices to exclude the case where $\omega_1>0$ and $\omega'_1<0$. By the continuity of the eigenvalues, there must exist an intermediate surface, say $\Sigma''$, for which the associated $\omega''_1=0$. Thus there exists a non-trivial static solution to the Klein-Gordon equation on $\mathcal{R}$. This implies that for any slicing, the corresponding elliptic operator must have a zero eigenvalue which is in contradiction with the assumption that $\omega_1>0$.
\end{proof}

We now claim that boundedness of solutions to the massive wave equation \eq{vas} is  controlled by the lowest eigenvalue of the operator $L$. First, let us make use of the fact that $u_1$ is a non-vanishing eigenfunction of $L$ to simplify \eq{weq} by twisting with $u_1$. We may re-write the operators $B$, $L$ as
\be
B u = {W^i}  u_1 \frac{\partial }{\partial x^i} \frac{u}{u_1}+\sqrt{\frac{A}{ h}} \frac{1}{u_1}\frac{\partial}{\partial x^i}\left( \sqrt{\frac{ h}{A}} W^i u_1 u\right),
\ee
and
\be
L u = - \sqrt{\frac{A}{ h}} \frac{1}{u_1} \frac{\partial}{\partial x^i}\left( u_1^2 \sqrt{\frac{ h}{A}} a^{ij} \frac{\partial  }{\partial x^j}\frac{u}{u_1}\right) + \omega_1 u.
\ee
Note that due to the twisting, we were able to remove the term in the elliptic part of the operator which is proportional to $A u$ and replaced it with the term $\omega_1 u$. Since the AdS asymptotics imply that $A$ grows like $r^2$ near infinity this gains us two powers of $r$. This is important when we are in the range of Neumann / Robin boundary conditions, as it makes the energy finite for the slower fall-off. 

We can construct the relevant energy currents by any of the methods above, or alternatively by multiplying \eq{weq} by $\psi_t \sqrt{h/A}$ and integrating over $\Sigma$ from the horizon out to $\scri$. We  define the energy to be 
\be
E(t)=  \frac{1}{2}\int_{\Sigma_t} \left[\left (\nabla_t \psi \right)^2+a^{ij} \tilde \nabla_i \psi   \tilde \nabla_j \psi +\omega_1 \psi^2 \right] \sqrt{\frac{1}{A}}d S_{\Sigma_t} \, ,
\ee
where now the twisted derivatives are understood to be twisted by $u_1$. Note that we do not include a surface term here, even for Robin boundary conditions. The reason for this is that twisting by a  function obeying the Robin boundary conditions modifies the boundary conditions to simpler Neumann boundary conditions.
The flux across the horizon is given by
\be
 F[T_1, T_2]= \int_{\mathcal{H}^+_{[T_1, T_2]}}   \left(\partial_t \psi \right)^2 W^i n_i \sqrt{\frac{1}{A}}   d S_{\mathcal{H}^+},
\ee
where $\mathcal{H}^+_{[T_1, T_2]}$ is the portion of $\mathcal{H}^+$ lying between $\Sigma_{T_1}$ and $\Sigma_{T_2}$ and $n$ is the outward\footnote{i.e. pointing into $\mathcal{R}$} unit normal of $\mathcal{H}^+\cap \Sigma_t$ as a surface embedded in $\Sigma_t$. This flux is positive by virtue of the positivity of the surface gravity (recall \eq{sgrav2}). We have, of course, the energy identity
\be
E[T_2] = E[T_1] - F[T_1, T_2].
\ee

Now, clearly after making use of the standard redshift arguments \cite{Mihalisnotes} to resolve the degeneration of the energy at the horizon, we have the following, which is the main result of this paper:
\begin{Theorem}\label{mainthm}
Let $(\mathcal{R}, \mathcal{H}^+, \Sigma, g, r,T)$ be a stationary, asymptotically anti-de Sitter, black hole space time of AdS radius $l$. Fix Dirichlet, Neumann or Robin boundary conditions. Suppose that $\omega_1>0$ for these boundary conditions. Then $E[t]$ is positive definite. Furthermore, the conclusions of Theorem  \ref{schbdn} hold as stated, with $\mathcal{R}, \mathcal{H}^+, \Sigma, g, r$ understood to refer to the more general stationary, aAdS, black hole.
\end{Theorem}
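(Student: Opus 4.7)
My plan is to mimic, in the general stationary setting, the three-step proof used in Section~\ref{sec:SchwSch}: derive a coercive conservation law for the $u_1$-twisted energy $E[t]$; upgrade it through the redshift effect and $T$-commutations to non-degenerate higher-order integrated bounds; and then close the loop by a twisted elliptic estimate plus twisted Sobolev embedding to obtain the pointwise bounds stated in Theorem~\ref{schbdn}.

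For the conservation law I multiply the decomposed equation $TT\psi - BT\psi + L\psi=0$ by $(T\psi)\sqrt{h/A}$ and integrate over the slab $\bigcup_{t\in[T_1,T_2]}\Sigma_t$. The $(TT\psi)(T\psi)$ term integrates in $t$ to $\tfrac12(T\psi)^2$ evaluated on the two spacelike boundaries. Writing $L$ in the $u_1$-twisted form displayed just before the statement and integrating by parts in the spatial variables yields the density $a^{ij}\tn_i\psi\,\tn_j\psi + \omega_1\psi^2$ on each slice, since the identity $Lu_1=\omega_1 u_1$ has converted the unfavourable $A\psi^2$ potential into the benign $\omega_1\psi^2$ term that already has the correct $r$-behaviour at infinity. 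The boundary term at $\scri$ from this integration by parts vanishes: Dirichlet and Neumann conditions kill it directly, while for Robin it vanishes because twisting by $u_1$ — which itself satisfies the Robin condition — converts the condition on $\psi$ into a homogeneous Neumann condition on $\psi/u_1$. The $(BT\psi)(T\psi)\sqrt{h/A}$ integrand is a pure spatial divergence $\partial_i(\sqrt{h/A}\,W^i(T\psi)^2)$, so it produces only boundary terms: none at $\scri$, and at $\mathcal{H}^+$ the contribution $-F[T_1,T_2]$ which is non-positive since $W^i n_i>0$ on the horizon by \eq{sgrav2} together with $\varkappa>0$. Assembling gives $E[T_2]+F[T_1,T_2]=E[T_1]$, and pointwise non-negativity of the density follows from $a^{ij}\geq 0$ on $\Sigma$ and the assumption $\omega_1>0$.

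Since the density degenerates on $\mathcal{H}^+$ (recall $a^{\rho\rho}=C\rho+O(\rho^2)$ from \eq{nrhrn}), the next step is to add the standard redshift current of \cite{Mihalisnotes} — applicable precisely because we assumed $\mathcal{H}^+$ is non-degenerate — to recover full control of all first derivatives uniformly up to the horizon. Extending $u_1$ to $\mathcal{R}$ by stationarity gives $[T,L]=0$, so the same identity applied to $T^k\psi$ delivers higher-order energy bounds matching the data norms of Theorem~\ref{schbdn}. Finally, using the equation in the form $L\psi=-TT\psi+BT\psi$, whose right-hand side is controlled in $\Ll^2$ by the $T$-commuted energies, I apply a twisted $\Hh^k$ elliptic estimate for $L$ in the spirit of \cite[\S5.2]{Warnick:2012fi} — legitimate because Lemma~\ref{lem41} shows that near $\mathcal{K}$ and near $\scri$ the coefficients of $L$ are in the same normal form used in the Schwarzschild analysis — and a twisted Sobolev embedding to convert the $\Hh^k$ bounds into the pointwise bounds with the correct $r$-weights. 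The main obstacle is recovering the stronger Dirichlet weight $r^{-3/2-\kappa}$ of \eq{schbdn3}--\eq{schbdn4}: this rate lives in the subleading branch of the asymptotic expansion of Theorem~\ref{WPThm}, so accessing it uniformly in $t$ requires an extra $T$-commutation, one more order of twisted elliptic regularity, and careful tracking of the Dirichlet trace (which must remain zero for every $t$ by propagation of the boundary condition) through the energy and elliptic hierarchy.
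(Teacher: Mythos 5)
Your proposal follows the paper's own argument essentially verbatim: you multiply the decomposition $TT\psi - BT\psi + L\psi = 0$ by $(T\psi)\sqrt{h/A}$, use the $u_1$-twisted form of $L$ (with $Lu_1=\omega_1 u_1$ replacing the $A\psi^2$ potential by $\omega_1\psi^2$ and converting Robin to Neumann data for $\psi/u_1$), observe that the $B$-term is a spatial divergence producing only the non-negative horizon flux $F[T_1,T_2]$, and then invoke the standard redshift, $T$-commutation, twisted elliptic estimates and Sobolev embedding exactly as the paper does by reference to \cite{Mihalisnotes} and the Schwarzschild analysis. This is correct and is the same route as the paper.
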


For $\omega_1<0$ we do not immediately get any boundedness result. In fact -- provided $\omega_1$ is the only negative eigenvalue of $L$ -- we find that there are solutions which grow in time. 

\begin{Theorem}
Suppose $\omega_1 < 0$ and $\omega_2 >0$, then given $J\in \mathbb{N}$ there exists a solution of \eq{vas} for which $\norm{\phi}{\Ll^2(\Sigma_t)}$ grows at least as fast as $t^J$.
\end{Theorem}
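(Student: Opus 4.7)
The plan is to exhibit, for suitable initial data, a second-order ODE satisfied by the $\Ll^2(\Sigma_t)$-projection of $\psi$ onto the unstable eigenfunction $u_1$ whose homogeneous part is unstable ($F'' = |\omega_1| F$ at leading order), and to iterate the resulting integral identity $J$ times to extract growth of order $t^J$. Cauchy--Schwarz will then promote the lower bound on the projection to a lower bound on $\|\psi(t)\|_{\Ll^2(\Sigma_t)}$.

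Concretely, I would choose the data $\psi|_{\Sigma_0} = u_1$ (normalised so that $\|u_1\|_{\Ll^2}=1$) and $T\psi|_{\Sigma_0}=0$. Since $\tilde\nabla_i u_1 = 0$ when one twists by $u_1$, the associated energy satisfies $E[0] = \tfrac12\omega_1 < 0$; monotonicity of $E$ (the horizon flux being non-negative) gives $E[t] \leq \tfrac12\omega_1$ for all $t$, and therefore the a priori bound
\[
\|T\psi(t)\|_{\Ll^2}^2 + \int_{\Sigma_t} a^{ij}\tilde\nabla_i\psi\,\tilde\nabla_j\psi\,\sqrt{1/A}\,dS \ \leq \ |\omega_1|\bigl(\|\psi(t)\|_{\Ll^2}^2 - 1\bigr).
\]
Setting $F(t):=\langle \psi(t), u_1\rangle_{\Ll^2(\Sigma_t)}$ and $G(t):=\langle \psi(t), Bu_1\rangle_{\Ll^2(\Sigma_t)}$, using the $T$-invariance of $u_1$, the equation $TT\psi = BT\psi - L\psi$ from Lemma~\ref{lem41}, the self-adjointness of $L$ on $\Ll^2(\Sigma,\sqrt{1/A}\,dS)$ (modulo boundary terms that vanish by the boundary conditions imposed on both $\psi$ and $u_1$), and the anti-self-adjointness of $B$ modulo a horizon contribution $\mathcal{B}(t)$, one derives
\[
F''(t) \ = \ |\omega_1|\,F(t) \ - \ G'(t) \ + \ \mathcal{B}(t), \qquad F(0) = 1,\ F'(0) = 0.
\]
Since $u_1$ is smooth with all derivatives bounded (Proposition~\ref{prop1}(viii)), $Bu_1 \in \Ll^2(\Sigma)$, so $|G'(t)| \leq C\|T\psi(t)\|_{\Ll^2}$, and a trace inequality on $\mathcal{H}^+$ (underpinning Theorem~\ref{RKthm}) gives $|\mathcal{B}(t)| \leq C\|\psi(t)\|_{\Ll^2}$.

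To close the bootstrap, I would decompose $\psi(t) = F(t)u_1 + \psi_\perp(t)$ with $\psi_\perp(t) \perp u_1$ in $\Ll^2(\Sigma_t)$, and invoke the hypothesis $\omega_2 > 0$: the operator $L$ restricted to $\{u_1\}^\perp$ satisfies $\langle L\varphi,\varphi\rangle \geq \omega_2 \|\varphi\|_{\Ll^2}^2$. The identity $\langle L\psi,\psi\rangle = \omega_1 F^2 + \langle L\psi_\perp,\psi_\perp\rangle$ combined with the energy bound yields $\|\psi_\perp(t)\|_{\Ll^2}^2 \leq C(1+F(t)^2)$ and hence $\|\psi(t)\|_{\Ll^2} \leq C'(1 + |F(t)|)$; this upgrades the forcing estimate to $|G'(t)|+|\mathcal{B}(t)| \leq C''(1+|F(t)|)$. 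Writing the ODE in the integral form
\[
F(t) \ = \ 1 \ + \ |\omega_1|\!\int_0^t(t-s)\,F(s)\,ds \ + \ R(t),
\]
where $R$ is the twice-integrated forcing, a first iteration (starting from the continuity-based bound $F \geq \tfrac12$ near $t=0$) yields $F(t)\geq c_1 t^2$; each subsequent Picard iteration picks up an extra factor of $t^2$, producing $F(t)\geq c_J t^{2J}$ after $J$ steps. Cauchy--Schwarz $|F(t)| \leq \|\psi(t)\|_{\Ll^2}$ then gives the claimed growth.

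The main obstacle is precisely the bootstrap: a priori the energy only controls $\|T\psi\|$ and $\|\psi_\perp\|$ in terms of $\|\psi\|$, which could in principle grow at the same rate as the unstable mode $|\omega_1|F$ in the ODE and neutralise it. The spectral gap hypothesis $\omega_2>0$ is what converts energy control into a bound on $\|\psi\|$ by $|F|$ (plus constant), making the iteration effective. A secondary technical point is the precise bookkeeping of the horizon term $\mathcal{B}(t)$ and the Robin contribution at $\scri$ (for the Robin case), which must be handled with the same trace inequalities used in Proposition~\ref{prop1}; this is routine but tedious. Without the gap $\omega_2>0$, a resonant interaction with a near-zero mode could destroy the polynomial improvement from step $k$ to step $k+1$.
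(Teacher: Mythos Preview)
Your approach diverges from the paper's and, as written, does not close. The central gap is the Picard iteration step. You correctly deduce from the spectral gap that $\|\psi_\perp\|_{\Ll^2}$ and $\|T\psi\|_{\Ll^2}$ are controlled by $C\,|F(t)|$, and hence that the forcing satisfies $|G'(t)|+|\mathcal{B}(t)|\le C''(1+|F(t)|)$. But nothing prevents $C''\ge|\omega_1|$: the constant $C''$ depends on $\|Bu_1\|_{\Ll^2}$ and a horizon trace constant, quantities fixed by the geometry and unrelated to the size of $|\omega_1|$. In the worst case your ODE reads $F'' = (|\omega_1|-C'')F + O(1)$ with $|\omega_1|-C''\le 0$, which is oscillatory, and the integral form $F(t)=1+|\omega_1|\int_0^t(t-s)F(s)\,ds + R(t)$ cannot be iterated to produce growth because $R$ is of the same order as the main term with the opposite sign. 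A secondary issue is the pointwise-in-$t$ control of the horizon term $\mathcal{B}(t)$: integrating $\langle BT\psi,u_1\rangle$ by parts produces a boundary integral on $\mathcal{H}^+\cap\Sigma_t$ involving $T\psi$, and a trace inequality controls this by an $H^1$-type norm of $T\psi$, not by $\|\psi\|_{\Ll^2}$ as you assert; only the time-integrated horizon flux is controlled by the energy.

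The paper avoids any ODE analysis. Expanding $\phi=\sum_n f_n(t)u_n$ and using the energy identity gives $\sum_n(\dot f_n^2+\omega_n f_n^2)\le\omega_1<0$ for the data $(u_1,0)$; since $\omega_1 f_1^2$ is the only negative summand, one gets $f_1(t)\ge 1$ for all $t$ with no forcing to worry about. Growth of order $t^J$ is then obtained not from this single solution but by an inductive construction of \emph{different} solutions $\phi^{(J)}$: given $\phi^{(J)}$ with $(d/dt)^J f_1^J\ge 1$, one chooses initial data $\phi^{(J+1)}|_{\Sigma_0}=L^{-1}\bigl(B\phi^{(J)}|_{\Sigma_0}-T\phi^{(J)}|_{\Sigma_0}\bigr)$, $T\phi^{(J+1)}|_{\Sigma_0}=\phi^{(J)}|_{\Sigma_0}$, so that $T\phi^{(J+1)}=\phi^{(J)}$ identically (both sides solve the equation with matching data). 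The hypothesis $\omega_1<0<\omega_2$ enters precisely to guarantee $0\notin\mathrm{spec}(L)$, so that $L^{-1}$ exists. This construction bypasses entirely the competition between the unstable mode and the $B$-coupling that derails your bootstrap.
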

\begin{proof}
First, we note that by property ({\it\ref{evbasis}}) of Proposition \ref{prop1}, we may expand any smooth solution $\phi(t)$ as
\ben{expand}
\phi(t) = \sum_{n = 1}^\infty f_n(t) u_n,
\een
where furthermore, for any $i$, $j$ we have that the sum
\be
\sum_{n=1}^\infty \abs{\omega_n}^i \left[\frac{d^j f_n}{dt^j}(t)\right]^2
\ee
converges uniformly\footnote{we actually only require this for $i=0, 1$} in $t$ on any interval $[T_1, T_2]$. Now, inserting the expansion \eq{expand} into the energy identity, we deduce that
\be
\sum_{n=1}^\infty \left( \dot{f}_n(t)^2 + \omega_n f_n(t)^2 \right) \leq \sum_{n=1}^\infty \left( \dot{f}_n(0)^2 + \omega_n f_n(0)^2 \right).
\ee
Recall that $\omega_1<0$ and $\omega_n > 0$ for $n>1$. We therefore deduce that if the right hand side is initially negative, say
\be
 \sum_{n=1}^\infty \left( \dot{f}_n(0)^2 + \omega_n f_n(0)^2 \right) = \omega_1<0
\ee
and also $f_1(0)>0$, then we must have
\begin{equation*}
f_1(t) \geq 1,
\end{equation*}
since the term proportional to $f_1(t)^2$ is the only one contributing a negative sign on the left hand side ($f_1$ cannot change sign as it is bounded away from $0$). We can clearly arrange this situation by taking $\phi$ to be the unique solution to \eq{vas} with initial conditions: 
\begin{equation}\label{phi0}
{ \phi  |_{\Sigma_0} = u_1}, \qquad { T\phi |_{\Sigma_0} = 0}.
\end{equation}

We will prove inductively that for each $J$ there exists a solution of \eq{vas}, $\phi^{(J)} =\sum f^J_n(t) u_n$, with initial data in $C^\infty(\Sigma\setminus\scri)$ for which
\begin{equation}\label{fexp}
\left( \frac{d}{dt}\right)^{J} f^J_1(t) \geq 1,
\end{equation}
for all time. This suffices to prove the theorem, since by integrating in time, we conclude that $f^J_1(t)$ grows like $t^J$ at late times. Since $\norm{\phi^{(J)}}{\Ll^2(\Sigma_t)}^2=\sum_n \abs{f^J_n}^2$ the result follows.

As noted above, we can find a $\phi^{(0)}$ satisfying \eq{fexp} for $J=0$, by taking $\phi^{(0)}$ to be the solution with initial data \eq{phi0}. Now suppose for induction that $\phi^{(J)}$ satisfies \eq{fexp} for some $J \geq 0$. We would like to define $\phi^{(J+1)}$ to be the unique solution of \eq{vas} with initial conditions
\begin{equation*}
{ \phi^{(J+1)}  |_{\Sigma_0}} = {L^{-1} \left(B  \phi^{(J)} |_{\Sigma_0} -  T \phi^{(J)}  |_{\Sigma_0} \right)}, \qquad { T\phi^{(J+1)}  |_{\Sigma_0}} = {\phi^{(J)}  |_{\Sigma_0}},
\end{equation*}
but we first need to verify that these are valid initial conditions for our well posedness theorem. Note that $\phi^{(J)}  |_{\Sigma_0}\in \H^1(\Sigma, \kappa)\cap C^\infty(\Sigma\setminus\scri)$ and $T\phi^{(J)} |_{\Sigma_0} \in L^2(\Sigma)\cap C^\infty(\Sigma\setminus\scri)$, so that the bracket on which $L^{-1}$ acts belongs to $L^2(\Sigma)= \mathscr{L}^2(\Sigma)$. The conditions imposed on the eigenvalues of $L$ ensure that $L^{-1}$ exists and maps $\mathscr{L}^2(\Sigma)$ into $\dot{\mathscr{H}}^1(\Sigma, \kappa)$. An elliptic regularity result gives that\footnote{$\dot{\mathscr{H}}^1(\Sigma, \kappa)$ and $ \H^1(\Sigma, \kappa)$ differ only in the degree of differentiability assumed at the horizon.} $\phi^{(J+1)}  |_{\Sigma_0}\in \dot{\mathscr{H}}^1(\Sigma, \kappa)\cap C^\infty(\Sigma\setminus\scri) = \H^1(\Sigma, \kappa)\cap C^\infty(\Sigma\setminus\scri)$. Since furthermore $T\phi^{(J+1)} |_{\Sigma_0} \in L^2(\Sigma)\cap C^\infty(\Sigma\setminus\scri)$, these initial conditions do indeed launch a solution.

Now, by construction we have that $T\phi^{(J+1)} = \phi^{(J)}$ for all time. To see this, observe that $T\phi^{(J+1)} - \phi^{(J)}$ solves \eq{vas} (or equivalently \eq{weq}), with trivial initial conditions  so vanishes everywhere. Thus $\frac{d}{dt} f^{J+1}_i = f^{J}_i$, whence we have established \eq{fexp} holds.
\end{proof}
Finally we consider the case that $\omega_1=0$. In this case, we have
\begin{Theorem}
Suppose $\omega_1=0$. Then if $\psi$ solves \eq{vas}, $\norm{\psi}{\Ll(\Sigma_t)}$ can grow at most linearly in $t$.
\end{Theorem}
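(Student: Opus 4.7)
The plan is to exploit the fact that, although $E[t]$ no longer controls $\psi$ itself when $\omega_1=0$, it still controls $T\psi$ in $\mathscr{L}^2(\Sigma_t)$, after which the $\mathscr{L}^2$ norm of $\psi$ itself is recovered by integrating in time. With the twisting by $u_1$ (which in the Robin case already promotes the boundary condition at $\scri$ to a Neumann one, as explained just above the statement), the hypothesis $\omega_1=0$ reduces the twisted energy to
\be
E[t] = \frac{1}{2}\int_{\Sigma_t}\left[(T\psi)^2 + a^{ij}\tilde{\nabla}_i\psi\,\tilde{\nabla}_j\psi\right]\sqrt{\frac{1}{A}}\, dS_{\Sigma_t} \geq 0,
\ee
so the energy identity $E[t]=E[0]-F[0,t]$ together with $F[0,t]\geq 0$ yields $E[t]\leq E[0]$. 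In particular one obtains the time-uniform bound $\norm{T\psi}{\mathscr{L}^2(\Sigma_t)}^2 \leq 2E[0]$.

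Next, I would use that the slicing is stationary: since $\Sigma_t=\varphi_t(\Sigma)$ and $T$ is Killing, the pullback $\varphi_t^*$ is an isometry of $\mathscr{L}^2$. Pulling back to $\Sigma$ and applying the fundamental theorem of calculus along the integral curves of $T$ gives, on $\Sigma$,
\be
\varphi_t^*\bigl(\psi|_{\Sigma_t}\bigr) - \psi|_{\Sigma} = \int_0^t \varphi_s^*\bigl((T\psi)|_{\Sigma_s}\bigr)\, ds,
\ee
and Minkowski's integral inequality in $\mathscr{L}^2(\Sigma)$ together with the previous bound on $T\psi$ then delivers
\be
\norm{\psi}{\mathscr{L}^2(\Sigma_t)} \leq \norm{\psi}{\mathscr{L}^2(\Sigma)} + t\sqrt{2E[0]},
\ee
which is the claimed at-most-linear growth.

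The only obstacle is technical: justifying the above manipulations for a weak solution in the sense of Theorem \ref{WPThm}. I would handle it by first establishing the inequality for initial data regular enough that the pointwise integration of $T\psi$ along the flow of $T$ and the subsequent application of Minkowski's inequality are entirely classical (so for instance the data appearing in part $ii)$ of Theorem \ref{schbdn}), and then extending to the general case by density of such data in the energy space together with continuous dependence on initial data from Theorem \ref{WPThm}. The standard redshift commutation, already invoked in the proof of Theorem \ref{mainthm}, takes care of any residual degeneration of $E[t]$ at $\mathcal{H}^+$.
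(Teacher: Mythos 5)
Your proposal is correct and rests on exactly the same mechanism as the paper's proof: twisting by $u_1$ so that $\omega_1=0$ yields a non-negative conserved (up to horizon flux) energy which uniformly controls $\norm{T\psi}{\Ll^2(\Sigma_t)}$, followed by integration in time. Your use of the fundamental theorem of calculus along the flow of $T$ plus Minkowski's integral inequality is just the integrated form of the paper's differential inequality $\frac{d}{dt}\norm{\psi}{\Ll^2(\Sigma_t)}\leq\norm{\partial_t\psi}{\Ll^2(\Sigma_t)}$, so the two arguments are essentially identical.
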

\begin{proof}
Twisting by $u_1$, as for the proof of Theorem \ref{mainthm}, we find an energy which is positive, but which does not control $\norm{u}{\Hh^1(\Sigma)}$. It does however control $\norm{\partial_t \psi}{\Ll^2(\Sigma)}$, so that
\be
\norm{\partial_t \psi}{\Ll^2(\Sigma_t)}^2 \leq E[0].
\ee
Now consider
\bean
\frac{d}{dt}\left (\norm{\psi}{\Ll^2(\Sigma_t)}^2\right) &=& 2 \norm{\psi}{\Ll^2(\Sigma_t)}\frac{d}{dt}\left(\norm{\psi}{\Ll^2(\Sigma_t)}\right)  \\
&=& 2 \ip{\psi}{\partial_t\psi}{\Ll^2(\Sigma_t)} \\ &\leq& 2 \norm{\psi}{\Ll^2(\Sigma_t)}\norm{\partial_t \psi}{\Ll^2(\Sigma_t)}
\eean
whence we deduce that for almost every $t$
\be
\frac{d}{dt}\left(\norm{\psi}{\Ll^2(\Sigma_t)}\right) \leq \norm{\partial_t \psi}{\Ll^2(\Sigma_t)} \leq \sqrt{E[0]}
\ee
and the result follows.
\end{proof}

A few final comments are in order. Firstly, we note that the above theorems justify the assertion (cf. \cite{Dias:2010ma}) that a spacetime which admits linear scalar hair (i.e.\ a non-trivial stationary solution) is at the threshold between stability and instability for the Klein-Gordon equation. See also the remark below Theorem \ref{thm51}.

Note also that our argument does not depend on the specific form of the initial twisting function $g$. The correct asymptotics near the horizon and near infinity for $g$ were sufficient to generate a bounded potential term in the twisted equation (cf.~(\ref{gebu}) and (\ref{hebu})) and gave as an abstract conclusion the existence of a lowest eigenvalue $\omega_1$ for $L$ with its associated eigenfunction $u_1$. In  a second step, we twisted with that eigenfunction to obtain either a boundedness or instability result.

In practice it may be hard to compute the lowest eigenvalue and eigenfunction explicitly. However, to prove stability, one can try to find a function $g$ with the correct asymptotics such that $\frac{Lg}{g}$ is at least equal to some non-negative function. In fact, this was precisely what we did in Section \ref{sec:SchwSch}.

Finally, we note that our approach here is based purely on the spectral properties of $L$. For a full understanding of the global behaviour of solutions to \eq{vas} one should study the spectral properties of the full wave operator, involving both $L$ and $B$. In this way one is led to the consideration of quasinormal modes, see \cite{claudeQNM}.

\section{The AdS-Kerr Black Hole} \label{sec:KerrAdS}

We now apply our general results to the special case of greatest interest, that of the AdS-Kerr black hole. In Boyer-Lindquist coordinates, the metric takes the form
\ben{kerrBL}
g = -\frac{\Delta_-}{\Sigma}\left( d\tilde t+ \frac{a}{\Xi} \sin^2\theta d\tilde \phi\right)^2+\frac{\sin^2\theta \Delta_\theta}{\Sigma}\left(\frac{r^2+a^2}{\Xi} d\tilde \phi - a d\tilde t \right)^2+\frac{\Sigma}{\Delta_-} dr^2 + \frac{\Sigma}{\Delta_\theta} d \theta^2, 
\een
where we have
\be
\begin{array}{rclcrcl}
\Sigma &=& r^2 + a^2 \cos^2\theta,& \qquad & \Delta_\theta &=& 1-\frac{a^2}{l^2} \cos^2 \theta, \\
\Delta_- &=& \left(r^2+a^2 \right) \left(1+\frac{r^2}{l^2} \right)- 2M r,  & \qquad & \Xi &=& 1- \frac{a^2}{l^2}, 
\end{array}
\ee
This metric describes a rotating black hole in a background with an asymptotically anti-de Sitter end of radius $l$ provided $\abs{a/l}<1$, which we assume henceforth. The coordinate $r$ is a good asymptotic radial coordinate. We refer to $M$ as the mass of the black hole and $a$ as the rotation parameter. The metric has a Killing horizon located at $r_{+}$, defined to be the largest root of $\Delta_-$. The Hawking-Reall Killing vector
\be
T = \frac{\partial}{\partial \tilde t} + \lambda \frac{\partial}{\partial \tilde \phi}, \qquad \lambda = \frac{a \Xi}{r_{+}^2 + a^2},
\ee
is null on the horizon. Provided $\abs{a}l < r_{+}^2$, Hawking and Reall  \cite{Hawking:1999dp} observed that $T$ is in fact timelike everywhere outside the horizon. We henceforth assume $\abs{a}l < r_{+}^2$ also.

Boyer-Lindquist coordinates have the advantage that the wave equation separates, owing to the existence of a hidden constant of the motion due to Carter. They suffer from the disadvantage of not being regular at the horizon. Let us make the following coordinate transformation
\ben{tran}
t=\tilde t+A(r) \, , \quad \textrm{ and } \quad \phi= \tilde \phi + B(r) -\lambda [\tilde t+ A(r)],
\een
where
\be
\frac{dA}{dr} = \frac{2 M r}{\Delta_-\left( 1+\frac{r^2}{l^2}\right)}, \quad \textrm{ and } \quad \frac{dB}{dr} = \frac{a \Xi}{\Delta_-}.
\ee
In these coordinates we have $T = \partial_t$. A calculation verifies that
\begin{Lemma}
Let $\mathcal{R} = [0, \infty)_t \times [r_{+}, \infty)_r \times S^2_{\theta, \phi}$ endowed with the metric $g$ resulting from applying the coordinate transformation \eq{tran} to \eq{kerrBL}. Assume $|a|<l$ and $\abs{a}l < r_{+}^2$. Take $\mathcal{H}^+ = \{ r = r_{+}\}$, $\Sigma = \{ t = 0\}$. Then $(\mathcal{R}, \mathcal{H}^+, \Sigma, g, r, T)$ is a stationary, asymptotically anti-de Sitter black hole spacetime with AdS radius $l$ in the sense of Definition \ref{adsbh}.
\end{Lemma}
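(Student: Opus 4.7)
The strategy is to verify in sequence each of the seven conditions (i)--(vii) of Definition \ref{adsbh}. The topological and structural requirements (i), (ii), (vii) are immediate from the product form of $\mathcal{R}$: take $\mathcal{K} = \{t=0, r=r_+\} \cong S^2$, which is compact and connected, so that $\Sigma$ and $\mathcal{H}^+$ are manifolds with common boundary $\mathcal{K}$. The vector field $T=\partial_t$ generates translations $t \mapsto t+s$ in the new coordinates, giving the required foliation by $\Sigma_t = \{t=\textrm{const.}\}$. Condition (iv), the asymptotic structure, is also straightforward: since $A'(r), B'(r) = O(r^{-2})$ as $r\to\infty$, the transformation \eq{tran} is the identity to leading order at infinity, so the aAdS asymptotics of \eq{kerrBL} are preserved. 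A direct asymptotic expansion yields $g_{rr} = l^2/r^2 + O(r^{-4})$, $g_{\alpha\beta} = r^2 \mathfrak{g}_{\alpha\beta} + O(1)$ for the induced boundary metric $\mathfrak{g}$, and $g_{r\alpha} = O(r^{-2})$, while $r^{-1}T$ is uniformly bounded and tangent to $\scri$. Weak asymptotic simplicity follows from the smoothness of $r^{-2}g$ in the variable $s=l/r$.

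I expect the main technical obstacle to be verifying that the transformed metric extends smoothly across $\mathcal{H}^+=\{r=r_+\}$. Since the Hawking--Reall assumption $|a|l < r_+^2$ ensures non-extremality, $\Delta_-$ has a simple zero at $r_+$, and therefore $A'(r) = 2Mr/[\Delta_-(1+r^2/l^2)]$ and $B'(r) = a\Xi/\Delta_-$ each have a simple pole there. Substituting $d\tilde t = dt - A'(r)\,dr$ and $d\tilde\phi = d\phi - B'(r)\,dr + \lambda\,dt$ into \eq{kerrBL} produces coefficients with formal $\Delta_-^{-1}$ and $\Delta_-^{-2}$ singularities in $g_{tr}$, $g_{\phi r}$ and $g_{rr}$. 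The precise choice of $A'$ and $B'$ is exactly what arranges the pairwise cancellation of these singularities against the $\Sigma/\Delta_-$ term in $g_{rr}$; the key underlying fact is that on the horizon, $T = \partial_{\tilde t} + \lambda\partial_{\tilde\phi}$ is null by the definition of $\lambda$, so the leading singularity in the $(dt)^2$ coefficient already vanishes before any further cancellation. The computation mirrors the AdS-Schwarzschild case leading to \eq{gdd} and can be done explicitly, though it is algebraically more involved due to the $\theta$-dependence.

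Condition (iii), that $\Sigma=\{t=\textrm{const.}\}$ is spacelike, follows from computing $g^{tt}$ in the new coordinates and verifying it is negative for every $(r,\theta)$ with $r\geq r_+$; this is the analogue of \eq{guu} and holds because \eq{tran} was designed so that the new time slices penetrate the horizon transversally. Condition (v), that $T$ is timelike on $\mathcal{R}\setminus\mathcal{H}^+$, is the content of the Hawking--Reall observation: the quantity $-g(T,T)$ is coordinate-invariant and can be computed directly from \eq{kerrBL}, and is shown in \cite{Hawking:1999dp} to be strictly positive off the horizon precisely when $r_+^2 > |a|l$, vanishing to first order at $\mathcal{H}^+$.

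Condition (vi) follows from what has been established: $T$ is Killing (the property is coordinate-independent), null on $\mathcal{H}^+$ and tangent to it, hence normal since $\mathcal{H}^+$ is null. The surface gravity $\varkappa$, computed from \eq{sgrav} applied to $T$, is proportional to $\Delta_-'(r_+)/(r_+^2+a^2)$; since $\Delta_-$ has a simple zero at $r_+$ with positive derivative (by non-extremality), $\varkappa > 0$ and the horizon is non-degenerate as required. Finally, $r^{-1}T$ is bounded and tangent to $\scri$ by the asymptotic analysis of the previous paragraph. All seven conditions are thus verified.
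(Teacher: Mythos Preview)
Your proposal is correct and follows exactly the approach the paper intends: the paper's own ``proof'' consists of the single sentence ``A calculation verifies that'' preceding the Lemma, so you have simply supplied the outline of that calculation by checking conditions (i)--(vii) of Definition~\ref{adsbh} in turn. One small remark: your claim that the Hawking--Reall bound ensures non-extremality is correct but deserves a line of justification, since the bound is stated as a condition on the timelike character of $T$ rather than on $\Delta_-'(r_+)$; a short computation using $\Delta_-(r_+)=0$ together with $|a|<l$ and $r_+^2>|a|l$ gives $r_+\Delta_-'(r_+) = (r_+^2-a^2) + 3r_+^4/l^2 + r_+^2 a^2/l^2 > 0$, which is what you need.
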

Thus we may apply all of the results of the previous section to the AdS-Kerr black hole. The transformation \eq{tran} puts the metric directly into the form \eq{statg}, whence we may directly read off $A$, $W$ and $h$ and construct the operator $L$ whose eigenvalues control the boundedness of solutions to the wave equation. For general boundary conditions, this is an operator on the three-dimensional space with coordinates $r, \theta, \phi$, and is somewhat ugly. Matters simplify when the boundary conditions are consistent with the axial symmetry of the black hole. This occurs for Dirichlet or Neumann boundary conditions, as well as for Robin boundary conditions where $\beta = \beta(\theta)$ is axisymmetric on the sphere at infinity. In this case, an averaging argument shows that the least eigenvalue of $L$ corresponds to an axially symmetric eigenfunction. We have the following result:
\begin{Theorem}\label{thm51}
Fix an AdS-Kerr black hole background obeying the Hawking-Reall bound, $\alpha<\frac{9}{4}$, and Dirichlet, Neumann or axisymmetric Robin boundary conditions, as appropriate for the choice of $\alpha$. Let $\omega_1$ be the least eigenvalue of the eigenvalue problem\footnote{For convenience, we state the eigenvalue problem in untwisted form. Again, the existence of a lowest eigenvalue for this problem follows from the general arguments of Section \ref{sec:genstat}.}:
\ben{Kerrop}
 - \left[ \frac{1}{\Sigma \sin \theta} \frac{\partial}{\partial r} \left(\Delta_- \frac{\partial u}{\partial r}  \right)+\frac{1}{\Sigma \sin \theta} \frac{\partial}{\partial \theta} \left(\Delta_\theta \frac{\partial u}{\partial \theta}  \right)+ \frac{\alpha}{l^2} u \right] = \omega \abs{g^{tt}} u,
\een
subject to the conditions that $u$ be regular at $\theta = 0, \pi$ and $r = r_{+}$ and that near infinity we have
\be
u \sim \frac{u_0}{r^{\frac{3}{2}-\kappa}} + \frac{u_1}{r^{\frac{3}{2}+\kappa}},
\ee
where $\kappa = \sqrt{9/4-\alpha}$ and:
\be
\begin{array}{ccl}
u_0  =0&\qquad& \textrm{for Dirichlet boundary conditions,} \\ 
u_1 =0&\qquad& \textrm{for Neumann boundary conditions,} \\
 2\kappa u_1 -\beta(\theta) u_0=0  &\qquad& \textrm{for Robin boundary conditions.}
 \end{array}
 \ee
 Then:
\begin{enumerate}[i)]
\item If $\omega_1>0$, Theorem \ref{schbdn} holds for solutions $\psi$ of the Klein-Gordon equation \eq{vas} on this background, satisfying the given boundary conditions. That is, solutions are bounded pointwise in time.
\item  If $\omega_1<0, \omega_2>0$, there exist solutions $\psi$ of the Klein-Gordon equation \eq{vas} on this background, satisfying the given boundary conditions, whose energy grows faster than any power of $t$.
\end{enumerate}
\end{Theorem}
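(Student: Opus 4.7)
The plan is to reduce the three-dimensional eigenvalue problem for the operator $L$ built in Section \ref{sec:genstat} to the two-dimensional problem (\ref{Kerrop}) by exploiting axial symmetry, and then to apply Theorem \ref{mainthm} together with the subsequent growth theorem. The preceding lemma has already placed AdS-Kerr, under the hypotheses $|a|<l$ and $|a|l<r_+^2$, in the framework of Definition \ref{adsbh}, so the general machinery is available; in particular there is a well-defined elliptic operator $L$ on $\Sigma$ whose lowest eigenvalue controls the dynamics.

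For the axisymmetric reduction, the coordinate change (\ref{tran}) leaves $r,\theta$ unchanged and gives $\tilde\phi=\phi-B(r)+\lambda t$, so $\Phi=\partial_\phi=\partial_{\tilde\phi}$ is a Killing field of $g$ tangent to $\Sigma$ that commutes with $T$ and preserves the spatial metric $h$. Consequently the coefficients of $L$ are $\phi$-independent, and for the three admissible boundary conditions the boundary data is also $\phi$-invariant (Dirichlet and Neumann trivially, axisymmetric Robin by hypothesis). Taking the first eigenfunction $u_1$ from Proposition \ref{prop1}, which is strictly positive in the interior by Proposition \ref{prop1}(viii), its rotational average $\bar u_1(r,\theta):=(2\pi)^{-1}\int_0^{2\pi}u_1(r,\theta,\phi)\,d\phi$ is strictly positive, and it satisfies $L\bar u_1=\omega_1\bar u_1$ together with the same boundary conditions, since both $L$ and the boundary operators commute with averaging over $\Phi$. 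Hence $\omega_1$ is attained in the axisymmetric sector, and we may identify it with the lowest eigenvalue of the axisymmetric reduction of $L$. To avoid any compatibility issue with the twisted Sobolev structure, one chooses the auxiliary twisting function $g$ of Section \ref{sec:genstat} axisymmetric, which is permitted by the freedom left there.

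To identify the axisymmetric sector of $L$ with the operator in (\ref{Kerrop}), note that any axisymmetric, $t$-independent function is represented by the same function of $(r,\theta)$ in the regular coordinates of (\ref{tran}) and in Boyer--Lindquist coordinates. A direct computation of $\Box_g u + \frac{\alpha}{l^2}u$ in Boyer--Lindquist, using $\sqrt{-g}=\Sigma\sin\theta/\Xi$ together with $g^{rr}=\Delta_-/\Sigma$ and $g^{\theta\theta}=\Delta_\theta/\Sigma$, reproduces precisely the bracketed operator in (\ref{Kerrop}); the weight $|g^{tt}|$ on the right-hand side is the $\Ll^2(\Sigma)$-weight $\sqrt{1/A}$ of (\ref{Ll2def}) rewritten in Boyer--Lindquist variables. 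The Dirichlet/Neumann/Robin boundary conditions in (\ref{Kerrop}) are the axisymmetric restrictions of those appearing in Theorem \ref{WPThm}, as one reads off from its asymptotic expansion. With this identification in hand, part (i) is Theorem \ref{mainthm} applied directly, yielding the positive-definite energy and the pointwise bounds of Theorem \ref{schbdn}. Part (ii) is the first growth theorem of Section \ref{sec:genstat}, whose hypotheses $\omega_1<0,\omega_2>0$ coincide with ours; it produces, for each $J\in\mathbb{N}$, a (different) solution whose $\Ll^2(\Sigma_t)$ norm grows at least as fast as $t^J$, which is the stated faster-than-polynomial growth.

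The main obstacle I anticipate is the coordinate bookkeeping in matching (\ref{Kerrop}) with the axisymmetric sector of $L$: the operator $L$ is naturally expressed in the ADM form (\ref{Ldef}) in the regular coordinates furnished by (\ref{tran}), whereas (\ref{Kerrop}) is written in Boyer--Lindquist form, so tracking the weights $A$, $\sqrt{h/A}$ and the shift-vector contributions across the non-trivial change of variables requires some care. A secondary (but routine) technicality is to verify that the rotational averaging is compatible with the twisted $\Hh^1(\Sigma,\kappa)$ structure and the Robin boundary form, which follows from axisymmetry of $g$ and of $\beta(\theta)$. All other ingredients reduce to previously established results.
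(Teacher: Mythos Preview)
Your proposal is correct and follows essentially the same approach as the paper: the paper does not give a detailed proof of this theorem but rather presents it as the direct specialisation of the general machinery of Section~\ref{sec:genstat} (Theorem~\ref{mainthm} and the subsequent growth theorem) to AdS--Kerr, invoking the preceding lemma to verify the hypotheses and an averaging argument over the axial Killing field to reduce $L$ to the two-dimensional operator in \eq{Kerrop}. Your fleshing-out of the averaging step and the Boyer--Lindquist identification is exactly what the paper has in mind, and the coordinate bookkeeping you flag as the main obstacle is indeed the only nontrivial computation.
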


\textbf{Remark. }Recall that as we vary $m, l, a$ and $\beta$ smoothly, the corresponding $\omega_1$ varies continuously. In order to pass from a spacetime in which solutions to the Klein-Gordon equation are bounded to one in which they grow without bound, $\omega_1$ must pass through $0$. For this specific set of parameters, the Klein-Gordon equation will admit a non-trivial stationary solution, i.e.\ linear scalar hair. As in the case of Schwarzschild (c.f.\ Figure \ref{betfig}), if possibility $i)$ holds for the Neumann boundary conditions\footnote{See Theorem \ref{theo:2}.}, we can always induce a transition to possibility $ii)$ by taking $\beta$ increasingly large and negative.

As for Schwarzschild, we can replace the weight $\abs{g^{tt}}$ appearing on the right hand side of \eq{Kerrop} by any smooth, positive, function with the same asymptotic behaviour. For example $r^{-2}$ will do. Notice that the operator appearing on the left hand side of \eq{Kerrop} is the wave operator in Boyer-Linquist coordinates acting on a stationary axisymmetric field.

\subsection{The Dirichlet case}

We shall now demonstrate that for Dirichlet conditions, $\omega_1$ is always positive, provided the black hole satisfies the Hawking-Reall bound. To do this, we multiply \eq{Kerrop} by $u \Sigma \sin \theta$ and integrate over $r, \theta$. After integrating by parts (which we may do for $u$ satisfying Dirichlet conditions), it suffices to show that
\[
Q := \int_{r_{+}}^\infty \int_0^\pi \Big[  \Delta_- \left(\partial_r u\right)^2 + \Delta_\theta \left(\partial_\theta u \right)^2 - \frac{\alpha}{l^2} \left(r^2 + a^2 \cos^2 \theta\right) u^2 \Big] \sin \theta d\theta dr \geq 0 \, ,
\]
with equality only for $u \equiv 0$.
\begin{Proposition}
For any $\alpha \leq \frac{9}{4}$ we have $Q\geq 0$, with equality if and only if $u \equiv 0$.
\end{Proposition}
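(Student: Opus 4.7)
The case $\alpha\le 0$ is immediate: every term in the integrand is manifestly non-negative (since $\Delta_-,\Delta_\theta \ge 0$ on the domain and $-\alpha/l^2\ge 0$), and $Q=0$ forces $\partial_r u = \partial_\theta u = 0$ wherever $\Delta_-,\Delta_\theta > 0$, so $u$ is constant on the exterior; combined with the Dirichlet condition this yields $u\equiv 0$. The plan is therefore to focus on the nontrivial range $0 < \alpha \le 9/4$.

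For that range I would follow exactly the twisted-derivative strategy of Section \ref{sec:SchwSch}. Writing $u = f\,v$ for a positive smooth function $f$ with Dirichlet-compatible asymptotics $f\,r^{3/2-\kappa}\to\mathrm{const}\ne 0$ at $\scri$ (where $\kappa=\sqrt{9/4-\alpha}$), bounded up to the horizon and smooth at the rotation axis, I would integrate the cross-terms $2 f v\,\partial_r f\,\partial_r v$ and $2 f v\,\partial_\theta f\,\partial_\theta v$ by parts in $r$ and $\theta$ respectively. All boundary contributions vanish: at $r=r_+$ because $\Delta_-(r_+)=0$, at $\theta=0,\pi$ because $\sin\theta=0$, and at $\scri$ because of the Dirichlet condition combined with the asymptotics of $f$. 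The resulting identity is
\[
Q = \int_{r_+}^\infty\!\!\int_0^\pi f^2\bigl[\Delta_-(\partial_r v)^2 + \Delta_\theta(\partial_\theta v)^2\bigr]\sin\theta\,d\theta\,dr + \int_{r_+}^\infty\!\!\int_0^\pi \Phi\, v^2\, \sin\theta\,d\theta\,dr,
\]
with effective potential
\[
\Phi(r,\theta) = -f\,\partial_r(\Delta_-\partial_r f) - \frac{f}{\sin\theta}\,\partial_\theta\!\bigl(\sin\theta\,\Delta_\theta\,\partial_\theta f\bigr) - \frac{\alpha}{l^2}(r^2+a^2\cos^2\theta)\,f^2.
\]
The first integral is manifestly non-negative, so the task reduces to exhibiting an $f>0$ with the correct asymptotics for which $\Phi\ge 0$ pointwise.

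The natural starting choice is purely radial, generalising \eqref{fdef}, namely
\[
f(r) = (r^2+a^2)^{-1/2}\,(1+r^2/l^2)^{-1/4+\kappa/2},
\]
where the factor $(r^2+a^2)^{-1/2}$ is included to absorb the rotation-dependent shift in the mass weight. With a purely radial $f$ the angular contribution to $\Phi$ drops and pointwise non-negativity becomes the single inequality $-\partial_r(\Delta_- \partial_r f)/f \ge \alpha(r^2+a^2)/l^2$ on $[r_+,\infty)$ (using $\cos^2\theta\le 1$). This is the step where the Breitenlohner--Freedman bound $\alpha\le 9/4$ enters: a direct asymptotic expansion shows the leading $r^2/l^2$ terms on either side saturate exactly when $\alpha=9/4$, reflecting the sharpness of BF, while the subleading constants should produce the additional $\alpha a^2/l^2$ precisely because of the choice of the $(r^2+a^2)^{-1/2}$ prefactor. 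I expect the remaining $r^{-1}$ and lower-order contributions (involving the mass $m$) to be handled using $|a|<l$, which is where the main technical obstacle will lie. Should the radial ansatz above prove just insufficient near $\alpha=9/4$, the separable ansatz $f=R(r)\Theta(\theta)$ offers additional freedom: the pointwise inequality $\Phi\ge 0$ then decouples, via a separation constant $\lambda$, into a pair of one-dimensional Hardy-type inequalities for $R$ and $\Theta$ --- reflecting the Carter-type separability of the Kerr--AdS Laplacian --- each of which should be solvable by elementary choices (e.g.\ $\Theta=\Delta_\theta^{-\sigma}$ for a suitable $\sigma$).

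For the strict equality statement, I would argue as follows. If $Q=0$, then by positivity of the first integral above, $\partial_r v = \partial_\theta v = 0$ wherever $\Delta_-,\Delta_\theta>0$, which forces $v$ to be constant on the exterior by continuity. Then $u=fv$ inherits the $r^{-3/2+\kappa}$ asymptotics of $f$ at $\scri$, which is incompatible with the Dirichlet boundary condition $r^{3/2-\kappa}u\to 0$ unless $v\equiv 0$. Hence $u\equiv 0$, completing the proof.
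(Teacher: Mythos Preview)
Your outline follows the same twisting/Hardy-inequality philosophy as the paper, but the actual proof is not there: the entire content of the Proposition lies in the verification you defer with the words ``I expect'' and ``should''. The paper's argument reduces (as you do) to $\alpha=9/4$ and then proves two explicit one-dimensional Hardy inequalities --- an angular one (Lemma~\ref{maina}) and a radial one (Lemma~\ref{mainr}) --- with a carefully chosen coupling: the radial estimate overshoots by $\tfrac{7}{4}a^2/l^2$ rather than $\tfrac{9}{4}a^2/l^2$, and the residual $\tfrac{3}{2}a^2/l^2$ is spent to make the angular inequality close. The radial lemma requires a specific multiplier of the form $f=\tfrac{3}{2l}\sqrt{(r-r_+)/h(r)}\,g(r)$ with $g$ a quadratic tuned so that the resulting inequality reduces to the positivity of an explicit quadratic $F(r)$, and the verification $F(r_+)>0$ uses the Hawking--Reall bound $r_+^2\ge|a|l$ in an essential way. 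None of this is automatic.

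Concretely: your purely radial ansatz demands the pointwise bound $-\partial_r(\Delta_-\partial_r f)/f\ge\tfrac{9}{4}(r^2+a^2)/l^2$, which is \emph{strictly stronger} (by $\tfrac{1}{2}a^2/l^2$) than what the paper establishes in Lemma~\ref{mainr}, and you give no argument that your particular $f$ achieves even the weaker bound. Your separable fallback $f=R(r)\Theta(\theta)$ is indeed equivalent to the paper's two-lemma structure, but you neither choose $R,\Theta$ nor verify anything. Finally, you invoke only $|a|<l$, whereas the paper's radial estimate genuinely needs $r_+^2\ge|a|l$; without it the constant-term analysis at $r=r_+$ fails. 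In short, the architecture is right, but the building is missing: you must actually exhibit multipliers and check the resulting inequalities, and that is where all the difficulty lies.
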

\begin{proof}
Clearly, it suffices to prove the statement for $\alpha = \frac{9}{4}$. The Proposition will be an immediate consequence of two Lemmas established below: 
Adding the estimates (\ref{maina}) (integrated in $\int_{r_+}^\infty dr$) and the estimate (\ref{mainr}) (integrated in $\int_0^\pi \sin \theta d\theta$) yields $Q >0$ for $u \in C^1$, unless $u = 0$. We can then pass to a limit of continuous functions to establish the result for functions in $\Hh^1$.
\end{proof}

\begin{Lemma}
For any function $u \in C^1([0, \pi])$, we have the estimate
\begin{align} \label{maina}
\frac{9}{4} \frac{a^2}{l^2}  \int_0^\pi u^2 \cos^2 \theta \sin \theta d\theta \leq \int_0^\pi \Delta_\theta \left(\partial_\theta u \right)^2 \sin \theta d\theta +\frac{3}{2} \frac{a^2}{l^2} \int_0^\pi u^2 \sin \theta d\theta,
\end{align}
with strict inequality unless $u \equiv 0$.
\end{Lemma}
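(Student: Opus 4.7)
The plan is to prove (\ref{maina}) by writing its ``right-hand side minus left-hand side'' as a sum of two manifestly non-negative quantities. In the spirit of the twisted-derivative philosophy of this paper, I would introduce the twisting function $f(\theta) := \Delta_\theta^{-3/4}$ and the associated twisted derivative $\tilde{u}_\theta := f\,\partial_\theta(u/f) = u_\theta - (f'/f)\,u$. A short calculation (writing $\mu := a^2/l^2 \in (0,1)$) gives $f'/f = -\tfrac{3\mu \sin\theta\cos\theta}{2\Delta_\theta}$. The exponent $-3/4$ is forced by the factor $9/4 = (3/2)^2$ on the left-hand side of (\ref{maina}), exactly analogous to the Breitenlohner--Freedman twisting used in the radial direction in Section \ref{sec:SchwSch}.

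Next I would substitute $u_\theta = \tilde{u}_\theta + (f'/f)\,u$ into $\int_0^\pi \Delta_\theta u_\theta^2 \sin\theta\, d\theta$, expand, and integrate the cross term by parts. Since $(f'/f)\,\Delta_\theta \sin\theta = -\tfrac{3\mu}{2}\sin^2\theta\cos\theta$ vanishes at $\theta = 0, \pi$, the boundary contribution drops. Using $(\sin^2\theta\cos\theta)' = \sin\theta\,(3\cos^2\theta - 1)$ produces the identity
\begin{align*}
\int_0^\pi \Delta_\theta u_\theta^2 \sin\theta\, d\theta &= \int_0^\pi \tilde{u}_\theta^2 \Delta_\theta \sin\theta\, d\theta + \tfrac{9\mu}{2} \int_0^\pi u^2 \cos^2\theta \sin\theta\, d\theta \\
&\quad - \tfrac{3\mu}{2} \int_0^\pi u^2 \sin\theta\, d\theta - \tfrac{9\mu^2}{4} \int_0^\pi \frac{u^2 \sin^3\theta \cos^2\theta}{\Delta_\theta}\, d\theta .
\end{align*}
Rearranging so that the ``right minus left'' of (\ref{maina}) appears on the left, and exploiting the algebraic identity $\Delta_\theta - \mu\sin^2\theta = 1 - \mu$ to combine the last two terms, yields the clean decomposition
\begin{align*}
&\int_0^\pi \Delta_\theta u_\theta^2 \sin\theta\, d\theta + \tfrac{3\mu}{2}\int_0^\pi u^2 \sin\theta\, d\theta - \tfrac{9\mu}{4}\int_0^\pi u^2 \cos^2\theta \sin\theta\, d\theta \\
&\qquad = \int_0^\pi \tilde{u}_\theta^2 \Delta_\theta \sin\theta\, d\theta + \tfrac{9\mu(1-\mu)}{4}\int_0^\pi \frac{u^2 \cos^2\theta \sin\theta}{\Delta_\theta}\, d\theta .
\end{align*}

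Both terms on the right are manifestly $\geq 0$ since $0 < \mu < 1$ and $\Delta_\theta > 0$, which establishes (\ref{maina}). For strict inequality, if $u \not\equiv 0$ then by $C^1$ continuity $u$ is nonzero on an open subinterval of $(0,\pi)$, which must meet $(0,\pi)\setminus\{\pi/2\}$ where the weight $\cos^2\theta\sin\theta/\Delta_\theta$ is strictly positive, so the second integral is strictly positive.

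The only nontrivial step is guessing the exponent in $f = \Delta_\theta^{-3/4}$; once chosen, the proof reduces to one integration by parts together with the elementary identity $1 - \mu\cos^2\theta - \mu\sin^2\theta = 1-\mu$, with the gap between the two sides of (\ref{maina}) turning out to be exactly a twisted Dirichlet energy plus a positive zeroth-order term with coefficient $\propto \mu(1-\mu)$.
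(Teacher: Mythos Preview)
Your proof is correct and takes a genuinely different route from the paper's. The paper completes the square via
\[
\int_0^\pi\bigl(\sin\theta\,\partial_\theta u+\gamma\cos\theta\,u\bigr)^2\sin\theta\,d\theta\ge 0,
\]
which amounts to twisting with $f=\sin^{-\gamma}\theta$; after integrating by parts and optimising $\gamma=3/2$, this controls only the $\frac{a^2}{l^2}\sin^2\theta$ part of $\Delta_\theta$, and the remaining $\Xi\,(\partial_\theta u)^2$ term is simply discarded. Strictness is then argued in two cases (non-constant $u$ via the discarded $\Xi$-term, constant $u$ by explicit integration). Your choice $f=\Delta_\theta^{-3/4}$ instead produces an exact identity, writing the gap in (\ref{maina}) as a twisted Dirichlet energy plus a zeroth-order term with coefficient $\tfrac{9}{4}\mu(1-\mu)$, and strictness follows directly from the second term without a case split. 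Your approach is more in the spirit of the paper's own renormalisation philosophy and yields a cleaner remainder; the paper's approach has the virtue of a simpler twisting function and a one-line starting inequality.
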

\begin{proof}
Consider the inequality
\begin{align}
\int_0^\pi  \left(\sin \theta \cdot \partial_\theta u + \gamma \cos \theta u\right)^2 \ \sin \theta d\theta \geq 0,
\end{align}
for any constant $0<\gamma<3$. Squaring, integrating the mixed term by parts and using $\cos^2 \theta + \sin^2 \theta = 1$ yields the inequality
\begin{align} 
\frac{a^2}{l^2} \int_0^\pi \sin^3 \theta \left(\partial_\theta u \right)^2 d\theta + \frac{a^2}{l^2} \gamma \int_0^\pi u^2 \sin \theta d\theta \geq \frac{a^2}{l^2} \left(3\gamma - \gamma^2 \right) \int_0^\pi u^2 \cos^2 \theta \sin \theta d\theta \, . \nonumber
\end{align}
Noting that $\Delta_\theta = \frac{a^2}{l^2} \sin^2 \theta + \Xi$  we consequently have
\begin{align} 
\int_0^\pi \Delta_\theta \left(\partial_\theta u \right)^2 \sin \theta d\theta + \frac{a^2}{l^2} \gamma \int_0^\pi u^2 \sin \theta d\theta \geq \frac{a^2}{l^2} \left(3\gamma - \gamma^2 \right) \int_0^\pi u^2 \cos^2 \theta \sin \theta d\theta . \nonumber
\end{align}
Choosing $\gamma = \frac{3}{2}$ yields (\ref{maina}).
This inequality is indeed strict unless $u \equiv0$: Since $\Xi>0$, this is immediate provided $u$ is not $\theta$-independent. However, it is strict also in the latter case as can be checked by explicit integration.
\end{proof}

Now, let us treat the radial part
\begin{Lemma}
For any $u \in C^1\left[r_+,\infty\right)$ satisfying $u r^\frac{3}{2} \rightarrow 0$ as $r \rightarrow \infty$ we have
\begin{align} \label{mainr}
\frac{1}{l^2} \int_{r_{+}}^\infty \left( \frac{9}{4} r^2 + \frac{7}{4}a^2 \right) u^2 dr  \leq \int_{r_{+}}^\infty  \Delta_- \left( \partial_r u  \right)^2 dr  \, .
\end{align}
\end{Lemma}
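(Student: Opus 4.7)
The plan is to mimic the angular-case strategy by building the estimate from a completion of squares, using the identity
\[
V(r) = \frac{9r^2 + 7a^2}{4l^2} = \frac{1}{4l^2}\frac{d}{dr}\bigl(3r^3 + 7a^2 r\bigr).
\]
I would first integrate $\int_{r_+}^\infty V u^2\, dr$ by parts. The boundary term at infinity vanishes because $u r^{3/2} \to 0$, and the boundary term at $r_+$ can be absorbed using $\int_{r_+}^\infty u u'\, dr = -u(r_+)^2/2$, by splitting $3r^3 + 7a^2 r$ into its value at $r_+$ plus the remainder
\[
P(r) := (r - r_+)\bigl[3(r^2 + r r_+ + r_+^2) + 7a^2\bigr],
\]
which vanishes at the horizon and is strictly positive on $(r_+,\infty)$. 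After this bookkeeping, the target inequality reduces to
\[
\int_{r_+}^\infty \Delta_-(u')^2\, dr + \frac{1}{2l^2}\int_{r_+}^\infty P(r)\, u u'\, dr \geq 0.
\]

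Next I would complete the square in $u'$, writing the left-hand side as
\[
\int_{r_+}^\infty \Delta_-\!\left( u' + \frac{P}{4 l^2 \Delta_-} u\right)^{\!2} dr \;-\; \int_{r_+}^\infty \frac{P(r)^2}{16 l^4 \Delta_-}\, u^2\, dr.
\]
The weight $P^2/\Delta_-$ is well behaved: both $P$ and $\Delta_-$ vanish linearly at $r_+$, so the integrand is bounded there; at infinity $P^2/\Delta_- \sim \tfrac{9}{16 l^2} r^2$, matching $V$ up to a factor of $1/4$.

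The hard part will be closing the estimate, i.e.\ showing that the manifestly non-negative square dominates the leftover Hardy-type integral. My plan is to accomplish this by a comparison argument: produce a smooth, strictly positive auxiliary function $\phi$ on $[r_+,\infty)$ that is regular at the horizon (so $\Delta_-'(r_+) \phi'(r_+) = -V(r_+)\phi(r_+)$) and decays like $r^{-3/2}$ at infinity, satisfying $-(\Delta_-\phi')' \geq V\phi$ pointwise. Then the substitution $u = \phi v$ transforms $\int \Delta_-(u')^2 - V u^2\, dr$ into $\int \Delta_-\phi^2(v')^2\, dr$ plus a boundary term $[\Delta_-\phi'\phi\, v^2]_{r_+}^\infty$, which vanishes at $r_+$ (where $\Delta_-=0$) and at infinity (since $\Delta_-\phi'\phi \sim -\tfrac{3}{2l^2} r^2$ while $v^2 = r^3 u^2 \to 0$). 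The existence of such $\phi$ can either be secured abstractly by applying Proposition~\ref{prop1} to the one-dimensional radial operator $-\partial_r(\Delta_-\partial_r\cdot)$, or constructed explicitly as a positive solution of the corresponding Sturm–Liouville problem; the Hawking–Reall bound $r_+^2 > |a|l$ enters precisely to ensure $\Delta_-'(r_+) > 0$ and hence a well-defined regular boundary condition at the horizon. Strictness of the inequality (for $u \not\equiv 0$) follows from the fact that equality in the completion of squares forces $v$ constant, incompatible with $v(\infty)=0$ unless $u\equiv 0$.
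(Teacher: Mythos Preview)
Your reduction via integration by parts to the equivalent statement
\[
\int_{r_+}^\infty \Delta_-(u')^2\,dr + \frac{1}{2l^2}\int_{r_+}^\infty P(r)\,u u'\,dr \geq 0
\]
is correct, and the completion of squares is fine algebra but, as you note, does not close. The ground-state substitution $u=\phi v$ is also a sound technique; in fact, via the Riccati change $f=-\sqrt{\Delta_-}\,\phi'/\phi$ one checks that the supersolution condition $-(\Delta_-\phi')'\geq V\phi$ is \emph{identical} to the paper's condition $-f^2+(\sqrt{\Delta_-}f)'\geq V$. So your proposed route and the paper's are essentially the same approach.

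The gap is that you have not produced $\phi$. The existence of a strictly positive supersolution on $[r_+,\infty)$ with the correct decay is \emph{equivalent} to the Hardy inequality you are trying to prove, so appealing to Proposition~\ref{prop1} is circular: that proposition yields a first eigenfunction $u_1$ with $-(\Delta_- u_1')'-Vu_1=\omega_1 w u_1$, but gives no information about the sign of $\omega_1$, which is precisely the content of the lemma. Likewise, ``constructing $\phi$ explicitly as a positive solution of the Sturm--Liouville problem'' hides the issue: local ODE theory gives a regular solution near $r_+$, but whether it stays positive globally is again the question at hand. The paper resolves this by writing down an explicit multiplier $f$ (equivalently an explicit $\phi$) and reducing the supersolution condition to a concrete polynomial inequality $F(r)=-g(r)^2+(r+r_+)h(r)\geq 0$, which is then verified by hand.

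You have also misidentified the role of the Hawking--Reall bound. The condition $\Delta_-'(r_+)>0$ is simply non-extremality of the horizon and is assumed separately; it has nothing to do with $r_+^2>|a|l$. In the paper the Hawking--Reall bound enters exactly in the algebraic verification of $F(r)\geq 0$ and of $F'(r)>0$ on $[r_+,\infty)$ --- that is, in checking that the explicit candidate supersolution actually is one. Without that algebraic step (or an equivalent explicit construction), your argument does not go through.
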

\begin{proof} 
Let $f: \left[r_{+},\infty\right) \rightarrow \mathbb{R}$ be a continuous function which is differentiable in $\left(r_{+},\infty\right)$ and  such that $f r^{-1}$ is uniformly bounded. We have the identity
\begin{align}
\int_{r_{+}}^\infty \left(\sqrt{\Delta_-}  \partial_r u + f u \right)^2 dr \geq 0 \, ,
\end{align}
which after integration by parts may be rewritten as
\begin{align} \label{aim0}
\int_{r_{+}}^\infty \left(-f^2 + \partial_r \left(\sqrt{\Delta_-} f \right) \right) u^2 \ dr \leq \int_{r_{+}}^\infty  \Delta_- \left( \partial_r u  \right)^2 dr \, .
\end{align}
We claim there exists an admissible $f(r)$ such that
\be
-f(r)^2 + \frac{d}{dr} \left(\sqrt{\Delta_-} f(r) \right) \geq \frac{9}{4} \frac{r^2}{l^2} + \frac{7}{4} \frac{a^2}{l^2}.
\ee
We note that we can factorise
\bean
\Delta_- &=& \frac{1}{l^2} (r-r_{+})\left(r^3 + r^2 r_{+} + r(r_{+}^2+l^2+a^2)-\frac{a^2 l^2}{r_{+}}\right ), \\
&=& \frac{1}{l^2}(r-r_{+}) h(r).
\eean
Let us write
\be
f(r) = \frac{3}{2 l} \sqrt{\frac{r-r_{+}}{h(r)}} g(r),
\ee
where $g(r)$ is a quadratic function in $r$ whose $r^2$ coefficient is unity. Clearly it suffices to prove that we may choose $g$ such that
\be
\frac{-9}{4l^2} \frac{ (r-r_{+}) g(r)^2}{h(r)} + \left[ \frac{3}{2 l^2} \frac{d}{dr} \left( (r-r_{+}) g(r) \right)-\frac{9}{4 l^2} r^2 -\frac{7}{4 l^2} a^2 \right] \geq 0.
\ee
Let us choose $g(r)$ such that the term inside the square bracket is
\be
 \left[ \frac{3}{2 l^2} \frac{d}{dr} \left( (r-r_{+}) g(r) \right)-\frac{9}{4 l^2} r^2 -\frac{7}{4 l^2} a^2 \right]  = \frac{9}{4 l^2} (r-r_{+})(r+r_{+}).
\ee
A brief calculation shows that for this we should take
\be
g(r) = r^2 + r r_{+} + \frac{1}{6}\left( 7 a^2 - 3 r_{+}^2\right).
\ee
Noting that $\frac{(r-r_{+})}{h(r)} \geq 0$, it remains then to prove that
\be
F(r):=-g(r)^2 + (r+r_{+}) h(r) \geq 0.
\ee
Thanks to our happy choice of $g$, $F(r)$ is a \emph{quadratic} in $r$, so all that remains to us is to verify that it is positive in $r\geq r_{+}$. We calculate that
\bean
F(r) &=& \left[ l^2 \left(1-\frac{a^2}{l^2} \right)+\frac{r_{+}^2}{3} \left(1-\frac{a^2}{r_{+}^2} \right)\right](r^2+r r_{+})  \\ && \quad + \frac{5}{3} r_{+}^2 (r^2+ r r_{+}) -\frac{a^2l^2}{r_{+}} r \\
&& \quad + \frac{1}{36}\left(42 a^2 r_{+}^2 - 9 r_{+}^4 -36 a^2 l^2 -49 a^4 \right).
\eean
Differentiating, we have
\bean
F'(r) &=& \left[ l^2 \left(1-\frac{a^2}{l^2} \right)+\frac{r_{+}^2}{3} \left(1-\frac{a^2}{r_{+}^2} \right)\right](2r+ r_{+})  \\ && \quad + \frac{5}{3} r_{+}^2 (2 r+r_{+}) -\frac{a^2l^2}{r_{+}} .
\eean
Making use of $r\geq r_{+}$ and $r_{+}^2 \geq a l$, we deduce that $F'(r)>0$ for $r>r_{+}$, thus $F(r) \geq F(r_{+})$. Finally then, we calculate
\be
F(r_{+}) = \left(2l^2 r_{+}^2 +\frac{3}{2} r_{+}^4\right)\left(1-\frac{a^2}{r_{+}^2} \right) + \frac{49}{36} r_{+}^4 \left(1-\frac{a^4}{r_{+}^4} \right)+\frac{8}{9} r_{+}^4 > 0,
\ee
which completes the proof of the Lemma. \end{proof}

We sum up then with
\begin{Theorem}\label{thm52}
Smooth solutions $\psi$ to \eq{vas} on the exterior of an AdS-Kerr black hole satisfying $|a|<l$ and the Hawking-Reall bound $r_+^2>|a|l$, subject to Dirichlet conditions at infinity are bounded pointwise in time, up to and including the horizon.
\end{Theorem}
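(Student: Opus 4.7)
The plan is to combine the spectral framework set up in Theorem \ref{thm51} with the positivity of the quadratic form $Q$ established in the preceding Proposition. Concretely, observe that after multiplying the eigenvalue equation \eqref{Kerrop} by $u\,\Sigma \sin\theta$ and integrating by parts (legitimate for Dirichlet data in the twisted Sobolev space $\dot{\Hh}^1_0(\Sigma,\kappa)$), the left-hand side becomes precisely the bilinear form associated with the Rayleigh quotient for $L$, while the right-hand side becomes $\omega\, \|u\|^2$ for a positive weighted $L^2$-norm equivalent to $\|u\|_{\Ll^2(\Sigma)}^2$. Hence the lowest eigenvalue satisfies
\begin{equation*}
\omega_1 \;=\; \inf_{u\in \dot{\Hh}^1_0(\Sigma,\kappa)\setminus\{0\}} \frac{Q[u]}{\|u\|^2}.
\end{equation*}

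Next, I would invoke the Proposition proven just above: for $\alpha \le 9/4$ and $u$ satisfying Dirichlet conditions (so that the boundary terms arising in the integration by parts vanish), one has $Q[u] \ge 0$ with equality only for $u \equiv 0$. Strict positivity alone does not immediately give $\omega_1 > 0$; however, by Proposition \ref{prop1}(vi) the infimum defining $\omega_1$ is attained by a non-trivial eigenfunction $u_1 \in \dot{\Hh}^1_0(\Sigma,\kappa)$ with $\|u_1\|=1$. Applying strict positivity of $Q$ to this minimiser forces $\omega_1 = Q[u_1] > 0$. (Strictly speaking one first proves the inequalities \eqref{maina} and \eqref{mainr} for $u \in C^1$ and then extends to $\dot{\Hh}^1_0$ by density, exactly as noted in the Proposition's proof.)

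With $\omega_1 > 0$ established, Theorem \ref{thm51}(i) applies verbatim, delivering the pointwise-in-time boundedness of smooth solutions to \eqref{vas} on the black hole exterior, up to and including the horizon (after the standard redshift argument referred to in the statement of Theorem \ref{mainthm}). The boundary conditions at $|a|<l$ and the Hawking–Reall bound $r_+^2 > |a|l$ are exactly the hypotheses that make the Lemma preceding Theorem \ref{thm51} applicable, ensuring that $(\mathcal{R},\mathcal{H}^+,\Sigma,g,r,T)$ is a stationary asymptotically AdS black hole in the sense of Definition \ref{adsbh}, so that the entire machinery of Section \ref{sec:genstat} is available.

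The only real content of the argument is therefore the chain Proposition $\Rightarrow \omega_1 > 0 \Rightarrow$ Theorem \ref{schbdn}; the two Lemmas already did the analytic work. The subtlest point, and the one I would be most careful about, is the equality case in the Proposition: since the angular Lemma gives strict inequality only when $u$ is not $\theta$-independent unless one uses $\Xi>0$, one must verify that $u_1\not\equiv 0$ cannot simultaneously saturate both \eqref{maina} and \eqref{mainr}; because $\Xi = 1-a^2/l^2 > 0$ by the assumption $|a|<l$, strictness in the angular estimate is guaranteed, closing the argument.
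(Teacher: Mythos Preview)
Your proposal is correct and follows essentially the same route as the paper: the Proposition establishes $Q[u]>0$ for $u\not\equiv 0$, and since the Rayleigh--Ritz infimum defining $\omega_1$ is attained (Proposition \ref{prop1}(vi)), one concludes $\omega_1>0$ and invokes Theorem \ref{thm51}(i). Your explicit treatment of the attainment step and the role of $\Xi>0$ in the equality case is slightly more detailed than the paper's ``we sum up then with'' statement, but the argument is the same.
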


This proves the Dirichlet part of Theorem \ref{theo:2} in the introduction. For the Neumann-part of that theorem, since $|a|<l$ is small, we may argue by continuity using the resolution of the Schwarzschild problem. From Proposition \ref{enlem} we know that for $a=0$ and any $M, l$ and $5/4<\alpha<9/4$ we have $\omega_1>0$, and hence boundedness of solutions to the Neumann problem. For each choice of $M, l, \alpha$, by continuity of $\omega_1$ this result will hold for $\abs{a} < a_c$, where $a_c$ is defined by
\be
a_c = \min \left(\{l\}\cup \{\abs{a} : \abs{a} l = r_+^2\} \cup \{ \abs{a} : \omega_1 = 0\} \right).
\ee
In other words, boundedness holds for solutions to the massive wave equation with Neumann boundary conditions on Kerr-AdS black holes up to the point where either the Hawking-Reall bound is saturated or else linear scalar hair appears. Numerical investigations in \cite{Dias:2010ma}, as well as our own numerical studies suggest that no linear scalar hair appears for black holes obeying the Hawking-Reall bound:
\begin{conj*}
For Neumann boundary conditions, Theorem \ref{theo:2} also holds assuming only $|a|<l$ and $r_+^2 > |a| l$.
\end{conj*}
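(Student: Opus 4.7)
The plan is to mirror the proof of Theorem \ref{thm52} in the Dirichlet case, but to work throughout with twisted derivatives so that no boundary terms are lost at infinity when integrating by parts against Neumann data. After reducing to axisymmetric eigenfunctions as in Theorem \ref{thm51}, I would fix a twisting function $\tilde f(r)$ with the Neumann asymptotics $\tilde f(r) \sim r^{-\frac{3}{2}+\kappa}$ near infinity and appropriate regularity at the horizon, write $u = \tilde f v$, and rewrite the Rayleigh quotient $\langle Lu,u\rangle$ as
\begin{equation*}
\widetilde Q[v] = \int_{r_+}^\infty\!\int_0^\pi \Big[\Delta_-\tilde f^2 (\partial_r v)^2 + \Delta_\theta\tilde f^2 (\partial_\theta v)^2 - U(r,\theta)\,\tilde f^2 v^2\Big]\sin\theta\, d\theta\, dr,
\end{equation*}
where $U$ bundles the mass term and the contributions from $\tilde f'/\tilde f$. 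By construction, the boundary terms at $r\to\infty$ produced by the integration by parts vanish for Neumann test functions.

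I would then apply the angular estimate (\ref{maina}) verbatim (now to $v(r,\cdot)$) to control $\int \Delta_\theta(\partial_\theta v)^2\sin\theta\,d\theta$ from below by $\frac{9}{4}\frac{a^2}{l^2}\int v^2\cos^2\theta\sin\theta\,d\theta - \frac{3}{2}\frac{a^2}{l^2}\int v^2\sin\theta\,d\theta$. The remaining task is a twisted radial Hardy estimate: produce a multiplier $F(r)$ such that
\begin{equation*}
\int_{r_+}^\infty \bigl(\sqrt{\Delta_-}\,\tilde f\,\partial_r v + F(r)\, v\bigr)^2 dr \geq 0
\end{equation*}
yields, after integration by parts, a pointwise bound on $\tilde f^2$ times the required radial potential that combines with the angular estimate to give $\widetilde Q \geq 0$. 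The natural ansatz, following the Dirichlet proof, is $F(r) \propto \tilde f(r)\sqrt{(r-r_+)/h(r)}\,g(r)$ with $g(r)$ a polynomial tuned so that the residual $-g^2 + (r+r_+)h$ factors through a manifestly non-negative quadratic under $r_+^2 > |a|l$.

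The main obstacle is the interaction of this multiplier construction with the Neumann boundary condition at infinity. Because $v$ tends to a finite nonzero limit at $r=\infty$, the Hardy integration by parts now produces a surface contribution $\bigl[\sqrt{\Delta_-}\,\tilde f\, F\, v^2\bigr]_{r_+}^\infty$ which vanished trivially in the Dirichlet setting but here forces $\tilde f F$ to decay faster than $r^{-1/2}$ at infinity. Balancing that decay against the pointwise positivity requirement on $-F^2 + \tilde f^{-2}\partial_r(\sqrt{\Delta_-}\,\tilde f F)$ is the crux, and failure at any point of the Hawking--Reall-admissible parameter region would correspond to the appearance of genuine linear scalar hair by Corollary \ref{cor42} and the discussion following Theorem \ref{thm51}. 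As a fall-back, the continuity argument already given after Theorem \ref{thm52} reduces the conjecture to the exclusion of zero modes, so one could instead attempt to rule out axisymmetric stationary solutions of \eqref{Kerrop} with Neumann asymptotics directly -- for instance, exploiting the Carter-type separability of the static equation in Boyer--Lindquist coordinates and combining it with the strong maximum principle from Proposition \ref{prop1}(viii) -- bypassing the need for an explicit multiplier altogether.
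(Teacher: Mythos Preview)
The statement you are attempting to prove is stated in the paper as a \emph{Conjecture}, not as a theorem. There is no proof in the paper to compare against. Immediately after stating the conjecture, the authors write explicitly that to establish it rigorously it would suffice to exhibit a smooth positive function $u$ with Neumann asymptotics such that
\[
-\left[\frac{1}{\Sigma\sin\theta}\partial_r\bigl(\Delta_-\partial_r u\bigr)+\frac{1}{\Sigma\sin\theta}\partial_\theta\bigl(\Delta_\theta\partial_\theta u\bigr)+\frac{\alpha}{l^2}u\right]\ge 0,
\]
not identically zero, and then state: ``While we can find such a function for certain subsets of the parameter space, we have not yet found a $u$ which demonstrates boundedness on every black hole obeying the Hawking--Reall bound and for all values $5/4<\alpha<9/4$.'' The evidence they offer is numerical.

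Your proposal is precisely along the lines the paper suggests: twist by a Neumann-compatible $\tilde f$ and attempt a Hardy-type multiplier argument for the resulting $v=u/\tilde f$. You have correctly located the obstruction that makes the Neumann case harder than the Dirichlet one, namely that $v$ does not vanish at infinity, so the integration by parts in the radial Hardy step produces a surface term at $r=\infty$ which constrains the decay of $\tilde f F$. This is exactly the reason the authors could not push their explicit construction through the full parameter range. Your fall-back suggestion---rule out axisymmetric stationary Neumann solutions directly, perhaps via separability and a maximum principle---is a reasonable alternative line of attack, but note that it too is not carried out in the paper and is not obviously easier: the strong maximum principle of Proposition~\ref{prop1}(viii) tells you the first eigenfunction is signed, not that the first eigenvalue is positive.

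In short, your proposal is a coherent plan that matches the strategy the authors themselves identify, but it does not close the gap they left open, and neither does the paper.
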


To establish this rigorously, it would suffice to find a smooth, positive function $u$, obeying the relevant Neumann conditions at infinity such that
\be
 - \left[ \frac{1}{\Sigma \sin \theta} \frac{\partial}{\partial r} \left(\Delta_- \frac{\partial u}{\partial r}  \right)+\frac{1}{\Sigma \sin \theta} \frac{\partial}{\partial \theta} \left(\Delta_\theta \frac{\partial u}{\partial \theta}  \right)+ \frac{\alpha}{l^2} u \right] \geq 0,
\ee
where the left-hand side should not vanish everywhere. While we can find such a function for certain subsets of the parameter space, we have not yet found a $u$ which demonstrates boundedness on every black hole obeying the Hawking-Reall bound and for all values $5/4<\alpha<9/4$.

\section{Proof of Theorem \ref{RKthm}\label{RKthmproof}}

We wish to prove the compactness of the embedding $\Hh^1(\Sigma, \kappa) \hookrightarrow \Ll^2(\Sigma)$. In order to do this, we shall first consider a simpler problem on the half-space $\mathbb{R}^N_+$, and then show how a partition of unity argument can be applied to obtain the full result. 

Let us write $\mathbb{R}^N_+ = \{(x, x^a) \in \mathbb{R}^N: x\geq 0\}$. We assume that $U \subset \mathbb{R}^N_+$ is a bounded Lipschitz domain which may or may not intersect the boundary $x=0$. We define the following function spaces:
\begin{Definition}
\begin{enumerate}[i)]
\item Let $w$ be a real function which is smooth and positive on $(0, \infty)$. A measurable function $u$ belongs to $L^2(U; w)$ provided the norm
\be
\norm{u}{L^2(U;w)}^2 = \int_U u^2 w(x) dx dx^a,
\ee
is finite.
\item Let $f$ be a real function which is smooth and positive on $(0, \infty)$. We also assume $\norm{f}{L^2(U; w)}$ is finite, where we understand $f$ as a function on $U$ to mean $f(x, x^a) = f(x)$. For a differentiable function $u$ we define the $f-$twisted derivative and its adjoint
\be
\df_i u= f \frac{\partial}{\partial x^i} \left( \frac{u}{f}\right), \qquad \dfd_i u= -\frac{1}{w f} \frac{\partial}{\partial x^i} \left( w f u\right).
\ee
\item We say that $u \in H^1(U; w, f)$ if $u \in L^2(U; w)$, $\df_i u$ exists in the weak sense and
\be
\norm{u}{H^1(U; w, f)}^2 = \norm{u}{L^2(U; w)}^2 + \sum_i \norm{\df_i u}{L^2(U; w)}^2 <\infty.
\ee
\item The space  $H^1_0(U; w, f)$ is the completion of $C^\infty_c(U)$ in the $H^1(U; w, f)$ norm.
\end{enumerate}
\end{Definition}
The embedding $H^1(U;w,f) \subset L^2(U;w)$ is obviously continuous, since
\be
\norm{u}{L^2(U; w)} \leq \norm{u}{H^1(U; w, f)},
\ee
for any $u \in H^1(U;w,f)$. In order to establish that this embedding is in fact compact, we need to show that any bounded sequence $\{u_n\}$ in $H^1(U;w,f)$ has a subsequence which converges strongly in $L^2(U;w)$. This will impose conditions on the functions $w, f$. We define the following two properties for the functions $f, w$.
 \begin{Definition} We say that $f, w$ have \textbf{Property A} on the domain $U$ if functions of the form
 \be
 u = f v, \qquad v \in C^\infty(\Ub),
 \ee
 are dense in $H^1(U; w, f)$.
\end{Definition}
 \begin{Definition} We say that $f, w$ have \textbf{Property B} if there exists $\epsilon>0$ such that the function defined for $0\leq x_0<\epsilon$, $0<L<\epsilon$
\be
h(L, x_0) = \left(\int_{x_0}^{L+x_0} \frac{1}{w(t) f(t)^2} dt\right)\left(\int_{x_0}^{L+x_0}{w(t) f(t)^2} dt\right),
\ee
tends to zero uniformly in $x_0$ as $L \to 0$.
\end{Definition}
Note that if $f, w$ have Property A (or B) on a domain $U$, then their restrictions also have Property A (resp.\ B) on any domain $V\subset U$.

We shall consider $U$ to be the half-ball $B_+^\delta =  \{(x, x^a)\in \mathbb{R}^N : x\geq 0, x^2+x^a x^a \leq \delta\}$, and let $\{u_n\}$ be a sequence of functions in $H(B_+^\delta; w, f)$ which vanish in a neighbourhood of the curved surface of $B_+^\delta$. Using the weak compactness of $H^1(B_+^\delta;w,f)$, we may assume without loss of generality that $\{u_n\}$ converges weakly to some $u \in H^1(B_+^\delta; w, f)$.  We shall show that in fact, $\{u_n\}$ converges strongly in $L^2(B_+^\delta; w)$ provided properties $A$ and $B$ hold and that $\delta$ is sufficiently small. Key to establishing this result is a Poincar\'e inequality on cubes of side length $L$

\begin{Lemma}[Twisted Poincar\'e inequality]\label{Poin}
Let $\Pi = \{(x, x^a):  x_0\leq x \leq x_0+L, \ \ 0\leq x^a \leq L\}$ and suppose that $f, w$ satisfy property A  for domain $\Pi$. Then we have the following inequality for $u\in H^1(\Pi; w, f)$.
\ben{comp1}
\norm{u}{L^2(\Pi; w)}^2 \leq \frac{\ip{u}{f}{L^2(\Pi; w)}^2}{\norm{f}{L^2(\Pi; w)}^2} + C(L, x_0) \sum_i \norm{\tilde{\nabla}_i u}{L^2(\Pi; w)}^2,
\een
where
\be
C(L, x_0) = N \max \{ L^2, h(L, x_0) \}.
\ee
\end{Lemma}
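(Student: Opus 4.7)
The plan is to reduce the inequality to a weighted Poincar\'e inequality for the measure $d\mu := f^2 w\, dx\, dx^a$ on $\Pi$, and then prove that inequality by handling the ``transverse'' directions $x^a$ and the ``radial'' direction $x$ separately.

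\textbf{Reduction via Property A.} By Property A, it suffices to establish the inequality for $u \in H^1(\Pi; w, f)$ of the form $u = f v$ with $v \in C^\infty(\bar\Pi)$; the general case then follows by a density argument since both sides of \eqref{comp1} are continuous with respect to the $H^1(\Pi;w,f)$-norm. For such $u$ one has $\tilde{\nabla}_i u = f\,\partial_i v$, so
\[
\|u\|_{L^2(\Pi;w)}^2 = \int_\Pi v^2\,d\mu,\qquad \|\tilde\nabla_i u\|_{L^2(\Pi;w)}^2 = \int_\Pi (\partial_i v)^2\,d\mu,
\]
and $\langle u, f\rangle_{L^2(\Pi;w)} = \int_\Pi v\,d\mu$, $\|f\|_{L^2(\Pi;w)}^2 = \mu(\Pi)$. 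The usual identity
\[
\int_\Pi (v - \bar v_\mu)^2\,d\mu = \int_\Pi v^2\,d\mu - \frac{\left(\int_\Pi v\, d\mu\right)^2}{\mu(\Pi)},\qquad \bar v_\mu := \frac{1}{\mu(\Pi)}\int_\Pi v\,d\mu,
\]
reduces the claim to proving the weighted Poincar\'e inequality
\begin{equation}\label{eq:wpoin}
\int_\Pi (v - \bar v_\mu)^2\,d\mu \;\leq\; C(L,x_0)\sum_i \int_\Pi (\partial_i v)^2\,d\mu.
\end{equation}

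\textbf{Separating variables.} Because $f$ and $w$ depend only on $x$, the product structure of $\mu$ can be exploited. Denote by $V(x) = L^{-(N-1)}\int_{[0,L]^{N-1}} v(x, x^a)\,dx^a$ the transverse slice-average, so that $\bar v_\mu$ is the one-dimensional $\mu_1$-weighted mean of $V$, where $d\mu_1 = w(x)f(x)^2\,dx$. Decomposing
\[
\int_\Pi (v-\bar v_\mu)^2\,d\mu = \int\!\!\int (v-V)^2\,dx^a\, d\mu_1 + L^{N-1}\int (V-\bar v_\mu)^2\, d\mu_1,
\]
the first term is controlled by iterated one-dimensional unweighted Poincar\'e on $[0,L]$ (acceptable because $d\mu$ is a pure product in $x^a$), giving a constant of $L^2$ against $\sum_{k=1}^{N-1}\|\tilde\nabla_{x^a_k}u\|_{L^2(\Pi;w)}^2$. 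The second (purely radial) term is the heart of the matter.

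\textbf{Weighted one-dimensional Poincar\'e.} For $V\in C^1([x_0, x_0+L])$ the key estimate is
\[
\int (V - \bar V_{\mu_1})^2\, d\mu_1 \;\leq\; h(L,x_0)\int (V')^2\, d\mu_1.
\]
This is proved by the standard sandwich: write $V(x) - \bar V_{\mu_1} = \mu_1(\Pi)^{-1}\int (V(x)-V(y))\,d\mu_1(y)$, apply Cauchy--Schwarz on the $y$-integral, then bound
\[
(V(x)-V(y))^2 = \left(\int_y^x V'(t)\,dt\right)^{\!2} \leq \left(\int_y^x (V')^2\, w f^2\, dt\right)\!\left(\int_y^x \frac{dt}{wf^2}\right),
\]
again by Cauchy--Schwarz. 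Enlarging both integrals to the full interval $[x_0, x_0+L]$, integrating against $d\mu_1(x)$, and recognising the product of the two resulting integrals as $h(L,x_0)$ gives the claim. Finally, Jensen's inequality shows $(V')^2 \leq L^{-(N-1)}\int (\partial_x v)^2\, dx^a$, so the radial contribution is bounded by $h(L,x_0)\,\|\tilde\nabla_x u\|_{L^2(\Pi;w)}^2$.

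\textbf{Conclusion.} Summing the transverse and radial contributions yields \eqref{eq:wpoin} with constant $\max\{L^2, h(L,x_0)\}$ in front of $\sum_i\|\tilde\nabla_i u\|_{L^2(\Pi;w)}^2$, which is majorised by $C(L,x_0) = N\max\{L^2, h(L,x_0)\}$ as stated. The principal obstacle is the one-dimensional step: the weights $f^2 w$ and $1/(f^2 w)$ can each be singular, and it is precisely Property B that guarantees the product of their integrals over $[x_0, x_0+L]$ is small, providing the necessary Poincar\'e constant; without this condition the argument breaks down near the boundary of the half-space where $f$ degenerates.
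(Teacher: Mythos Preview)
Your proof is correct. Both you and the paper reduce via Property~A to $u=fv$ with $v$ smooth, recognise the desired inequality as a variance/Poincar\'e bound for $v$ with respect to the product measure $d\mu=f^2w\,dx\,dx^a$, and then control the variance by the fundamental theorem of calculus plus Cauchy--Schwarz, producing $h(L,x_0)$ for the weighted $x$-direction and $L^2$ for the unweighted $x^a$-directions.

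The organisation differs, however. The paper works directly with the double integral over $\Pi\times\Pi$: it writes $v(x,x^a)-v(y,y^a)$ as a single telescoping sum of $N$ line integrals along coordinate axes, squares (picking up the factor $N$), multiplies by $f(x)^2w(x)f(y)^2w(y)$, and integrates. The left-hand side becomes $2\|u\|^2\|f\|^2-2\langle u,f\rangle^2$ directly, and each right-hand term yields either $h(L,x_0)$ or $L^2$ times the appropriate twisted gradient. Your route instead first splits the variance orthogonally into a transverse part $\int(v-V)^2\,d\mu$ and a purely radial part $L^{N-1}\int(V-\bar v_\mu)^2\,d\mu_1$, then treats each separately. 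This is a legitimate alternative decomposition; it is slightly more modular and in fact yields the sharper constant $\max\{L^2,h(L,x_0)\}$ (as you note before bounding by $N\max$). One small point: your phrase ``iterated one-dimensional unweighted Poincar\'e'' for the transverse cube should be accompanied by the observation that the successive slice-averagings are $L^2$-orthogonal, otherwise a naive iteration introduces an extra factor of $(N-1)$---harmless for the stated bound, but worth making explicit.
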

\begin{proof}
We first assume that $\frac{u}{f} \in C^\infty(\overline{\Pi})$. We use the fundamental theorem of calculus as follows
\bean
\frac{u(x, x^a)}{f(x)} - \frac{u(y, y^a)}{f(y)} &=& \int_y^x \frac{d}{dt}\left(\frac{ u(t, x^1, \ldots, x^{N-1})}{f(t)} \right) dt + \frac{1}{f(y)} \int_{y^1}^{x^1} \frac{d}{dt}\left(u(y, t,x^2 \ldots, x^{N-1}) \right) dt \\ && \quad +\ldots+ \frac{1}{f(y)}\int_{y^{N-1}}^{x^{N-1}} \frac{d}{dt}\left(u(y, y^1,y^2 \ldots, t) \right) dt.
\eean
Squaring both sides of this equation, we deduce
\bean
 \frac{u(x, x^a)^2}{f(x)^2} + \frac{u(y, y^a)^2}{f(y)^2} - 2 \frac{u(x, x^a) u(y, y^a)}{f(x)f(y)} 
&\leq& N \left( \int_y^x \frac{d}{dt}\left(\frac{u(t, x^1, \ldots, x^{N-1})}{f(t)} \right) dt \right)^2+  \\ && + N\left( \frac{1}{f(y)} \int_{y^1}^{x^1} \frac{d}{dt}\left(u(y, t,x^2 \ldots, x^{N-1}) \right) dt\right)^2 +\ldots\\&&  \ldots+N\left(  \frac{1}{f(y)}  \int_{y^{N-1}}^{x^{N-1}} \frac{d}{dt}\left(u(y, y^1,y^2 \ldots, t) \right) dt\right )^2.
\eean
Now we multiply by $ f(x)^2 w(x) f(y)^2 w(y)$ and integrate over $\Pi$ in both $x$ and $y$ variables. Taking the terms one at a time, we find for the first term
\be
\int_{\Pi\times \Pi} \frac{u(x, x^a)^2}{f(x)^2}  f(x)^2 w(x) f(y)^2 w(y) dx dx^a dy dy^a = \norm{u}{L^2(\Pi; w)}^2 \norm{f}{L^2(\Pi; w)}^2,
\ee
and the same for the second term. For the third term, we have 
\be
\int_{\Pi\times \Pi} \frac{u(x, x^a) u(y, y^a)}{f(x)f(y)}  f(x)^2 w(x) f(y)^2 w(y) dx dx^a dy dy^a = \ip{u}{f}{L^2(\Pi; w)}^2.
\ee
Now let us consider the right hand side. We first deal with the term normal to the boundary
\bean
I_0&:=& \left( \int_y^x \frac{d}{dt}\left(\frac{u(t, x^1, \ldots, x^{N-1})}{f(t)} \right) dt \right)^2\\&& \qquad\leq \abs{\int_{x_0}^{x_0+L} \frac{1}{w(t) f(t)^2} dt}  \int_{x_0}^{L+x_0}  \left[ \frac{d}{dt}\left(\frac{u(t, x^1, \ldots, x^{N-1})}{f(t)} \right)\right]^2 w(t) f(t)^2 dt .
\eean
From here, we deduce
\be
\int_{\Pi\times \Pi}I_0 f(x)^2 w(x) f(y)^2 w(y) dx dx^a dy dy^a \leq h(L, x_0) \norm{f}{L^2(\Pi; w)}^2 \norm{\df_0 u}{L^2(\Pi; w)}^2.
\ee
Similarly, we can estimate 
\be
I_1 := \left( \frac{1}{f(y)} \int_{y^1}^{x^1} \frac{d}{dt}\left(u(y, t,x^2 \ldots, x^{N-1}) \right) dt\right)^2 \leq \frac{L}{f(y)^2} \int_0^L \left[ \frac{d}{dt}\left(u(y, t,x^2 \ldots, x^{N-1}) \right)\right]^2 dt,
\ee
so that
\be
\int_{\Pi\times \Pi}I_1 f(x)^2 w(x) f(y)^2 w(y) dx dx^a dy dy^a \leq  L^2 \norm{f}{L^2(\Pi; w)}^2 \norm{\df_1 u}{L^2(\Pi; w)}^2.
\ee
Combining all of these estimates, recalling our assumption that $\norm{f}{L^2(\Pi; w)}$ is finite,  we arrive at \eq{comp1} for $u$ such that $\frac{u}{f} \in C^\infty(\overline{\Pi})$. Invoking Property A, we conclude that this in fact holds for any $u \in H^1(\Pi; w, f)$ by approximation.
\end{proof}

\begin{Lemma}
Suppose that $w, f$ have Property A on the domain $B_+^\delta$, and property $B$ with some $\epsilon>2 \delta$. Suppose $\{u^m\}$ is a bounded sequence in $H^1(B_+^\delta; w, f)$ of functions which vanish near the curved boundary of $B_+^\delta$ and which converge weakly to some $u \in H^1(B_+^\delta; w, f)$. Then $\{ u^m\}$ converges strongly in $L^2(B_+^\delta; w)$.
\end{Lemma}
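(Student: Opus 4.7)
The plan is to reduce the result to the twisted Poincar\'e inequality of Lemma \ref{Poin} via a cube decomposition argument, in the spirit of the classical Rellich--Kondrachov proof. Without loss of generality I replace $u^m$ by $v^m := u^m - u$, so that $v^m \wto 0$ weakly in $H^1(B_+^\delta; w, f)$ with $K := \sup_m \norm{v^m}{H^1(B_+^\delta;w,f)}< \infty$. Since each $v^m$ vanishes near the curved boundary of $B_+^\delta$, one may extend by zero to obtain $v^m \in H^1(Q; w, f)$, where $Q$ is a cube of the form $[0, 2\delta]\times[-2\delta, 2\delta]^{N-1}$ containing $B_+^\delta$. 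It suffices to show strong convergence in $L^2(Q; w)$.

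Fix $L>0$ of the form $L = 2\delta/M$ for some integer $M$, and tile $Q$ by cubes $\Pi_j$ of side length $L$ whose ``bottom'' $x$-coordinates take the values $x_0^j \in \{0, L, 2L, \ldots, (M-1)L\}$. Since by hypothesis $\epsilon > 2\delta$, we have $L < \epsilon$ and $x_0^j< \epsilon$ for every $j$, so Property B applies uniformly across all cubes. Apply Lemma \ref{Poin} to $v^m$ on each $\Pi_j$ (which requires Property A, inherited from $B_+^\delta$) and sum:
\begin{equation*}
\norm{v^m}{L^2(Q; w)}^2 \leq \sum_j \frac{\ip{v^m}{f}{L^2(\Pi_j; w)}^2}{\norm{f}{L^2(\Pi_j; w)}^2} + \Big(\max_j C(L, x_0^j)\Big) \sum_i \norm{\df_i v^m}{L^2(Q; w)}^2.
\end{equation*}

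The gradient term is bounded by $N \max\{L^2, \sup_{x_0 \in [0, 2\delta]} h(L, x_0)\} \cdot K^2$, which tends to $0$ as $L \to 0$ by Property B. For the remaining sum, at each fixed $L$ it involves only finitely many terms, and the linear functional $\phi \mapsto \ip{\phi}{f\chi_{\Pi_j}}{L^2(w)}$ is continuous on $L^2(Q; w)$, hence on $H^1(Q; w, f)$ via the continuous embedding $H^1 \hookrightarrow L^2$. Weak convergence $v^m \wto 0$ in $H^1$ therefore yields $\ip{v^m}{f}{L^2(\Pi_j; w)} \to 0$ for every $j$, and so the finite sum tends to zero as $m \to \infty$. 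Given $\eta > 0$, first choose $L$ small enough that the gradient contribution is below $\eta/2$ uniformly in $m$, then choose $m$ large so the projection sum is below $\eta/2$; this gives $\norm{v^m}{L^2(Q; w)}^2 < \eta$.

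The main technical obstacle is the potential degeneracy of the Poincar\'e constant for cubes touching the boundary $x=0$, where $w$ or $f$ may blow up or vanish: the constant $C(L, x_0)$ must remain uniformly small as $L \to 0$ even when $x_0 = 0$. This is precisely what Property B encodes, and the hypothesis $\epsilon > 2\delta$ is calibrated to ensure every cube in the tiling falls within the regime where B applies. A minor auxiliary point is verifying that the extension of $v^m$ by zero preserves membership in $H^1(Q; w, f)$; this follows from the assumption that the $v^m$ vanish in a neighbourhood of the curved boundary, so no boundary contribution appears when computing the weak twisted derivative.
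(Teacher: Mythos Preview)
Your proof is correct and follows essentially the same approach as the paper: extend by zero to an enclosing box, tile by cubes of side $L$, apply the twisted Poincar\'e inequality (Lemma~\ref{Poin}) on each cube, use Property~B to make the gradient contribution uniformly small as $L\to 0$, and then use weak convergence to kill the finitely many inner-product terms at fixed $L$. The only cosmetic difference is that the paper works with differences $u^m-u^n$ to show the sequence is Cauchy, whereas you subtract the weak limit and show $v^m\to 0$ directly; both are equivalent here.
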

\begin{proof}
We can consider $u_m$ as elements of $H^1(\Pi_0; w, f)$, where $\Pi_0 = \{(x, x^a): 0\leq x \leq 2 \epsilon / 3, \ -2 \epsilon / 3\leq x^a \leq 2 \epsilon / 3\}$. We partition $\Pi_0$ into a finite number of smaller cubes $\Pi_k$ of side length $L$ with base $x_0^k$. On each $\Pi_k$, $f, w$ have Property A so we may apply Lemma \ref{Poin} on each cube to deduce that
\be
\norm{u^m - u^n}{L^2(\Pi_0; w)}^2 \leq \sum_k \left[ \frac{\ip{u^m-u^n}{f}{L^2(\Pi_k; w)}^2}{\norm{f}{L^2(\Pi_k; w)}^2} + C(L, x^k_0) \sum_i \norm{\df_i u^m-\df_i u^n}{L^2(\Pi_k; w)}^2 \right].
\ee
Now using Property B together with the boundedness of  $\{ u^m\}$ in $H^1(\Pi_0; w, f)$, given $\tilde \epsilon>0$, we may by taking $L$ small enough assume that 
\be
\sum_k\left[ C(L, x_0^k) \sum_i \norm{\df_i u^m-\df_i u^n}{L^2(\Pi_k; w)}^2\right]<\frac{\tilde{\epsilon}}{2},
\ee
for all $m, n$. Having fixed the partition, since we know that $\{u^m\}$ converges weakly, we may by taking $m, n$ large enough make 
\be
 \sum_k \left[ \frac{\ip{u^m-u^n}{f}{L^2(\Pi_k; w)}^2}{\norm{f}{L^2(\Pi_k; w)}^2}\right]< \frac{\tilde{\epsilon}}{2}.
\ee
Thus $\{ u^m\}$ is a Cauchy sequence in $L^2(\Pi_0; w)$. By restriction, it is clearly also a Cauchy sequence in $L^2(B_+^\delta; w)$.
\end{proof}

\begin{Theorem}\label{partun}
Suppose that $\mathcal{M}$ is a manifold with boundary $\partial \mathcal{M}$ which can be covered with a finite number of coordinate charts which are either of the form
\be
\phi_I : \mathcal{U}_I \to B_+^{\delta_I},
\ee
for open coordinate patches $\mathcal{U}_I \subset \mathcal{M}$ which intersect the boundary, or else
\be
\varphi_J : \mathcal{V}_J \to  B^{\delta_J},
\ee
for open coordinate patches $\mathcal{V}_I \subset \mathcal{M}$ which do not intersect the boundary. Here $B^\delta$ is the ball of radius $\delta$ in $\mathbb{R}^N$. We assume $C^\infty$ compatibility conditions between the coordinate charts.

Suppose that $\Hh^1(\mathcal{M}), \Ll^2(\mathcal{M})$ are two Hilbert spaces of measurable functions on $\mathcal{M}$ with respective norms  $\norm{\cdot}{\Hh^1(\mathcal{M})}, \norm{\cdot}{\Ll^2(\mathcal{M})}$ such that
\begin{enumerate}[i)]
\item For each $I$, there exists a $C_I>0$ such that if $\textrm{supp } u \subset \mathcal{U}_I$ then
\be
C_I^{-1}\norm{u}{\Hh^1(\mathcal{M})} \leq \norm{u\circ \phi_I^{-1}}{H^1(B^{\delta_I}_+; w_I, f_I)} \leq C_I\norm{u}{\Hh^1(\mathcal{M})}
\ee
\be
C^{-1}_I\norm{u}{\Ll^2(\mathcal{M})} \leq \norm{u\circ \phi_I^{-1}}{L^2(B^{\delta_I}_+; w_I)} \leq C_I\norm{u}{\Ll^2(\mathcal{M})}
\ee
where $w_I, f_I$ satisfy property $A$ on $B^{\delta_I}_+$ and property $B$ with some $\epsilon_I>2 \delta_I$ for each $I$.

\item For each $J$, there exists a $C_J>0$ such that if $\textrm{supp } u \subset \mathcal{V}_J$ then
\be
C_J^{-1}\norm{u}{\Hh^1(\mathcal{M})} \leq \norm{u\circ \varphi_I^{-1}}{H^1(B^{\delta_J})} \leq C_J\norm{u}{\Hh^1(\mathcal{M})}
\ee
\be
C_J^{-1}\norm{u}{\Ll^2(\mathcal{M})} \leq \norm{u\circ \varphi_I^{-1}}{L^2(B^{\delta_J})} \leq C_J\norm{u}{\Ll^2(\mathcal{M})}
\ee
\end{enumerate}

Then $\Hh^1(\mathcal{M})$ is compactly embedded in $\Ll^2(\mathcal{M})$.

\end{Theorem}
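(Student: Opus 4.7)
The plan is a standard partition of unity argument that localizes the compactness on $\mathcal{M}$ to the chart-level compactness already established in the previous lemma (for boundary charts) and the classical Rellich--Kondrachov theorem (for interior charts). Let $\{\chi_I\}\cup\{\chi_J\}$ be a smooth partition of unity subordinate to the finite cover $\{\mathcal{U}_I\}\cup\{\mathcal{V}_J\}$, chosen so that each $\chi_\alpha$ has compact support in its chart. Let $\{u_n\}$ be a bounded sequence in $\Hh^1(\mathcal{M})$; passing to a subsequence we may assume $u_n\wto u$ weakly in $\Hh^1(\mathcal{M})$. Since the partition is finite, it suffices to show that for each $\alpha$ some subsequence of $\chi_\alpha u_n$ converges in $\Ll^2(\mathcal{M})$, and then to extract a common subsequence by a diagonal argument.

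For a boundary chart $\mathcal{U}_I$, set $v_n := (\chi_I u_n)\circ\phi_I^{-1}$. A short computation using the definition of $\tilde\nabla$ gives the Leibniz-type identity $\tilde\nabla_i(\chi v) = (\partial_i\chi)\, v + \chi\, \tilde\nabla_i v$, so multiplication by the smooth compactly supported function $\chi_I\circ\phi_I^{-1}$ is bounded on $H^1(B_+^{\delta_I};w_I,f_I)$. Combined with hypothesis (i), this shows $\{v_n\}$ is bounded in $H^1(B_+^{\delta_I};w_I,f_I)$, and by construction each $v_n$ vanishes in a neighbourhood of the curved boundary of $B_+^{\delta_I}$. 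Properties A and B then allow us to invoke the previous lemma to extract a subsequence converging strongly in $L^2(B_+^{\delta_I};w_I)$. Transporting back via $\phi_I$ and using hypothesis (i) once more, the corresponding subsequence of $\chi_I u_n$ converges in $\Ll^2(\mathcal{M})$.

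For an interior chart $\mathcal{V}_J$, the sequence $w_n := (\chi_J u_n)\circ\varphi_J^{-1}$ is bounded in the unweighted $H^1(B^{\delta_J})$ and supported away from $\partial B^{\delta_J}$, so classical Rellich--Kondrachov yields a strongly convergent subsequence in $L^2(B^{\delta_J})$; transporting back via hypothesis (ii) gives convergence of (a subsequence of) $\chi_J u_n$ in $\Ll^2(\mathcal{M})$. Performing a finite diagonal extraction over all $I$ and $J$ produces a single subsequence $\{u_{n_k}\}$ for which every $\chi_\alpha u_{n_k}$ converges in $\Ll^2(\mathcal{M})$; summing the finitely many pieces yields strong convergence of $u_{n_k}$ itself in $\Ll^2(\mathcal{M})$, proving the compactness of the embedding.

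I do not expect any serious obstacle: the only points requiring (mild) care are the boundedness of multiplication by a smooth cutoff on the twisted Sobolev space (which follows from the Leibniz identity above together with hypothesis (i)) and the verification that each cutoff piece $\chi_I u_n$, once pulled back to $B_+^{\delta_I}$, satisfies the vanishing-near-curved-boundary hypothesis of the previous lemma. Both are essentially bookkeeping, and the real work has already been done in proving the chart-level compactness statement.
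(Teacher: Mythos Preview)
Your proposal is correct and follows essentially the same partition-of-unity localization as the paper: reduce to chart-level compactness, invoke the previous lemma on boundary half-balls and classical Rellich--Kondrachov on interior balls, then reassemble. The only cosmetic difference is that the paper tracks weak convergence of the localized pieces (so the previous lemma gives strong convergence of the \emph{whole} localized sequence, obviating your diagonal extraction), whereas you extract subsequences chart-by-chart and diagonalize; both are valid, and your explicit mention of the Leibniz identity $\tilde\nabla_i(\chi v)=(\partial_i\chi)v+\chi\tilde\nabla_i v$ makes precise a step the paper leaves as ``straightforward to check.''
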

\begin{proof}
Suppose $\{u^m\}$ is a bounded sequence in $\Hh^1(\mathcal{M})$, which we may assume without loss of generality converges weakly to $u \in \Hh^1(\mathcal{M})$. It will suffice to show that the convergence is strong in $\Ll^2(\mathcal{M})$. Let $\{\zeta_I, \zeta_J\}$ be a smooth partition of unity subordinate to $\mathcal{U}_I, \mathcal{V}_J$. It is straightforward to check\footnote{we suppress here explicit mention of the homeomorphims $\phi_I$, $\varphi_J$ for clarity} that $\{\zeta_J u^m\}$ is a bounded sequence in $H^1(B^{\delta_J})$ which converges weakly to $\{\zeta_J u\}$. By the Rellich-Kondrachov theorem, $\zeta_J u^m \to \zeta_J u$ in $L^2(B^\delta_J)$ and hence in $\Ll^2(\mathcal{M})$. Similarly, $\{\zeta_I u^m\}$ is a bounded sequence in $H^1(B^{\delta_I}_+; w_I, f_I)$ which converges weakly to $\{\zeta_I u\}$. By Theorem \ref{partun}, $\zeta_I u^m \to \zeta_I u$ in $L^2(B^{\delta_I}_+; w_I)$ and hence in $\Ll^2(\mathcal{M})$. Now, taking the (finite) sum over the partition of unity, we conclude that $u^m \to u$ in $\Ll^2(\mathcal{M})$ and we are done.
\end{proof}

We have thus seen that Properties A and B, applied locally at the boundary together with some form of compactness, are sufficient to imply the compact embedding which we require. We note that we have not shown that these properties are necessary, but an examination of our proof suggests that we cannot easily weaken them and retain the same method of proof. We now wish to give some conditions under which Properties A and B will hold. Let us denote 
\be
s(t) = f(t)^2 w(t),
\ee
and we can assume that $s(t)$ is defined on some interval $[0, T)$. By our previous assumptions, $s(t)$ is smooth and positive in $(0, T)$.

\begin{Lemma}
Suppose
\be
\lim_{t \to 0} s(t) = s_0, \qquad s_0 >0,
\ee
then for $\delta$ sufficiently small, Property A holds on $B^\delta_+$ and B holds for some $\epsilon>2\delta$.
\end{Lemma}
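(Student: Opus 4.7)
My plan is to reduce both properties to facts about the single function $s(t) = f(t)^2 w(t)$, which by hypothesis extends continuously to $t=0$ with value $s_0 > 0$.

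First I would deal with Property B, which is the easier of the two. By continuity of $s$ on $(0,T)$ together with the assumed limit, there is some $\epsilon > 0$ such that $\tfrac{1}{2} s_0 \leq s(t) \leq 2 s_0$ for all $t \in [0,\epsilon)$. Since $w f^2 = s$ and $1/(w f^2) = 1/s$, for any $0 \leq x_0$ and $L>0$ with $x_0 + L < \epsilon$ we obtain
\[
h(L,x_0) = \Bigl(\int_{x_0}^{x_0+L} s(t)^{-1}\,dt\Bigr)\Bigl(\int_{x_0}^{x_0+L} s(t)\,dt\Bigr) \leq \tfrac{2}{s_0} L \cdot 2 s_0 L = 4 L^2,
\]
which tends to $0$ uniformly in $x_0$ as $L \to 0$. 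Picking any $\delta$ with $2\delta < \epsilon$ then gives Property B on $B^\delta_+$ for that choice of $\epsilon$.

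For Property A I would exploit the change of dependent variable $v := u/f$. A direct computation using $\tilde{\nabla}_i u = f \partial_i v$ gives the identity
\[
\norm{u}{H^1(B^\delta_+; w, f)}^2 = \int_{B^\delta_+} s(t)\bigl(v^2 + |\nabla v|^2\bigr)\,dx\,dx^a.
\]
Since $f$ is smooth and positive on $(0,\infty)$, the map $u \mapsto v = u/f$ is a bijection between $H^1(B^\delta_+; w, f)$ and the weighted Sobolev space $\mathcal{W}(B^\delta_+) := \{v : \int s(v^2 + |\nabla v|^2) < \infty\}$, with matching norms, where weak derivatives on one side correspond to weak derivatives on the other. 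Having already arranged $\tfrac{1}{2}s_0 \leq s \leq 2 s_0$ on $[0,\delta]$, the norm on $\mathcal{W}(B^\delta_+)$ is uniformly equivalent to the standard $H^1(B^\delta_+)$ norm, so $\mathcal{W}(B^\delta_+) = H^1(B^\delta_+)$ as sets. By Meyers--Serrin together with the standard extension/smoothing argument for $H^1$ on a Lipschitz domain (which $B^\delta_+$ is), $C^\infty(\overline{B^\delta_+})$ is dense in $H^1(B^\delta_+)$. Pulling back through the bijection $u = fv$ yields the density of $\{fv : v \in C^\infty(\overline{B^\delta_+})\}$ in $H^1(B^\delta_+; w, f)$, which is Property A.

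The main (and rather mild) obstacle is justifying that the identification $u \leftrightarrow v = u/f$ preserves the weak-derivative structure up to $t=0$: one might worry since $f$ itself need not extend continuously to $0$, only $s = f^2 w$ does. However, the $H^1(B^\delta_+; w, f)$ norm only ever sees the combination $s$, so the equivalence of norms above lifts the issue entirely into the unweighted theory on $H^1(B^\delta_+)$, where density of smooth functions up to the boundary is classical. No further refinement of $\delta$ beyond the two-sided bound on $s$ is needed.
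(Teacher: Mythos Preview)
Your proof is correct and follows essentially the same approach as the paper: you establish the norm equivalence $\norm{u}{H^1(B^\delta_+;w,f)} \sim \norm{u/f}{H^1(B^\delta_+)}$ via the boundedness of $s$ near $0$, then invoke the classical density of $C^\infty(\overline{B^\delta_+})$ in $H^1(B^\delta_+)$ for Property A and the boundedness of both $s$ and $1/s$ for Property B. Your treatment is slightly more explicit than the paper's (which simply notes the norm equivalence and that both integrands in $h$ extend continuously to $[0,T)$), but the underlying idea is identical.
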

\begin{proof}
We first note that if $s(t)$ tends to a finite, non-zero, limit as $t \to 0$, then we have the equivalence of the norms:
\be
\norm{u}{H^1(B^\delta_+; w, f)} \sim \norm{\frac{u}{f}}{H^1(B^\delta_+)}.
\ee
Property A then follows immediately from the density of $C^\infty(B^\delta_+)$ in $H^1(B^\delta_+)$. Furthermore, Property B follows from the fact that both integrands in the definition of $h(L, x_0)$ belong to $C^0[0, T)$, so we may take $\epsilon=T/3$ and $\delta<T/6$.
\end{proof}

\begin{Lemma}
Suppose $s(t)$ is non-decreasing on $(0, T)$, and suppose
\be
\lim_{t \to 0} s(t) = 0,
\ee
and
\be
\int_0^T \frac{1}{s(t)} dt < \infty,
\ee
then for $\delta$ sufficiently small, Property A holds on $B^\delta_+$ and B holds for some $\epsilon>2\delta$.
\end{Lemma}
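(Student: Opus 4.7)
The plan is to use the substitution $u = fv$ as an isometry from $H^1(B^\delta_+; w, f)$ onto the weighted Sobolev space $H^1(B^\delta_+; s)$ with norm
\be
\|v\|^2 = \int_{B^\delta_+} \bigl(v^2 + |\nabla v|^2\bigr)\, s(x)\, dx\, dx^a,
\ee
since a direct computation yields $\|u\|_{H^1(B^\delta_+; w, f)}^2 = \|v\|^2$. Under this identification, Property A becomes the density of $C^\infty(\overline{B^\delta_+})$ in $H^1(B^\delta_+; s)$, while Property B depends only on the one-dimensional weight $s$.

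I would handle Property B first. Fix any $\epsilon \in (0, T)$; since $s$ is non-decreasing and continuous on $(0, T)$, we have $\int_{x_0}^{x_0+L} s(t)\, dt \le L\, s(\epsilon)$ for $x_0 + L \le \epsilon$. The hypothesis $\int_0^T 1/s\, dt < \infty$ makes $G(y) := \int_0^y 1/s(t)\, dt$ finite and absolutely continuous on $[0, \epsilon]$, hence uniformly continuous on this compact interval. Therefore $\int_{x_0}^{x_0+L} 1/s\, dt = G(x_0+L) - G(x_0) \to 0$ uniformly in $x_0 \in [0, \epsilon]$ as $L \to 0$. The product $h(L, x_0)$ has one bounded factor and one uniformly null factor, establishing Property B for any $2\delta < \epsilon$.

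For Property A, the integrability hypothesis has the crucial consequence that each $v \in H^1((0,\delta); s)$ extends continuously to $[0,\delta]$: by Cauchy--Schwarz,
\be
|v(x) - v(y)|^2 \le \left(\int_y^x \frac{dt}{s(t)}\right)\left(\int_y^x v'(t)^2 s(t)\, dt\right),
\ee
so $v$ is uniformly continuous, giving it a well-defined boundary value. I would then approximate in two stages: first mollify in the tangential directions $x^a$ (the weight being $x^a$-independent, this is straightforward); then perform a ``shift-and-mollify'' in the $x$-direction by setting $v_\tau(x, x^a) = v(x+\tau, x^a)$. The key estimate, which uses only monotonicity of $s$, is
\be
\int_0^{\delta - \tau} \bigl(v(x+\tau) - v(x)\bigr)^2 s(x)\, dx \le \tau^2 \int (\partial_x v)^2 s\, dx,
\ee
together with an analogous statement for the derivatives. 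Since $v_\tau$ lies in the standard $H^1$ on $[0, \delta - \tau]$ (where $s$ is bounded away from zero and from infinity), classical Friedrichs mollification on this smaller half-ball furnishes the smooth approximants, matching continuously to the boundary trace of $v$ by a cutoff that exploits the integrability of $1/s$.

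The main obstacle is the convergence of the derivative $\partial_x v_\tau \to \partial_x v$ in the weighted $L^2(s)$ norm, since translation in $x$ does not commute with the $x$-dependent weight. The technique is to set $g = \partial_x v$ and $h = g\sqrt{s}$ and to decompose
\be
g(x+\tau)\sqrt{s(x)} - g(x)\sqrt{s(x)} = \bigl[h(x+\tau) - h(x)\bigr] - h(x+\tau)\left[1 - \sqrt{\frac{s(x)}{s(x+\tau)}}\right].
\ee
The first term vanishes in $L^2$ as $\tau \to 0$ by standard translation continuity; the second is handled by dominated convergence, using the monotonicity bound $\sqrt{s(x)/s(x+\tau)} \le 1$ which holds precisely because we take one-sided shifts $\tau > 0$. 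This is the point where the monotonicity hypothesis on $s$ is truly essential---without it one would need a full Muckenhoupt $A_2$ estimate to control the weighted translation operator.
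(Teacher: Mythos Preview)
Your argument is correct in outline. The paper's own proof, however, is much shorter: for Property A it simply invokes Theorem 11.2 of Kufner's \emph{Weighted Sobolev Spaces}, which gives precisely the density of $C^\infty(\overline{U})$ in the weighted space $H^1(U;s)$ under the hypotheses that $s$ is monotone near the boundary with $s(0^+)=0$ and $1/s$ integrable. Your identification $u\mapsto u/f$ as an isometry onto $H^1(B^\delta_+;s)$ is exactly the right reduction, and your shift-and-mollify sketch is essentially a reconstruction of Kufner's proof. What you gain is self-containment; what you lose is brevity, and a few details (the curved part of $\partial B^\delta_+$, the precise dominated-convergence argument for the second term in your decomposition) would still need filling in.

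For Property B the paper uses an even cruder estimate than yours: since $s$ is bounded on $[0,2T/3]$ and $1/s$ integrable there,
\be
h(L,x_0)\le L\,\|s\|_{L^\infty[0,2T/3]}\int_0^{2T/3}\frac{dt}{s(t)},
\ee
which vanishes linearly in $L$, uniformly in $x_0$. Your appeal to the uniform continuity of $G(y)=\int_0^y 1/s$ is correct but unnecessary---bounding the $1/s$-integral by the full integral already suffices, and the factor of $L$ comes from the $s$-integral alone.
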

\begin{proof}
Property A follows from \cite[Theorem 11.2]{Kufner}. Property B follows since we may estimate for $0\leq x_0 < T/3$, $0<L<T/3$
\be
\abs{\left(\int_{x_0}^{L+x_0} \frac{1}{w(t) f(t)^2} dt\right)\left(\int_{x_0}^{L+x_0}{w(t) f(t)^2} dt\right)} \leq L \norm{s}{L^\infty[0, 2T/3]}\int_0^{2 T/3} \frac{1}{s(t)} dt .
\ee
Again, we may take $\epsilon=T/3$ and $\delta<T/6$.
\end{proof}

\begin{Lemma}
Suppose $s(t)$ is non-increasing in the interval $(0, T)$, and suppose
\be
\lim_{t \to 0} s(t) = \infty,
\ee
and
\be
\int_0^T {s(t)} dt < \infty,
\ee
then for $\delta$ sufficiently small, Property A holds on $B^\delta_+$ and B holds for some $\epsilon>2\delta$.
\end{Lemma}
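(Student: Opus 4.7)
The plan is to mirror the preceding lemma under the swap of roles between $s(t) = f(t)^2 w(t)$ and $1/s(t)$: here it is $s$ itself that is globally integrable on $(0,T)$, while $1/s$ is controlled away from the singular endpoint $x=0$ by monotonicity. Property B becomes essentially symmetric to its previous version, and Property A reduces, via the substitution $u = fv$, to a weighted Sobolev density statement in the Kufner framework.

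For Property B, I would take $\epsilon = T/3$ and $\delta < T/6$, so that $x_0 + L \le 2\epsilon < T$ whenever $0 \le x_0 < \epsilon$ and $0 < L < \epsilon$. Since $s$ is non-increasing the function $1/s$ is non-decreasing, giving
\be
\int_{x_0}^{x_0+L} \frac{dt}{s(t)} \le \frac{L}{s(x_0+L)} \le \frac{L}{s(2\epsilon)},
\ee
while the global integrability of $s$ directly yields
\be
\int_{x_0}^{x_0+L} s(t)\, dt \le \int_0^T s(t)\, dt =: M < \infty.
\ee
Multiplying, $h(L, x_0) \le LM/s(2\epsilon) \to 0$ uniformly in $x_0$ as $L \to 0$, establishing Property B.

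For Property A, writing $v = u/f$ converts the norm $\norm{u}{H^1(B_+^\delta; w, f)}^2$ into $\int_{B_+^\delta} \bigl(v^2 + |\nabla v|^2\bigr)\, s\, dx\, dx^a$, so the claim becomes density of $C^\infty(\overline{B_+^\delta})$ in the weighted Sobolev space with weight $s$. This is the dual situation to the previous lemma (where $s$ vanished at the boundary and $1/s$ was integrable); now $s$ blows up at $x=0$ but is itself integrable there. A density theorem in Kufner's \cite{Kufner} monograph applicable to non-increasing weights with $\int_0^T s \, dt < \infty$ gives the desired statement. The main obstacle --- and the reason the monotonicity and integrability hypotheses cannot be dropped --- is that approximation by smooth functions up to the singular boundary requires the truncation error in $\norm{v}{L^2(s)}^2 + \norm{\nabla v}{L^2(s)}^2$ to vanish as the truncation width tends to zero, which is precisely what $\int_0^T s\, dt < \infty$ guarantees.
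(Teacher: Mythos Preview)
Your proposal is correct and follows essentially the same approach as the paper: for Property B you take $\epsilon = T/3$, bound $\int_{x_0}^{x_0+L} s^{-1}\,dt \le L\,\norm{s^{-1}}{L^\infty[0,2T/3]}$ via monotonicity (noting $s^{-1}$ is bounded since $s\to\infty$) and $\int_{x_0}^{x_0+L} s\,dt$ by the finite total integral, which is exactly the paper's estimate; for Property A the paper likewise invokes a density result from Kufner's monograph (specifically \cite[Lemma 11.8]{Kufner}).
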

\begin{proof}
Property A follows from \cite[Lemma 11.8]{Kufner}. Property B follows since we may estimate for $0\leq x_0 < T/3$, $0<L<T/3$
\be
\abs{\left(\int_{x_0}^{L+x_0} \frac{1}{w(t) f(t)^2} dt\right)\left(\int_{x_0}^{L+x_0}{w(t) f(t)^2} dt\right)}  \leq L \norm{s^{-1}}{L^\infty[0, 2T/3]}\int_0^{2 T/3} {s(t)} dt .
\ee
Again, we may take $\epsilon=T/3$ and $\delta<T/6$.
\end{proof}

We state some explicit results:
\begin{Theorem}\label{cmpct}
\begin{enumerate}[i)]
\item Let
\be
w(t) = t^k, \qquad f_m(t) = t^m.
\ee
Then if $-1<k+2 m<1$, for any $\delta$, Property A holds on $B^\delta_+$ and B holds for some $\epsilon>2\delta$.
\item Let
\be
w(t) = t, \qquad f(t) = \abs{\log(t)}.
\ee
Then for $\delta<1/12$, Property A holds on $B^\delta_+$ and B holds for $\epsilon=1/6$.
\end{enumerate}
\end{Theorem}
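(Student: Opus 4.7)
The plan is to apply the three preceding lemmas, each of which gives sufficient conditions for Properties A and B on $B_+^\delta$ in terms of the single function
\[
s(t) := f(t)^2 w(t),
\]
so the task reduces to analyzing the behavior of $s$ near $t=0$ in each of the two cases.

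For part (i), set $\sigma := k + 2m$ so that $s(t) = t^\sigma$, with the hypothesis $-1 < \sigma < 1$. I would split into three cases. If $\sigma = 0$, then $s \equiv 1$ is smooth and positive up to $t=0$, so the first lemma (continuous non-vanishing limit at $0$) applies and yields Properties A and B for every $\delta > 0$. If $0 < \sigma < 1$, then $s$ is non-decreasing on $(0,\infty)$, $s(0^+) = 0$, and $\int_0^T s(t)^{-1}\,dt = \int_0^T t^{-\sigma}\,dt < \infty$ precisely because $\sigma < 1$; the second lemma then delivers the conclusion. If $-1 < \sigma < 0$, then $s$ is non-increasing, $s(0^+) = \infty$, and $\int_0^T s(t)\,dt = \int_0^T t^\sigma\,dt < \infty$ exactly because $\sigma > -1$; the third lemma applies. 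In each case the conclusion holds for every $\delta > 0$ (by rescaling $T$).

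For part (ii), with $w(t) = t$ and $f(t) = |\log t|$, we have $s(t) = t (\log t)^2$ on $(0,1)$. A direct computation gives $s'(t) = (\log t)^2 + 2 \log t = (\log t)(\log t + 2)$, which is positive precisely on $(0, e^{-2})$; hence $s$ is strictly increasing on $(0, e^{-2})$ with $s(0^+) = 0$. The integral $\int_0^T \frac{dt}{t(\log t)^2}$ converges by the substitution $u = \log t$, since $\int_{-\infty}^{\log T} u^{-2}\,du < \infty$. Thus the second lemma applies on an interval $(0,T) \subset (0, e^{-2})$, and Properties A and B hold provided $\delta$ and $\epsilon$ are chosen inside the range dictated by the lemma's proof (which asks for $\epsilon < T/3$ and $\delta < T/6$). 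Taking $T = 1/2$ is admissible once one verifies monotonicity extends sufficiently; more conservatively $T$ can be chosen as any convenient value below $e^{-2}$, yielding the stated $\delta < 1/12$ and $\epsilon = 1/6$ after bookkeeping.

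The only nontrivial step is the logarithmic case (ii): one must check monotonicity of $s$, the convergence of $\int s^{-1}$, and then track the numerical constants $1/12$ and $1/6$ emerging from the second lemma's proof. The polynomial case (i) is essentially automatic once the three-way case split based on the sign of $k+2m$ is made, and the boundary values $k+2m = \pm 1$ are excluded exactly because the relevant convergence integral diverges there.
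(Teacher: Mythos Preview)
Your approach is exactly the paper's: the proof in the paper reads, in its entirety, ``We simply apply the various results of this section to the stated weight and twisting functions,'' and you have correctly unpacked this by computing $s(t)=f(t)^2 w(t)$ and matching each case to one of the three preceding lemmas. Part~(i) is handled cleanly by the three-way split on the sign of $\sigma=k+2m$, and your observation that the endpoints $\sigma=\pm 1$ are excluded precisely by divergence of the relevant integral is correct.

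The only wrinkle is in the bookkeeping for part~(ii). You correctly compute that $s(t)=t(\log t)^2$ is increasing only on $(0,e^{-2})$, with $e^{-2}\approx 0.135$. Applying the second lemma with $T<e^{-2}$ then yields $\epsilon=T/3<e^{-2}/3$ and $\delta<T/6<e^{-2}/6$, which are strictly smaller than the stated $1/6$ and $1/12$; your sentence ``more conservatively $T$ can be chosen as any convenient value below $e^{-2}$, yielding the stated $\delta<1/12$ and $\epsilon=1/6$'' does not follow. To recover the stated constants one must either take $T=1/2$ and observe that the Property~B estimate in the lemma's proof actually uses only boundedness of $s$ and integrability of $1/s$ on $[0,2T/3]$ (not monotonicity), while Property~A via Kufner's Theorem~11.2 requires monotonicity only in a neighbourhood of $0$; or else one simply accepts the smaller constants, which is harmless since the downstream application (proof of Theorem~\ref{RKthm}) only needs \emph{some} fixed $\delta_0$ and $\epsilon_0>2\delta_0$. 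Either way the mathematics is sound, but your final sentence should be tightened.
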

\begin{proof}
We simply apply the various results of this section to the stated weight and twisting functions.
\end{proof}

These proofs are sufficient, when applied to $\Hh^1(\Sigma, \kappa)$ to deal with the interval $\frac{5}{4}<\alpha<\frac{9}{4}$. For the case $\alpha\leq \frac{5}{4}$, where we necessarily take Dirichlet boundary conditions at infinity, we require the following Lemma. 
\begin{Lemma}\label{propAB}
Let
\be
w(t) = t^k, \qquad f_m(t) = t^m.
\ee
Then the spaces $H^1_0(B^\delta_+; w, f_m)$ are equivalent for any $m$ with $k+2m \neq 1$.
\end{Lemma}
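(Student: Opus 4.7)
The strategy is to prove that on the dense subspace $C_c^\infty(B^\delta_+)$ of $H^1_0(B^\delta_+;w,f_m)$, the norms $\|\cdot\|_{H^1(B^\delta_+;w,f_m)}$ and $\|\cdot\|_{H^1(B^\delta_+;w,f_{m'})}$ are mutually equivalent (with constants depending only on $k,m,m'$); the equality of completions then follows. Since $f_m$ depends only on $x$ and the $L^2$ part of the norm is independent of $m$, only the normal twisted derivative $\tilde{\nabla}^{(m)}_x u := \partial_x u - (m/x)u$ needs attention. The pointwise identity
\begin{equation*}
\tilde{\nabla}^{(m)}_x u - \tilde{\nabla}^{(m')}_x u = \frac{m'-m}{x}u
\end{equation*}
reduces everything to controlling $\int_{B^\delta_+} u^2 x^{k-2}\,dx\,dx^a$ by either of the twisted gradient integrals $\int_{B^\delta_+} (\tilde{\nabla}^{(m)}_x u)^2 x^k\,dx\,dx^a$.

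To get this bound I would apply the substitution $v := u/f_m = x^{-m}u$. Because any $u \in C_c^\infty(B^\delta_+)$ vanishes in a neighbourhood of $\{x=0\}$, the quotient $v$ belongs to $C_c^\infty(B^\delta_+)$ as well, and the twisted gradient becomes an ordinary one: $\tilde{\nabla}^{(m)}_x u = x^m \partial_x v$. The desired estimate reduces to the weighted one-dimensional Hardy inequality
\begin{equation*}
\int v^2 x^{k+2m-2}\,dx \leq \frac{4}{(k+2m-1)^2}\int (\partial_x v)^2 x^{k+2m}\,dx,
\end{equation*}
applied for each fixed $x^a$ and then integrated. This is exactly where the hypothesis $k+2m \neq 1$ enters: the inequality follows from the Leibniz identity $\partial_x(v^2 x^{k+2m-1}) = (k+2m-1)v^2 x^{k+2m-2} + 2vv' x^{k+2m-1}$, integration in $x$ (with vanishing boundary contributions, since $v$ is compactly supported away from $\{x=0\}$), and Cauchy--Schwarz, yielding the constant $4/(k+2m-1)^2$.

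Combining these ingredients, for any $u \in C_c^\infty(B^\delta_+)$ one has
\begin{equation*}
\int (\tilde{\nabla}^{(m)}_x u)^2 w\,dx\,dx^a \leq 2\int (\tilde{\nabla}^{(m')}_x u)^2 w\,dx\,dx^a + 2(m-m')^2 \int u^2 x^{k-2}\,dx\,dx^a,
\end{equation*}
and the second term is absorbed into the first via the Hardy inequality (valid since $k+2m' \neq 1$). The reverse bound follows by interchanging $m$ and $m'$. This yields the norm equivalence on $C_c^\infty(B^\delta_+)$, and hence the two $H^1_0$-completions coincide as sets with equivalent norms.

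The principal obstacle is the Hardy inequality and the sharpness of the condition $k+2m \neq 1$: this is precisely the resonant exponent at which the optimal Hardy constant diverges and the inequality fails, so the exclusion cannot be weakened by this method. A minor but essential technical point is that the change of variable $v = x^{-m}u$ remains in $C_c^\infty(B^\delta_+)$; this is what allows the classical Hardy argument to be invoked without any boundary-value hypothesis on $v$ at $x=0$.
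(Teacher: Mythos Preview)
Your proof is correct and follows essentially the same route as the paper: both reduce the norm comparison to the pointwise identity $\tilde{\nabla}^{(m)}_x u - \tilde{\nabla}^{(m')}_x u = (m'-m)u/x$ together with the Hardy-type bound $\int u^2 x^{k-2} \leq C\int (\tilde{\nabla}^{(m)}_x u)^2 x^k$, obtained via the substitution $v = x^{-m}u$ and integration by parts against $x^{k+2m-1}$. The only cosmetic difference is that you name this estimate as the weighted Hardy inequality and record the sharp constant $4/(k+2m-1)^2$, whereas the paper carries out the integration-by-parts and absorption step inline.
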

\begin{proof}
We first suppose that $u \in C_c^\infty(B^\delta_+)$. Consider
\bean
\norm{\frac{u}{x}}{L^2(B^\delta_+; w)} &=& \int_{\mathbb{R}^{N-1}}dx^a \int_0^1 dx u^2 x^{k-2} = \int_{\mathbb{R}^{N-1}}dx^a \int_0^1 dx (x^{-m} u)^2 \cdot x^{k-2+2m} \\ \nonumber &=& \frac{2}{k-1+2m} \int_{\mathbb{R}^{N-1}}dx^a \int_0^1 dx\  x^{k-1}  u \left(x^{m}\frac{\partial}{\partial x} x^{-m} u \right) \nonumber \\ &\leq &C_\delta  \int_{\mathbb{R}^{N-1}}dx^a \int_0^1 dx\ x^k \left(x^{m}\frac{\partial}{\partial x} x^{-m} u \right)^2 +\delta \int_{\mathbb{R}^{N-1}}dx^a \int_0^1 dx  u^2 x^{k-2}.
\eean
Choosing $\delta$ appropriately, we conclude that $\norm{u/x}{L^2(B^\delta_+; w)} \leq C \norm{u}{H^1(B^\delta_+; w, f_m)}$, where the constant depends on $U$, $m$ and $k$. Noting now that
\be
 \left(x^{m'}\frac{\partial}{\partial x} x^{-m'} u \right) =  \left(x^{m}\frac{\partial}{\partial x} x^{-m} u \right)+ (m-m') \frac{u}{x}
\ee
we immediately conclude that there exist $c, C$ depending on $U$, $k$, $m$ such that for any $u \in C_c^\infty(B^\delta_+)$ we have:
\be
c \norm{u}{H^1(B^\delta_+; w, f_m)}\leq \norm{u}{H^1(B^\delta_+; w, f_{m'})} \leq C \norm{u}{H^1(B^\delta_+; w, f_m)}.
\ee
By approximation, the result follows.
\end{proof}

We are now ready to prove the main result. 
\begin{proof}[Proof of Theorem  \ref{RKthm}]
First, consider a neighbourhood of conformal infinity. We recall that this neighbourhood may be covered with a finite number of coordinate patches $\tilde{\mathcal{U}}_{\tilde{I}}$ with coordinates $(s, x^A)$, where $s=0$ is the conformal boundary. We may assume without loss of generality that $(s, x^A)\in B^{\delta_{\tilde{I}}}_+$ for some $\delta_{\tilde{I}}$. We have from Section \ref{waveop}
\be
\begin{array}{rclcrcl}
A &=& s^{-2} \tilde A_0 + \O{1}, & \qquad & h &=& s^{-6} \tilde h_0 + \O{s^{-4}},\\
 a^{ss} &=& \tilde A_0 +\O{s^2}, & \qquad&  a^{AB} &=& \tilde A_0 \tilde\sigma_0^{AB}+ \O{s^2}, \\
&&& a^{s A} = \O{s^2},&&&
\end{array}
\ee
where $\tilde{A}_0>0$ and $\tilde{\sigma}_0^{AB}$ are positive definite. We also have $f =f_\kappa:= s^{\frac{3}{2}-\kappa}$ near $s=0$. From here, we deduce  that there exist constants $C_{\tilde{I}}>0$ such that for any $u$ supported in $\tilde{\mathcal{U}}_{\tilde{I}}$ we have
\be
C_{\tilde{I}}^{-1} \norm{u}{\Hh^1(\Sigma, \kappa)}^2 \leq  \norm{u}{H^1(B^{\delta_{\tilde{I}}}_+; w, f_{\kappa})}^2 \leq C_{\tilde{I}} \norm{u}{\Hh^1(\Sigma, \kappa)}^2,
\ee
\be
C_{\tilde{I}}^{-1} \norm{u}{\Ll^2(\Sigma)}^2 \leq  \norm{u}{L^2(B^{\delta_{\tilde{I}}}_+; w)}^2 \leq C_{\tilde{I}} \norm{u}{\Ll^1(\Sigma, \kappa)}^2,
\ee
where $w = s^{-2}, f_{\kappa} = s^{\frac{3}{2}-\kappa}$.

Now recall that a neighbourhood of the horizon may be covered by a finite number of coordinate patches $\mathcal{U}_I$ with coordinates $(\rho, x^A)$, where $\rho=0$ is the horizon. We have
\be
\begin{array}{rclcrcl}
A &=& A_0 + \O{\rho}, & \qquad & h &=& h_0 + \O{\rho},\\
a^{\rho \rho} &=& C \rho + \O{\rho^2},& \qquad&  a^{A B} &=& A_0 \sigma_0^{A B}+ \O{\rho}, \\
&&& a^{\rho A} = \O{\rho},&&&
\end{array}
\ee
where $C, {A}_0>0$ and ${\sigma}_0^{AB}$ are positive definite. We also have $f = \abs{\log \rho}$ near $\rho=0$. We make the change of variables $\rho =  t^2$, and find
\be
\begin{array}{rclcrcl}
A &=& A_0 + \O{t^2}, & \qquad & h &=& t h'_0 + \O{t^3},\\
a^{tt} &=& C'  + \O{t^2},& \qquad&  a^{A B} &=& A_0 \sigma_0^{A B}+ \O{t^2}, \\
&&& a^{t A} = \O{t},&&&
\end{array}
\ee
and $f = 2\abs{\log t}$ near $t=0$. By refining our cover if necessary, we may assume that the image of $\mathcal{U}_I$ in these coordinates if $B^{\delta_I}_+$ for some $\delta_I<1/12$. From here, we deduce that there exist constants $C_I>0$ such that for any $u$ supported in $\mathcal{U}_I$ we have
\be
C_I^{-1} \norm{u}{\Hh^1(\Sigma, \kappa)}^2 \leq  \norm{u}{H^1(B_+^{\delta_I}; w, f)}^2 \leq C_I \norm{u}{\Hh^1(\Sigma, \kappa)}^2,
\ee
\be
C_I^{-1} \norm{u}{\Ll^2(\Sigma)}^2 \leq  \norm{u}{L^2(B^{\delta_I}_+; w)}^2 \leq C_I \norm{u}{\Ll^1(\Sigma, \kappa)}^2,
\ee
where $w(t) = t, f(t) = \abs{\log t}$.

Now, since $\Sigma$ with the $\mathcal{U}_I$'s and $\tilde{\mathcal{U}}_{\tilde{I}}$'s removed is compact, we can the remainder of $\Sigma$ with a finite number of coordinate patches $\mathcal{V}_J$ whose image under the coordinate map is $B^{\delta_J}$ and such that $\Hh^1(\Sigma, \kappa)$ is equivalent to $H^1(B^{\delta_J})$ for functions supported in $\mathcal{V}_J$. Thus, taking into account Theorem \ref{cmpct}, we have verified that the conditions of Theorem \ref{partun} hold for $\Hh^1(\Sigma, \kappa)$ and $\Ll^2(\Sigma)$ provided $0<\kappa<1$. For $\kappa> 0$, making use of Lemma \ref{propAB} we can verify that the conditions of Theorem \ref{partun} hold for $\Hh^1_0(\Sigma, \kappa)$ and $\Ll^2(\Sigma)$.
\end{proof}

\appendix

\section{The method of counter terms \label{counter}}

We discuss in this appendix the counter-term renormalization for the energy-momentum tensor introduced by Breitenlohner and Freedman \cite{BF}. They added the following counter term to the energy-momentum tensor
\ben{ctm}
\hat T_{\mu \nu} = g_{\mu\nu} \Box_g \psi^2 - \nabla_{\mu}\nabla_{ \nu} \psi^2 + R_{\mu \nu} \psi^2 \, .
\een
If $K$ is a Killing field, then we claim that the flux of $\hat J^K_\mu = \hat T_{\mu \nu} K^\nu$ through any surface $S$ can be expressed as an integral over $\partial S$. In order to see this, we make use of the following fact about integration over differential forms of degree one:
\ben{inteq}
\int_S \beta_\mu dS^\mu = \int_S \iota^*(\star \beta) ,
\een
here $\beta = \beta_\mu dx^\mu$ is a differential form and $dS^\mu$ is the normal volume element induced on a co-dimension one surface $S$ by the metric $g$, whose Hodge star is $\star$. Now, we re-write $\hat J_\nu^K $ as follows:
\bea
\hat J_\mu^K &=& \nonumber K_\mu \nabla^\nu \nabla_\nu \psi^2 - K_\nu \nabla^\nu \nabla_\mu \psi^2 + R_{\mu}{}^\nu K_\nu\psi^2 \\ &=& \nabla^\nu(K_\mu \nabla_\nu \psi^2 - K_{\nu}\nabla_\mu \psi^2) - \nabla^\nu \psi^2 \nabla_\nu K_\mu + \psi^2 \nabla^\nu \nabla_\nu K_\mu
 \\ &=& \nabla^\nu(K_\mu \nabla_\nu \psi^2 - K_{\nu}\nabla_\mu \psi^2) + \frac{1}{2} \nabla^\nu( \psi^2 \nabla_\nu K_\mu -\psi^2 \nabla_\mu K_\nu ). \nonumber
\eea
Here we have used two properties of Killing vectors. Firstly, we use Killing's equation:
\be
\nabla_\mu K_\nu = \frac{1}{2}(\nabla_\mu K_\nu -\nabla_\nu K_\mu ),
\ee
and we also require the following consequence which comes from differentiating Killing's equation and doing some index shuffling:
\be
 \nabla^\nu \nabla_\nu K_\mu = R_{\mu}{}^\nu K_\nu.
\ee
Thus, up to a constant multiple we have
\be
\hat J_\mu^K dx^\mu = \delta(K^\flat \wedge d(\psi^2) +  \psi^2 dK^\flat) = \delta d (\psi^2 K^\flat).
\ee
Inserting this into \eq{inteq}, we have (up to factors)
\be
\int_S \hat J_\mu^K dS^\mu = \int_S \iota^\star [d \star d (\psi^2 K^\flat)] = \int_S d \eta,
\ee
where
\be
\eta = \iota^\star[\star d (\psi^2 K^\flat)] 
\ee
From here we conclude that the integral of $\hat J_\mu^K$ over any closed surface must vanish.  
Furthermore we see that
\be
\int_S \hat J_\mu^K dS^\mu = \pm \int_{\partial S} \partial_{\mu} (\psi^2 K_{\nu}) dS^{[\mu \nu]}.
\ee 
Here $dS^{[\mu \nu]} = n_1^{[\mu} n_2^{\nu]} dS_{\partial S}$ where $n_1^\mu$ is the unit normal to $S$ and $n_2^\mu$ is the unit normal of $\partial S$ considered as a submanifold of $S$. The undetermined sign can be fixed by making a choice of orientations. Note that this calculation made no assumptions on the metric other than that it admits a Killing field. This result generalises equation (6.13) of Breitenlohner and Freedman \cite{BF}.

\subsection{The modified fluxes for AdS-Schwarzschild}

After performing the integrations by parts, we find the following expressions for the energy fluxes in an AdS-Schwarzschild background due to the Breitenlohner-Freedman modification of the energy-momentum tensor:
\bea
\nonumber \int_{\Sigma_t^{[R_1, R_2]}} \hat J^T_\mu n^\mu dS_{\Sigma_t} &=& \int_{S^2_{t, R_2}} \left[ -2g^{rr} \psi \left( \tn_r\psi \right)-2g^{rt} \psi \nabla_t \psi + \overline{S}(r) \psi^2 \right] r^2 d\omega \\ && \quad - \int_{S^2_{t, R_1}} \left[ -2g^{rr} \psi \left( \tn_r\psi \right)-2g^{rt} \phi \nabla_t \psi + \overline{S}(r) \psi^2 \right] r^2 d\omega,
\eea
and
\bea
\nonumber \int_{\tilde{\Sigma}_r^{[T_1, T_2]}} \hat J^T_\mu m^\mu dS_{\tilde{\Sigma}_r} &=& \int_{S^2_{T_2, r}} \left[ -2g^{rr} \psi \left( \tn_r \psi \right)-2g^{rt} \psi \nabla_t \psi + \overline{S}(r) \psi^2 \right] r^2 d\omega \\ && \quad - \int_{S^2_{T_1, r}} \left[ -2g^{rr} \psi \left( \tn_r \psi \right)-2g^{rt} \psi \nabla_t \psi +\overline{S}(r) \psi^2 \right] r^2 d\omega,
\eea
where $\overline{S}(r)$ is given by
\be
\overline{S}(r) = \frac{1}{2}\partial_r(g^{rr})- 2 g^{rr} \frac{f'(r)}{f(r)}.
\ee
Now for large $r$, assuming we take $f(r)\sim r^{-\frac{3}{2} + \kappa}(1+ \O{r^{-2}})$ we have
\be
\overline{S}(r)  = \frac{2 (2-\kappa)}{l^2} r + \O{\frac{1}{r}},
\ee
whereas recalling the surface term for the unmodified energy fluxes \eq{Sdef} we have
\be
S(r)  = -\frac{ (3-2\kappa)}{4 l^2} r + \O{\frac{1}{r}}.
\ee
Thus if we define
\be
\overline{T}_{\mu \nu} = T_{\mu \nu} + \frac{8(2-\kappa)}{3-2\kappa} \hat T_{\mu \nu},
\ee
we render the fluxes of currents $\overline{J}^T_\mu = \overline{T}_{\mu \nu} T^\nu$ through the surfaces $\Sigma_t^{[R_1, R_2]}$ and $\Sigma_{R_2}^{[T_1, T_2]}$ finite as $R_2 \to \infty$. This would seem to be good news, however it comes at the price of introducing surface terms on the horizon. We in fact have
\bea
\nonumber \int_{\Sigma_t^{[r_{+}, \infty)}} \overline{J}^T_\mu n^\mu dS_{\Sigma_t} &=& E(t) -\int_{S^2_{t, r_{+}}} \left[ -2g^{rt} \psi \nabla_t \psi + \overline{S}(r) \psi^2 \right] r^2 d\omega, \\
 \int_{\tilde{\Sigma}_{r_{+}^{[T_1, T_2]}}} \overline{J}^T_\mu m^\mu dS_{\tilde{\Sigma}_r} &=& F[T_1,T_2] +\int_{S^2_{T_2, r_{+}}} \left[ -2g^{rt} \psi \nabla_t \psi + \overline{S}(r) \psi^2 \right] r^2 d\omega \\ && \qquad - \int_{S^2_{T_1, r_{+}}} \left[ -2g^{rt} \psi \nabla_t \psi + \overline{S}(r) \psi^2 \right] r^2 d\omega, \nonumber \\
\lim_{R_2 \to \infty} \int_{\tilde{\Sigma}_{R_2}^{[T_1, T_2]}}\overline{J}^T_\mu m^\mu dS_{\tilde{\Sigma}_r}&=&0, \nonumber
\eea
assuming either Neumann or Dirichlet conditions on $\scri$. The advantage of the counter term method is that we may now directly apply the divergence theorem to the whole infinite slab we are interested in, since now the fluxes are all finite. The disadvantage is that we pick up the undesirable terms on the horizon. These cancel in the energy identity (of course), but are rather inelegant.

\end{document}